\numberwithin{equation}{section}
 \definecolor{brickred}{rgb}{0.8, 0.25, 0.33}
\definecolor{blue(ryb)}{rgb}{0.01, 0.28, 1.0}
\definecolor{brandeisblue}{rgb}{0.0, 0.44, 1.0}
\definecolor{ceruleanblue}{rgb}{0.16, 0.32, 0.75}
\definecolor{cobalt}{rgb}{0.0, 0.28, 0.67}
\definecolor{coolblack}{rgb}{0.0, 0.18, 0.39}
\definecolor{darkblue}{rgb}{0.0, 0.0, 0.55}
\newtheorem{theorem}{Theorem}[section]
\newtheorem{lemma}[theorem]{Lemma}
\newtheorem{proposition}[theorem]{Proposition}
\newtheorem{corollary}[theorem]{Corollary}
\theoremstyle{remark}
\newtheorem{remark}[theorem]{Remark}
\newtheorem{assumption}[theorem]{Assumption}
\newtheorem{example}{\it Example\/}
\renewcommand{\Re}{\mathsf{Re}}
\newcommand{\supp}{\mathsf{supp}\,}
\newcommand{\spe}{\mathsf{sp}_{\mathsf{ess}}}
\newcommand{\N}{\mathbb{N}}
\newcommand{\Z}{\mathbb{Z}}
\newcommand{\R}{\mathbb{R}}
\newcommand{\C}{\mathbb{C}}
\newcommand{\cB}{\mathcal B}
\DeclareMathOperator{\curl}{curl}
\newcommand{\half}{\frac{1}{2}}
\title[On the magnetic Dirichlet-to-Neumann eigenvalues]{Asymptotics for the magnetic  Dirichlet-to-Neumann eigenvalues in general domains}
\author{Bernard Helffer}
\address[B. Helffer]{Laboratoire de Math\'ematiques Jean Leray, CNRS, Nantes Universit\'e,
	44000 Nantes, France.}
\email{Bernard.Helffer@univ-nantes.fr}
\author{Ayman Kachmar}
\address[A. Kachmar]{School of Science and Engineering, The Chinese University of Hong Kong Shenzhen, Guangdong, 518172, P.R. China.}
\email{akachmar@cuhk.edu.cn}
\author{Fran\c cois Nicoleau}
\address[F. Nicoleau]{Laboratoire de Math\'ematiques Jean Leray, CNRS, Nantes Universit\'e,
	44000 Nantes, France.}
\email{francois.nicoleau@univ-nantes.fr}
\date{\today}
\begin{document}
\maketitle
\begin{abstract} 

Inspired by a paper by T. Chakradhar, K. Gittins, G. Habib and N. Peyerimhoff,  we analyze their conjecture that the ground state energy of the magnetic Dirichlet-to-Neumann operator tends to infinity as the magnetic field tends to infinity.   More precisely,  we prove refined conjectures 
 for general two dimensional domains,  based on the 
analysis in the case  of the half-plane and  the disk  by two of us  (B.H.  and F.N.).
We also extend our analysis to  the three dimensional case,  and explore a connection with the eigenvalue asymptotics of the magnetic Robin Laplacian.
 \end{abstract}

 
 \vspace{0.5cm}
 
 \noindent \textit{Keywords}: Magnetic  Dirichlet to Neumann operator, Dirichlet and Robin eigenvalues, Eigenvalue asymptotics.

 
 \noindent \textit{2020 Mathematics Subject Classification}. Primary 58J50, Secondary 35P20.

\section{Introduction}

\subsection{The magnetic Dirichlet to Neumann operator.}

This paper is a continuation of \cite{HeNi1} where the ground state energy of the Dirichlet to Neumann operator in the case with a constant magnetic field in the unit disk $D(0,1) \subset \R^2$ was studied.
Here, we extend this problem in the case of general bounded  domains in $\R^n$ ($n=2,3$).

\vspace{0.1cm}\noindent

Let  $\Omega$ be a  bounded connected subset of $\R^n$,  with smooth boundary $\partial\Omega$  consisting of a finite number of connected components; in short we say that  $\Omega$ is a regular domain of $\R^n$.   For any $ u \in \mathcal D'(\Omega)$, the magnetic Schr\"odinger operator  on $\Omega$ is defined as 
\begin{equation}\label{defMagOp}
	H_{A}\  u = (D-A)^2 u = -\Delta u -2i \   A \cdot \nabla u  + (|A|^2 -i \  {\rm{div}} \ A ) u,
\end{equation}
where   $D= -i \nabla$,  $-\Delta$ is the usual positive Laplace operator on $\R^n$ and ${\displaystyle{A= \sum_{j=1}^n A_j dx_j}}$ is the 1-form magnetic potential. We often identify the $1$-form magnetic potential $A$ with the vector field 
$\overrightarrow{A} = (A_1, ..., A_n)$, and we assume that $\vec{A} \in C^{\infty}(\overline{\Omega}; \R^n)$. The magnetic field is given by the $2$-form $B =dA$.  We will also use the notation $\vec{H} =\curl \vec{A}$.

\vspace{0.1cm}\noindent
Since zero does not belong to the spectrum of the Dirichlet realization  of  $H_{A}$,  the boundary value  problem 
\begin{equation}  \label{Dirichlet}
	\left\{
	\begin{array}{rll}
		H_{A} \  u &=&0  \  \ \rm{in}  \ \ \Omega,\\
		u_{  \vert \partial \Omega} & =& f \in H^{1/2}(\partial\Omega) ,
	\end{array}\right.
\end{equation}
has a unique solution $u \in H^1(\Omega)$, that we call the {\it{magnetic harmonic extension}} of $f$. The  Dirichlet to Neumann map, (in what follows D-to-N map), is defined  by
 \begin{equation} \label{D-to-N--map}
	\begin{array}{rll}
		\Lambda_{A} :   H^{1/2}(\partial\Omega)& \longmapsto & H^{-1/2}(\partial\Omega) \\ 
		f   &\longmapsto&  \left(\partial_{\nu} u + i \langle A, \vec{\nu} \rangle \ u  \right)_{  \vert \partial \Omega} ,
	\end{array}
\end{equation}
where $\vec{\nu}$ is the outward normal unit vector field on $\partial\Omega$.  More precisely, we define the D-to-N map using the equivalent weak formulation : 
\begin{equation}\label{DtNweak}
	\left\langle \Lambda_{A} f , g \right \rangle_{H^{-1/2}(\partial \Omega) \times H^{1/2}(\partial \Omega)} = \int_\Omega  \langle (-i\nabla -A)u , (-i\nabla -A)v \rangle\ dx\,,
\end{equation}
for any $g \in H^{1/2}(\partial \Omega)$ and $f \in H^{1/2}(\partial \Omega)$ such that $u$ is the unique solution of (\ref{Dirichlet}) and $v$ is any element of $H^1(\Omega)$ so that $v_{|\partial \Omega} = g$. Clearly, the D-to-N map is a positive operator.

\vspace{0.1cm}\noindent
We recall that since $\Omega$ is assumed to be bounded  the spectrum of the D-to-N operator is discrete and is given by an increasing sequence of eigenvalues 
\begin{equation} \label{spectrum}
	0 \leq 	\mu_1 \leq \mu_2 \leq ... \leq \mu_n \leq ...  
\end{equation}
which tends to $+\infty$.  

\vspace{0.4cm}\noindent
Due to the identity in \eqref{DtNweak}, the D-to-N lowest eigenvalue $\mu_1 :=\lambda^{\rm DN}_1(A,\Omega)$   can be expressed in the variational form as
\begin{equation}\label{eq:def-var-DN}
\lambda^{\rm DN}_1(A,\Omega)=\inf_{u\in C^\infty(\overline\Omega),\ \|u\|_{\partial\Omega} =1} \|(-i\nabla- A)u\|^2_\Omega\,,
\end{equation}
and the eigenvalues $\mu_j:=\lambda^{\rm DN}_j(A,\Omega)$ can be expressed by the min-max principle, 
\begin{equation}\label{eq:def-var-DN-j}
\lambda^{\rm DN}_j(A,\Omega)=\inf_{\substack{M\subset C^\infty(\overline\Omega)\\\mathrm{dim}(M)=j}}\Biggl(\max_{u\in M, \ \|u\|_{\partial\Omega} =1} \|(-i\nabla- A)u\|^2_\Omega\Biggr),
\end{equation}
where $\| \cdot \|_\Omega$ and $\|\cdot \|_{\partial\Omega}$ denote the $L^2$-norms in $L^2(\Omega;\C)$ and $L^2(\partial\Omega;\C)$ respectively.

\vspace{0.2cm}\noindent
For $j=1$, we simply write $\lambda^{\rm DN}(A,\Omega)$ instead of $\lambda^{\rm DN}_1(A,\Omega)$.

\subsection{Planar domains}
In the two dimensional case,   we prove accurate asymptotics for the lowest eigenvalue  of the magnetic D-to-N operator.

\vspace{0.2cm}\noindent
Our first result concerns the constant magnetic field.

\vspace{0.1cm}\noindent
\begin{theorem}\label{conj:HN}
 	Let $\Omega$ be a regular domain in $\mathbb R^2$ and $A$ be a magnetic potential with constant magnetic field with norm $1$.  Then the ground state energy of the D-to-N map $\Lambda_{bA}$ satisfies
 \begin{equation}\label{eq:conj-HN}
 \lim_{b\rightarrow +\infty} b^{-1/2} \; \lambda^{\rm DN}(b A,\Omega) = \hat{\alpha}:= \frac{\alpha}{\sqrt{2}},
 \end{equation}
 where $-\alpha$ is the unique negative zero of the parabolic cylindrical function $D_{1/2}(z)$.
\end{theorem}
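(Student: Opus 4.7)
My plan is a semiclassical reduction to the half-plane model analysed in \cite{HeNi1}. The natural length scale is $b^{-1/2}$, which reflects the transverse localization of the half-plane magnetic D-to-N ground state near the boundary. On this scale $\partial\Omega$ is well approximated by its tangent line at any boundary point, so the leading behaviour of $\lambda^{\rm DN}(bA,\Omega)$ should match that of the half-plane problem at field strength $b$; by scaling the latter equals $\hat\alpha\sqrt{b}$, which is the conjectured value.

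\textbf{Upper bound.} Fix a boundary point $x_0\in\partial\Omega$ and work in tubular coordinates $(s,t)$, with $s$ the arc length along $\partial\Omega$ and $t=\dist(\cdot,\partial\Omega)$. After a gauge transformation, $bA$ takes the form $b(-t+\tfrac12\kappa(s)t^2,\,0)+\cO(bt^3)$, where $\kappa$ is the signed curvature. Let $\phi_b$ denote the rescaled half-plane ground state at field strength $b$. Define a trial function
\[
v_b(s,t)=\chi_1(s)\,\chi_2(b^{1/2}t)\,\phi_b(s,t),
\]
with smooth cutoffs $\chi_1,\chi_2$ chosen at a scale intermediate between $b^{-1/2}$ and $1$. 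Substituting $v_b$ into the Rayleigh quotient (\ref{eq:def-var-DN}), the leading term reproduces the half-plane Rayleigh quotient and gives $\hat\alpha\sqrt{b}$, while the curvature and cutoff corrections are $o(\sqrt{b})$.

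\textbf{Lower bound.} Introduce a partition of unity $(\chi_j)$ of $\overline\Omega$ with supports of diameter $\sim b^{-\rho}$ for a parameter $\rho\in(0,1/2)$. The IMS localization formula reads
\[
\|(-i\nabla-bA)u\|_\Omega^2=\sum_j\|(-i\nabla-bA)(\chi_j u)\|_\Omega^2-\sum_j\bigl\||\nabla\chi_j|\,u\bigr\|_\Omega^2.
\]
Interior patches satisfy $\chi_j u\big|_{\partial\Omega}=0$, so they only add to the total energy and may be dropped. For patches meeting $\partial\Omega$, straighten the boundary, change gauge, and compare with the half-plane model to obtain
\[
\|(-i\nabla-bA)(\chi_j u)\|_\Omega^2\ge\bigl(\hat\alpha\sqrt{b}-o(\sqrt{b})\bigr)\|\chi_j u\|^2_{\partial\Omega}.
\]
An Agmon-type decay estimate showing that any quasi-minimizer is concentrated in a layer of width $\cO(b^{-1/2})$ near $\partial\Omega$ yields $\|u\|^2_\Omega\lesssim b^{-1/2}\|u\|^2_{\partial\Omega}$, so the IMS error $\cO(b^{2\rho})\|u\|^2_\Omega$ is $\cO(b^{2\rho-1/2})\|u\|^2_{\partial\Omega}=o(\sqrt{b})\|u\|^2_{\partial\Omega}$ as soon as $\rho<1/2$. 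Summing over $j$ and using $\sum_j\chi_j^2=1$ on $\partial\Omega$ delivers the matching lower bound.

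\textbf{Main obstacle.} The delicate step is transferring the half-plane lower bound to each boundary patch while preserving the constant $\hat\alpha$ at leading order. One must simultaneously handle (i) the curvature correction in the gauge, which after the natural rescaling enters as an $\cO(b^{-1/2})$ perturbation of the half-plane operator; (ii) the finite size of the patch as opposed to the infinite half-plane, which requires uniform control of the transverse decay of $\phi_b$; and (iii) the reconstruction of $\|u\|^2_{\partial\Omega}$ after summation. Balancing the exponent $\rho$ so that all of these errors are $o(\sqrt{b}\|u\|^2_{\partial\Omega})$ is the crux of the argument.
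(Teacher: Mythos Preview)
Your proposal is correct and follows essentially the same route as the paper: a trial state built from the half-plane profile $f_*(b^{1/2}t)e^{-i\hat\alpha b^{1/2}s}$ in boundary coordinates for the upper bound (Proposition~\ref{prop:ub-2D}), and an IMS localization to boundary patches of size $b^{-\rho}$, $\rho\in(\tfrac14,\tfrac12)$, combined with the half-plane lower bound for the lower bound (Proposition~\ref{prop:lb}). The only minor difference is in how the localization error $\mathcal O(b^{2\rho})\|u\|_\Omega^2$ is absorbed: you restrict to a (quasi-)minimizer and invoke the a priori bound $\|u\|_\Omega^2\lesssim b^{-1/2}\|u\|_{\partial\Omega}^2$ (which follows directly from the upper bound together with the Neumann lower bound $\|(-i\nabla-bA)u\|_\Omega^2\geq\Theta_0 b\|u\|_\Omega^2$; no genuine Agmon estimate is needed at this stage), whereas the paper reserves an $\epsilon$-fraction of the quadratic form and uses that same Neumann lower bound to swallow the error for arbitrary $u$.
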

\vspace{0.1cm}\noindent
This theorem was conjectured in \cite{HeNi1}.
\vspace{0.2cm}\noindent
We recall that, for $\nu \in \mathbb R$, the parabolic cylinder function $D_\nu (z)$ is the (normalized)  solution of the differential equation 
\begin{equation}\label{ODEDnu0}
	w'' + \Bigl(\nu  + \frac 12 - \frac{z^2}{4}\Bigr)\ w=0\,,
\end{equation}
which tends to $0$ as $z \to +\infty$.  More precisely, $D_\nu (z)$ has the following asymptotic expansion 
\begin{equation}\label{asymptDnu}
D_{\nu}	 (z) = e^{-\frac{z^2}{4}} z^\nu \left( 1 + \mathcal O \big(\frac{1}{z^2}\big) \right) \ ,\ z \to + \infty\,.
\end{equation}

\vspace{0.1cm}\noindent
We refer to (\cite{HeNi1},  Section 2) for more details on the parabolic cylinder functions. At last, the positive real $\alpha$ appearing in Theorem \ref{conj:HN} is approximately equal to 
\begin{equation}\label{approalpha}
	\alpha = 0.7649508673 ....
\end{equation}

\vspace{0.3cm}\noindent
One can actually get a two-terms asymptotics where the second term takes account of the curvature  of the boundary.  The following result is a generalization of \cite[Theorem 1.1]{HeNi1},  in the case of the disk. 
\begin{theorem}
\label{thm:main-2D}
 	Let $\Omega$ be a regular domain in $\mathbb R^2$ and $A$ be a magnetic potential with constant magnetic field with norm $1$.  Then the ground state energy of the D-to-N map  $\Lambda^{DN}_{bA}$ satisfies
 \begin{equation*}
\lambda^{\rm DN}(b A,\Omega) =  \hat \alpha b^{\frac 12}  -   \frac{\hat \alpha^2 +1}{3}  \max_{x\in \partial \Omega} \kappa_x \, + o(1)\,,
 \end{equation*}
 where $\kappa_x$ denotes the curvature at $x$.
\end{theorem}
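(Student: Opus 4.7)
The strategy is to exploit boundary-fitted coordinates $(s,t)$, where $s$ is arclength on $\partial\Omega$ and $t\ge 0$ is the normal distance, together with the semiclassical rescaling $\tau=b^{1/2}t$, in order to reduce the problem to a perturbation of the half-plane model analyzed in \cite{HeNi1}. In the local gauge $A=-\bigl(t-\tfrac{1}{2}t^2\kappa(s)\bigr)\,ds$, whose field strength is identically $1$, the magnetic Dirichlet integral in a tubular neighborhood of $\partial\Omega$ reads
\begin{equation*}
\int\!\!\int \Bigl[(1-t\kappa)^{-1}\bigl|\bigl(-\ii\partial_s+b(t-\tfrac{1}{2}t^2\kappa)\bigr)u\bigr|^2 + (1-t\kappa)|\partial_t u|^2\Bigr]\,ds\,dt,
\end{equation*}
while the boundary norm is $\int_{\partial\Omega}|u|^2\,ds$. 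Truncating at $\kappa=0$ recovers exactly the flat half-plane model with constant field $1$, whose ground state energy is $\hat\alpha b^{1/2}$ by Theorem \ref{conj:HN}; the curvature $\kappa(s)$ enters only through the $(1-t\kappa)^{\pm1}$ factors and the $\tfrac{1}{2}t^2\kappa$ correction in the potential, which produce an $O(1)$ contribution.

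For the upper bound, I would localize a half-plane ground state at a point $s_0\in\partial\Omega$ with $\kappa(s_0)=\max_{\partial\Omega}\kappa$, taking the trial function
\begin{equation*}
u_{\mathrm{trial}}(s,t)=e^{\ii b^{1/2}\zeta_0\,s}\,\phi_0(b^{1/2}t)\,\chi\!\bigl(L^{-1}(s-s_0)\bigr)\,\chi_2(t),\qquad L=b^{-1/4+\eta},\ \eta\in(0,1/4),
\end{equation*}
where $\phi_0$ and $\zeta_0$ are the profile and optimal tangential frequency of the flat half-plane model, and $\chi,\chi_2$ are fixed smooth cutoffs. The scale $L$ is tuned so that the cutoff-derivative error $O(L^{-2}b^{-1/2})=O(b^{-2\eta})$ and the curvature variation $\kappa(s)-\kappa(s_0)=O(L^2)=O(b^{-1/2+2\eta})$ on $\supp\chi(L^{-1}(\cdot-s_0))$ are both $o(1)$. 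Plugging into the Rayleigh quotient and expanding $(1-t\kappa)^{\pm 1}$ and the potential in powers of $b^{-1/2}$, the cross term between the phase and the cutoff derivative vanishes by the reality of $\phi_0$, and the remaining integrals weighted by $\tau$ and $\tau^2$ are evaluated using the ODE $\phi_0''=(\zeta_0+\tau)^2\phi_0$ together with integration by parts, yielding the upper bound $\hat\alpha b^{1/2}-\tfrac{\hat\alpha^2+1}{3}\kappa(s_0)+o(1)$; the universal coefficient $\tfrac{\hat\alpha^2+1}{3}$ emerges from virial-type identities for $D_{-1/2}$, in the spirit of the disk computation of \cite{HeNi1}.

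For the lower bound, I would first derive a magnetic Agmon-type estimate showing that any minimizer $u^\star$ with $\|u^\star\|_{\partial\Omega}=1$ decays exponentially away from $\partial\Omega$ at scale $b^{-1/2}$, so that $\|u^\star\|_\Omega^2=O(b^{-1/2})$. Next, I introduce a tangential partition of unity $(\chi_j)$ on $\partial\Omega$ with patches of size $\delta=b^{-1/4+\eta'}$ for a small $\eta'\in(0,1/4)$, extended to a neighborhood of $\partial\Omega$ independently of $t$. The IMS identity
\begin{equation*}
\|(-\ii\nabla-bA)u^\star\|_\Omega^2=\sum_j\|(-\ii\nabla-bA)(\chi_j u^\star)\|_\Omega^2-\sum_j\|\,|\nabla\chi_j|\,u^\star\|_\Omega^2
\end{equation*}
has error $O(\delta^{-2}\|u^\star\|_\Omega^2)=o(1)$, and each local contribution is bounded from below by comparison with the Dirichlet-to-Neumann eigenvalue of the osculating disk of curvature $\kappa(s_j)$, which equals $\hat\alpha b^{1/2}-\tfrac{\hat\alpha^2+1}{3}\kappa(s_j)+o(1)$ by the disk asymptotics of \cite{HeNi1}. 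Summing over $j$ and using $\sum_j\|\chi_j u^\star\|_{\partial\Omega}^2=1$ delivers $\lambda^{\rm DN}(bA,\Omega)\ge \hat\alpha b^{1/2}-\tfrac{\hat\alpha^2+1}{3}\max_{\partial\Omega}\kappa+o(1)$.

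The main obstacle is the local comparison in the lower bound: since the Dirichlet-to-Neumann operator is defined via a nonlocal harmonic extension, one cannot simply restrict $\chi_j u^\star$ to an osculating disk, and one must instead show directly at the level of quadratic forms that the $\kappa(s_j)$-dependent perturbation of the half-plane problem produces the claimed eigenvalue shift $-\tfrac{\hat\alpha^2+1}{3}\kappa(s_j)$ with remainders uniform across the partition. Once this step is handled, matching the upper and lower bounds yields the two-term asymptotics.
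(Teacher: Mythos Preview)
Your proposal is correct and follows essentially the same route as the paper: upper bound via a localized half-plane trial state at a point of maximal curvature with the moment identities of Proposition~\ref{prop:link-1D} producing the coefficient $\tfrac{\hat\alpha^2+1}{3}$, and lower bound via Agmon decay (Proposition~\ref{prop:conc-dec}, yielding $\|u^\star\|_\Omega^2=O(b^{-1/2})$), a tangential IMS partition, and local comparison with constant-curvature models. The paper resolves the obstacle you flag exactly as you anticipate, at the level of quadratic forms: in $(t,s)$ coordinates the quadratic form with frozen curvature $\kappa_j$ is, after the change of variables $r=\kappa_j^{-1}-t$, $\theta=s-s_j$, literally the disk quadratic form in polar coordinates (the ``warmup'' in \S4.3.1), so the disk asymptotics \eqref{eq:asymp-disk} apply directly to each localized piece; the only minor difference is that the paper balances the partition scale to $b^{-1/6}$ rather than $b^{-1/4+\eta'}$, giving a sharper remainder $O(b^{-1/6})$ instead of $o(1)$.
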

We actually prove that the asymptotics in Theorem~\ref{thm:main-2D} holds for the $j$'th eigenvalue,  for every fixed $j\geq 2$,  and when the magnetic field is  only supposed to be constant  in a neighborhood of $\partial\Omega$. If $\Omega$ is simply connected,  the following inequality holds \cite{Pa}
\[\max_{x\in \partial \Omega} \kappa_x\geq \sqrt{\pi/|\Omega|} \,,\]
and we obtain as corollary of Theorem~\ref{thm:main-2D}:
\begin{corollary}\label{corol:main2D}
Let $\Omega$ be a regular domain in $\mathbb R^2$ and $A$ be a magnetic potential with constant magnetic field with norm $1$.  Suppose that $\Omega$ is simply connected and $\cB$ is a disk with the same area as $\Omega$. Then there exists $ b_1(\Omega)>0$ such that,  for all $b\geq  b_1(\Omega)$, 
 \[
\lambda^{\rm DN}(b A,\Omega) \leq   \lambda^{\rm DN}(b A,\cB).
 \]
\end{corollary}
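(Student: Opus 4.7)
The plan is to deduce the corollary directly by comparing the two-term asymptotics of Theorem~\ref{thm:main-2D} applied to $\Omega$ and to the disk $\cB$, and then invoking the Pankrashkin-type curvature inequality cited from~\cite{Pa}. First, I would compute the geometric data for $\cB$: since $\cB$ is a disk with $|\cB|=|\Omega|$, its radius is $R=\sqrt{|\Omega|/\pi}$ and its boundary curvature is the constant $\kappa_{\cB} = 1/R = \sqrt{\pi/|\Omega|}$, so in particular $\max_{x\in\partial\cB}\kappa_x = \sqrt{\pi/|\Omega|}$.

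Next, I would apply Theorem~\ref{thm:main-2D} separately to $\Omega$ and to $\cB$, getting
\begin{equation*}
\lambda^{\rm DN}(bA,\Omega) = \hat\alpha\, b^{1/2} - \frac{\hat\alpha^2+1}{3}\,\max_{x\in\partial\Omega}\kappa_x + o(1),
\end{equation*}
\begin{equation*}
\lambda^{\rm DN}(bA,\cB) = \hat\alpha\, b^{1/2} - \frac{\hat\alpha^2+1}{3}\,\sqrt{\pi/|\Omega|} + o(1),
\end{equation*}
as $b\to+\infty$. Here one should note that Theorem~\ref{thm:main-2D} is insensitive to the center of $\cB$, because for a constant magnetic field any translation of $A$ yields a gauge-equivalent potential and hence the same D-to-N eigenvalues.

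Subtracting these two expansions and using the cited inequality $\max_{x\in\partial\Omega}\kappa_x \geq \sqrt{\pi/|\Omega|}$, together with the fact that the coefficient $(\hat\alpha^2+1)/3$ is strictly positive, yields
\begin{equation*}
\lambda^{\rm DN}(bA,\cB)-\lambda^{\rm DN}(bA,\Omega) = \frac{\hat\alpha^2+1}{3}\Bigl(\max_{x\in\partial\Omega}\kappa_x - \sqrt{\pi/|\Omega|}\Bigr) + o(1).
\end{equation*}
If the difference of curvatures is strictly positive, then for $b$ sufficiently large (depending on $\Omega$) the right-hand side is positive, giving the claimed inequality. The only subtle point, which I regard as the main (and essentially cosmetic) obstacle, is handling the equality case: if $\max\kappa_x = \sqrt{\pi/|\Omega|}$, then by the equality case of the Pankrashkin-type inequality $\Omega$ must already be a disk, hence congruent to $\cB$, in which case the D-to-N spectra coincide and the stated non-strict inequality holds trivially. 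Combining the two cases yields the existence of a threshold $b_1(\Omega)$ beyond which $\lambda^{\rm DN}(bA,\Omega)\leq \lambda^{\rm DN}(bA,\cB)$.
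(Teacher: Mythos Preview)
Your proposal is correct and follows essentially the same approach as the paper: apply the two-term asymptotics of Theorem~\ref{thm:main-2D} to both $\Omega$ and $\cB$, then invoke the Pankrashkin inequality $\max_{x\in\partial\Omega}\kappa_x\geq\sqrt{\pi/|\Omega|}$ from~\cite{Pa}. Your explicit treatment of the equality case (where $\Omega$ is itself a disk) and the remark on gauge invariance under translation are sensible additions that the paper leaves implicit.
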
 
The inequality in Corollary~\ref{corol:main2D} is reminiscent of an inequality for the magnetic Laplacian \cite{FH},  and it would be interesting to investigate whether it holds for all $b>0$.  For the magnetic Laplacian,  there is progress in the study of this question 
\cite{CLPS, KL, KL2}.  The geometric isoperimetric inequality also yields that $\lambda^{\rm DN}(b A,\Omega) \leq   \lambda^{\rm DN}(b A,\cB_*)$, where $\cB_*$ is a disk with the same perimeter as $\Omega$.\medskip

If the magnetic field is variable, not vanishing in $\overline{\Omega}$ and constant along the boundary one could expect a more general result in the spirit of the one  of N. Raymond \cite{Ray1} 
devoted to the ground state energy of the Neumann magnetic Laplacian. The second term will also involve the normal derivative of the magnetic field to the boundary.\medskip

 We will also consider the case of variable  magnetic field  in $2D$ and in $3D$ in the same spirit as for the analysis of the Neumann problem appearing in surface superconductivity  \cite{LuPa0,HeMo,Ray1,Ray2,HeMo3}.

\vspace{0.1cm}\noindent
We prove in particular the following  theorem (we refer to \cite{HeMo1, HeMo3,LuPa} for the Neumann problem). 
\begin{theorem}
\label{conj4}
 	Let $\Omega$ be a regular domain in $\mathbb R^2$, $A$ be a magnetic potential  with non vanishing  magnetic field $B(x)$ in $\overline \Omega$, then the ground state energy of the D-to-N map $\Lambda_{b A}$ satisfies
 \begin{equation}
\lambda^{\rm DN}(b A,\Omega) = \hat \alpha \bigl(\inf_{x\in \partial \Omega} |B(x)|\bigr)^{\frac 12} b^\frac 12 + o(b^\frac 12)\,.
 \end{equation}
\end{theorem}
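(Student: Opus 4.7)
My plan is to adapt the variable-field strategy developed for the Neumann magnetic Laplacian in \cite{LuPa0,HeMo,Ray1} to the D-to-N setting, using Theorem~\ref{conj:HN} (and more precisely the half-plane analysis behind it from \cite{HeNi1}) as the local constant-field model.

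The upper bound is obtained by a quasi-mode construction. Pick $x_0 \in \partial\Omega$ with $|B(x_0)| = B_0 := \inf_{\partial\Omega} |B|$, introduce tubular boundary coordinates $(s,t)$ in a neighborhood of $x_0$, and apply a gauge transformation so that $bA$ agrees with the Landau gauge of the constant field $b B_0$ up to a remainder of order $b(|s|+|t|)^2$. As trial function take the product of the half-plane ground state for the field $B_0$ from \cite{HeNi1} (concentrated on scale $b^{-1/2}$ in $t$), a Gaussian on scale $b^{-1/4}$ in $s$, and a smooth cutoff. A direct evaluation of the Rayleigh quotient in \eqref{eq:def-var-DN} then produces $\hat\alpha\, B_0^{1/2} b^{1/2} + o(b^{1/2})$.

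For the lower bound, I would perform an IMS localization at scale $\rho = b^{-\delta}$ for some small $\delta \in (0,1/4)$, so that with $\sum_j \chi_j^2 \equiv 1$,
\[
\|(D-bA)u\|_\Omega^2 = \sum_j \|(D-bA)(\chi_j u)\|^2 - \sum_j \| |\nabla\chi_j|\, u \|_\Omega^2.
\]
In bulk patches $U_j \subset \Omega$, the function $\chi_j u$ is compactly supported and the $2$D inequality $(D-bA)^2 \geq b|B|$ gives $\|(D-bA)(\chi_j u)\|^2 \geq b\, \min_{\overline\Omega}|B|\,(1-o(1))\|\chi_j u\|_\Omega^2$, whereas $\|\chi_j u\|_{\partial\Omega} = 0$. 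In boundary patches $V_j$ centered at some $x_j \in \partial\Omega$, I would straighten $\partial\Omega$ and gauge-transform to reduce the quadratic form to the half-plane model with approximately constant field $|B(x_j)|$; the half-plane lower bound underlying Theorem~\ref{conj:HN} then yields
\[
\|(D-bA)(\chi_j u)\|^2 \geq \bigl( \hat\alpha\, |B(x_j)|^{1/2} b^{1/2} - o(b^{1/2}) \bigr) \|\chi_j u\|_{\partial\Omega}^2.
\]
Using $|B(x_j)| \geq B_0$ and summing via $\sum_j \|\chi_j u\|_{\partial\Omega}^2 = \|u\|_{\partial\Omega}^2$ produces the matching lower bound.

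The principal obstacle is that the IMS cost $O(\rho^{-2}) = O(b^{2\delta})$ and the patch-wise freezing errors (gauge/coordinate/field approximation) are a priori measured against $\|u\|_\Omega^2$, whereas the variational quantity in \eqref{eq:def-var-DN} is normalized against $\|u\|_{\partial\Omega}^2$. The bulk patches supply a factor of order $b$ against $\|u\|_\Omega^2$, which absorbs the IMS cost and any freezing error of the form $\varepsilon b \|\chi_j u\|_\Omega^2$ for $\varepsilon$ small. If further needed, I would complement the argument with Agmon-type exponential decay estimates showing that the D-to-N eigenfunction concentrates near $\partial\Omega$ on scale $b^{-1/2}$, based on the positivity of $|B|$ on $\overline\Omega$ and on the a priori upper bound $\lambda^{\rm DN}(bA,\Omega) = O(b^{1/2}) \ll b$. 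Propagating these decay estimates through the patchwise half-plane reductions, in the spirit of \cite{HeMo3}, is the technically most delicate step of the proof.
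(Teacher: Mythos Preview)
Your overall strategy---upper bound by a quasi-mode at a boundary point of minimal $|B|$, lower bound by IMS localization and patchwise reduction to the half-plane model---is exactly what the paper does (Propositions~\ref{prop:ub-2D} and~\ref{prop:lb}). Two points in your lower-bound sketch need correction, though.

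First, the localization exponent is on the wrong side: you take patches of size $b^{-\delta}$ with $\delta\in(0,\tfrac14)$, but the field-freezing error on a boundary patch is of order $\epsilon^{-1}b^{2-4\delta}\|\chi_j u\|_\Omega^2$ (from $|A-A_j^{\mathrm{lin}}|=\mathcal O(b^{-2\delta})$ and Cauchy--Schwarz), and for $\delta<\tfrac14$ this exceeds any term of order $b\|\chi_j u\|_\Omega^2$ and cannot be absorbed, even after invoking $\|u\|_\Omega^2=\mathcal O(b^{-1/2})$. The correct range is $\delta\in(\tfrac14,\tfrac12)$; see the paper's conditions \eqref{eq:cond-rho1}--\eqref{eq:cond-rho2}.

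Second, your claim that ``the bulk patches supply a factor of order $b$ against $\|u\|_\Omega^2$'' is not right: bulk patches only see the bulk $L^2$-mass, which for a boundary-concentrated minimizer is negligible. The paper circumvents this cleanly---and avoids Agmon estimates altogether for the leading order---by the splitting
\[
\|(D-bA)u\|_\Omega^2=(1-\epsilon)\|(D-bA)u\|_\Omega^2+\epsilon\,\|(D-bA)u\|_\Omega^2,
\]
applying IMS only to the first part and using the \emph{global} Neumann lower bound $\|(D-bA)u\|_\Omega^2\geq \Theta_1 b\,\|u\|_\Omega^2$ (valid for all $u$ since $B$ does not vanish on $\overline\Omega$) on the second. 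This produces a positive term $\epsilon\Theta_1 b\,\|u\|_\Omega^2$ that absorbs both the IMS cost $\mathcal O(b^{2\delta})\|u\|_\Omega^2$ and the freezing errors, uniformly over \emph{all} trial functions, not just near-minimizers. Your Agmon-decay fallback would work too (once $\delta>\tfrac14$), but it is more machinery than needed here; the paper reserves Agmon estimates for the two-term asymptotics under constant field.
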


\begin{remark}
 We will prove a more general result valid for a larger class of magnetic fields and for the low-lying eigenvalues.   Actually,  Theorem \ref{conj4} is still true if the magnetic field $B$  does not vanish  on $\partial \Omega$ and if  the set $\mathcal Z(B):=\{x\in\Omega\colon B(x)=0\}$ consists of a finite number of smooth curves such that $|\nabla B|>0$ on $\mathcal Z(B)$. See  Assumption \ref{ass:adm-B} and 
Example \ref{ex:adm-B} for other conditions. \\
 Interestingly,  only the values of the magnetic field on the boundary contributes   to the main term in the asymptotics for the D-to-N operator.  In fact,  unlike the Neumann magnetic Laplacian,  there is no contribution involving $\inf_{x\in \Omega}|B(x)|$.
\end{remark}

\vspace{0.1cm}\noindent

\subsection{Three dimensional case}
\vspace{0.1cm}\noindent
We have a similar result  for variable magnetic fields in $3$D which is in correspondence with known results obtained in the analysis of the ground state 
energy of the Neumann realization of the magnetic Laplacian (see \cite{LuPa,HeMo3,Ray2}):

\begin{theorem}\label{thm:main-3D}
	Let $\Omega$ be a regular bounded domain in $\mathbb R^3$, $A$ be a magnetic potential with non vanishing  magnetic field  $B(x)$ in $\overline{\Omega}$, then the ground state energy of the D-to-N map $\Lambda^{DN}_{bA}$ satisfies
	\begin{equation}
		\lim_{b\rightarrow +\infty} b^{-1/2} \; \lambda^{\rm DN}(b A,\Omega) =   \inf_{x\in \partial \Omega} \Big(\lambda^{\rm DN}(\vartheta(x))|B(x)|^{\frac 12}\Big)\,,
	\end{equation}
	where, for $x\in \partial \Omega$, 
	\begin{itemize}
	\item   $\vartheta(x)$ is defined by 
	\begin{equation}\label{ortha}
		\langle \vec{H}(x)\,|\,  \vec{\nu}  \rangle= - |B(x)|  \sin \vartheta (x)\;.
	\end{equation}
	\item  $\vec{H}(x)$ is the magnetic vector field associated with $B(x)$ considered as a $2$-form by the Hodge-map. \item
	 $\vec{\nu}$ is the exterior normal at $x\in \partial \Omega$.
	 \item  $\lambda^{\rm DN}(\vartheta)$ is the ground state energy (see \eqref{eq:6.2}) relative to the half space when the magnetic field is constant.
	 \end{itemize}
\end{theorem}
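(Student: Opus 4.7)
The plan is to follow the two-sided strategy developed for the magnetic Neumann Laplacian (cf.\ \cite{LuPa,HeMo3,Ray2}), adapting it to the D-to-N variational principle \eqref{eq:def-var-DN}. The model problem on a half-space with constant unit-norm field at angle $\vartheta$ to the boundary, whose D-to-N ground state is $\lambda^{\rm DN}(\vartheta)$, is analyzed earlier in the paper; a simple rescaling $x \mapsto (b|B(x_0)|)^{-1/2}y$ converts it into the expected local density $\lambda^{\rm DN}(\vartheta(x))|B(x)|^{1/2}b^{1/2}$ at a boundary point $x$.

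For the upper bound I would pick $x_0 \in \partial\Omega$ minimizing $\lambda^{\rm DN}(\vartheta(x))|B(x)|^{1/2}$, switch to boundary normal coordinates near $x_0$, freeze the magnetic field to its value $B(x_0)$ (of magnitude $|B(x_0)|$ and angle $\vartheta(x_0)$ to the flattened boundary), and after a Coulomb-type gauge transformation insert the rescaled half-space ground state multiplied by a smooth cutoff of size $b^{-\rho}$. The errors from the change of coordinates, the frozen field, the gauge, and the cutoff are all lower-order in $b$, so the Rayleigh quotient yields the upper bound $\lambda^{\rm DN}(bA,\Omega) \leq \lambda^{\rm DN}(\vartheta(x_0))|B(x_0)|^{1/2} b^{1/2} + o(b^{1/2})$.

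For the lower bound, take any $u \in H^1(\Omega)$ normalized by $\|u\|_{\partial\Omega}=1$ and a partition of unity $\{\chi_j^b\}$ subordinate to a cover of $\overline{\Omega}$ by balls of radius $b^{-\rho}$ with $\rho \in (0,1/2)$ to be calibrated. The IMS formula
\begin{equation*}
\|(-i\nabla - bA)u\|_\Omega^2 = \sum_j \|(-i\nabla - bA)(\chi_j^b u)\|_\Omega^2 - \sum_j \||\nabla\chi_j^b|\,u\|_\Omega^2
\end{equation*}
splits the quadratic form. Balls $B_j$ strictly interior to $\Omega$ satisfy $\|\chi_j^b u\|_{\partial\Omega}=0$ and so contribute only positively to the bookkeeping, their excess absorbing the associated localization error through the bulk bound $\|(-i\nabla - bA)v\|_\Omega^2 \geq b\inf_\Omega|B|\,\|v\|_\Omega^2$ valid for compactly supported $v$. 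For each ball $B_j$ meeting $\partial\Omega$, local boundary flattening and frozen-field approximation at a boundary point $x_j\in B_j\cap\partial\Omega$ reduce the local quadratic form to the half-space D-to-N problem at angle $\vartheta(x_j)$, giving
\begin{equation*}
\|(-i\nabla-bA)(\chi_j^b u)\|_\Omega^2 \geq \bigl(\lambda^{\rm DN}(\vartheta(x_j))|B(x_j)|^{1/2} b^{1/2} - o(b^{1/2})\bigr)\|\chi_j^b u\|_{\partial\Omega}^2.
\end{equation*}
Summing over $j$ and using $\sum_j \|\chi_j^b u\|_{\partial\Omega}^2 = 1$ produces the matching lower bound.

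The chief obstacle is the simultaneous control of three competing errors through the single scale $b^{-\rho}$: the IMS cost of order $b^{2\rho}$, the discrepancy between $bA$ and its constant-field approximation (which after a Coulomb gauge becomes quadratic in $|x-x_j|$ and contributes at order $b^{1-2\rho}$ to the form), and the variation of the boundary angle $\vartheta$ across each ball, which must shift $\lambda^{\rm DN}(\vartheta)$ only by $o(1)$, demanding continuity of $\vartheta\mapsto\lambda^{\rm DN}(\vartheta)$ proved earlier. A secondary subtlety, absent from the Neumann setting, is that the denominator of the variational quotient is the boundary $L^2$-norm; this forces the local reduction to pass through half-space D-to-N inequalities rather than half-space Neumann inequalities, and interior balls enter the bookkeeping only as positive throw-away terms rather than as bulk absorbers of the boundary mass.
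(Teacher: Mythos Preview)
Your overall strategy matches the paper's closely, but there are two points where the sketch is incomplete.

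\textbf{Lower bound: absorbing the IMS error.} You write that the interior balls, via the bulk bound $\|(-i\nabla-bA)v\|^2\ge b\inf|B|\,\|v\|^2$, absorb ``the associated localization error.'' But the IMS remainder is $\sum_j\||\nabla\chi_j^b|\,u\|^2=\mathcal O(b^{2\rho})\|u\|_\Omega^2$, and the interior balls only deliver $cb\sum_{\text{int}}\|\chi_j u\|_\Omega^2$, which misses the mass of $u$ in the boundary strip of width $b^{-\rho}$. There is no a priori control of $\|u\|_\Omega$ in terms of $\|u\|_{\partial\Omega}$, so this term cannot simply be declared $o(b^{1/2})$. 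The paper's fix (in the 2D template, Proposition~\ref{prop:lb}, which is explicitly invoked for the 3D lower bound) is to split $\|(-i\nabla-bA)u\|^2=(1-\epsilon)\|(-i\nabla-bA)u\|^2+\epsilon\|(-i\nabla-bA)u\|^2$ and apply the Lu--Pan Neumann lower bound $\lambda^{\rm Ne}(bA,\Omega)\gtrsim b$ to the $\epsilon$-fraction; this yields $\epsilon b\|u\|_\Omega^2$ on the whole of $\Omega$, which absorbs $b^{2\rho}\|u\|_\Omega^2$ once $\epsilon$ is tuned and $\rho<1/2$. The same trick is used a second time in the boundary layer to absorb the second IMS decomposition and the frozen-field error $\mathcal O(\epsilon^{-1}b^{2-4\rho})\|u\|^2$, which forces $\rho>1/4$. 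Your sentence about $b^{1-2\rho}$ is the pointwise bound on $b(A-A^{\rm lin})$, not the form error; after Cauchy--Schwarz the relevant quantity is $b^{2-4\rho}$.

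\textbf{Upper bound: existence and structure of the half-space ground state.} You propose to ``insert the rescaled half-space ground state multiplied by a smooth cutoff.'' The catch is that $\lambda^{\rm DN}(\vartheta)$ is defined by an infimum in \eqref{eq:6.2} over $C_0^\infty(\overline{\R^3_+})$, and there is no $L^2(\R^3_+)$ minimizer. What is available depends on $\vartheta$: for $\vartheta\in(0,\pi/2)$ the paper's Lemma~\ref{lemma6.4} produces an $H^1_A(\R^2_+)$ minimizer $\phi_\vartheta(x_1,x_2)$ after the $x_3$-Fourier reduction, and the quasimode is $b^{\rho/2}\chi(b^\rho x_3)\chi(b^\rho x_1)\chi(b^\rho x_2)\,\phi_\vartheta\bigl((b|B(p)|)^{1/2}(x_1,x_2)\bigr)$; for $\vartheta=0$ one is back to the 2D situation and uses $f_*(b^{1/2}x_1)e^{-i(b|B(p)|)^{1/2}\hat\alpha x_3}$ with cutoffs in both tangential variables; and $\vartheta=\pi/2$ is handled by an explicit Gaussian-times-exponential. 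The paper treats these three cases separately, and the distinction matters for the normalization and for checking that the cutoff errors are lower order.
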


 There are two important consequences of Theorem~\ref{thm:main-3D}:
\begin{itemize}
\item When $B$ is constant with magnitude  $1$,  it follows   that
\[ \lim_{b\rightarrow +\infty} b^{-1/2} \; \lambda^{\rm DN}(b A,\Omega) =\inf_{x\in \partial \Omega} \lambda^{\rm DN}(\vartheta(x)),\]
which is consistent with Theorem~\ref{conj:HN} for 2D domains.  
\item More generally,  if we know only that $|B(x)|$ is constant,  as for the helical magnetic field $B(x)=(\cos(\tau x_3),\sin(\tau x_3),0)$ encountered in liquid crystals \cite{Pan, HK},    then  
\[ \lim_{b\rightarrow +\infty} b^{-1/2} \; \lambda^{\rm DN}(b A,\Omega) =|B|^{\frac 12} \inf_{x\in \partial \Omega} \lambda^{\rm DN}(\vartheta(x))\,.\]
\end{itemize} 
Notice, that if $\partial \Omega$ has a component homeomorphic to the sphere $\mathbb S^2$, then the hairy ball Theorem applied to the tangential part of the magnetic field at the boundary implies that there exists a point $x\in \partial \Omega$ 
such that $\vartheta(x) =0$ and we deduce
\[
\inf_{x\in \partial \Omega} \lambda^{\rm DN}(\vartheta(x))=\hat \alpha\,.
\]
\subsection{The magnetic Robin Laplacian}

 We can  get  information about $\lambda^{\rm DN}(bA,\Omega)$ by comparing with the lowest eigenvalue of a Robin problem
\begin{equation}\label{eq:int-R-ev}
\mu(\hat\lambda)=\mu(bA,\hat\lambda,\Omega)=\inf_{u\not=0}\frac{\|(-i\nabla-bA)u\|_\Omega^2-\hat\lambda\|u\|_{\partial\Omega}^2}{\|u\|^2_\Omega}.
\end{equation}
In two dimensions and under constant magnetic field,  
two-term  and three-term asymptotics for $\mu(\hat\lambda)$ are available  \cite{Ka06,  FLRV} in the regime where $b\to+\infty$ and
\[\hat\lambda=\hat\alpha b^{1/2}+o(b^{1/2}).\] 
If we choose 
\[\hat\lambda^*=\hat\alpha b^{1/2}+o(b^{1/2})\mbox{ such that }\mu(\hat\lambda^*)\leq 0,\]
and
\[\hat\lambda_*=\hat\alpha b^{1/2}+o(b^{1/2})\mbox{ such that }\mu(\hat\lambda_*)\geq 0,\]
then, we get using the characterization in \eqref{eq:def-var-DN}, 
\[\hat\lambda_*\leq\lambda^{\rm DN}(bA,\Omega)\leq \hat\lambda^*.\]
We will use this approach in Section~\ref{sec:splitting} to analyze the splitting of the low-lying eigenvalues under a constant magnetic field and we  obtain
\begin{theorem}\label{thm:splitiing*}
 Suppose that $\Omega$ is a regular domain in $\R^2$ such that the curvature  of the boundary has a unique non-degenerate maximum.   Let $A$ be a vector field on $\Omega$ with constant magnetic field $\curl A=1$.   Then there is a constant $ K_*(\Omega)>0$ such that
    \[ \lambda_2(bA,\Omega)-\lambda_1(bA,\Omega)=K_*(\Omega) b^{-1/4}+o(b^{-1/4}) \quad\mbox{ as }b\to+\infty.\]
\end{theorem}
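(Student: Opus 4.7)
The plan is to exploit the correspondence recalled in Section~\ref{sec:splitting}: for each fixed $j$, the eigenvalue $\lambda_j^{\rm DN}(bA,\Omega)$ is the unique value of $\hat\lambda$ at which the $j$-th Robin eigenvalue $\mu_j(bA,\hat\lambda,\Omega)$ vanishes. Therefore the splitting $\lambda_2-\lambda_1$ will be extracted from the $j$-dependence of a sufficiently refined asymptotic expansion of $\mu_j$ around the critical value $\hat\lambda\approx\hat\alpha b^{1/2}$.

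The analytic core is to upgrade the known two-term Robin asymptotics of \cite{Ka06,FLRV} into a three-term expansion, uniform in $\hat\lambda$ lying in the window $\hat\alpha b^{1/2}+O(1)$. In boundary coordinates $(s,t)$ on a tubular neighborhood of $\partial\Omega$, Agmon-type decay estimates confine Robin eigenfunctions to a layer of thickness $\sim b^{-1/2}$. After rescaling $t=b^{-1/2}\tilde t$, the transverse problem reduces to the parabolic-cylinder model of \cite{HeNi1}, while the tangential motion is governed by an effective one-dimensional operator on $\partial\Omega$ whose leading-order potential attains its minimum exactly where $\kappa(s)$ is maximal. The non-degeneracy hypothesis then enables a harmonic-oscillator approximation \`a la \cite{FH}, yielding
\begin{equation*}
\mu_j\bigl(bA,\hat\alpha b^{1/2}+\tau,\Omega\bigr)=-D_0\,b^{1/2}\bigl(\tau-\tau^{(0)}\bigr)+(2j-1)\,M_2\,b^{1/4}+o(b^{1/4}),
\end{equation*}
uniformly for $\tau$ in bounded sets, where $\tau^{(0)}=-\tfrac{\hat\alpha^2+1}{3}\kappa_{\max}$ is fixed by Theorem~\ref{thm:main-2D}, the slope constant $D_0>0$ comes from the Hellmann-Feynman identity $\partial_{\hat\lambda}\mu_j=-\|u_j\|_{\partial\Omega}^2/\|u_j\|_\Omega^2$ combined with the $b^{-1/2}$-boundary concentration of the normalized eigenfunctions, and $M_2>0$ is the harmonic-oscillator constant determined by $|\kappa''(s_0)|$ together with universal quantities coming from the half-plane model of \cite{HeNi1}.

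Given this expansion, inverting $\mu_j(bA,\hat\alpha b^{1/2}+\tau_j,\Omega)=0$ in $\tau_j$ yields immediately $\tau_j=\tau^{(0)}+(2j-1)\tfrac{M_2}{D_0}b^{-1/4}+o(b^{-1/4})$; since $\lambda_j^{\rm DN}(bA,\Omega)=\hat\alpha b^{1/2}+\tau_j$, subtracting the cases $j=1$ and $j=2$ gives the announced splitting with $K_*(\Omega)=2M_2/D_0>0$. The hard step will be the three-term Robin expansion itself: one must propagate the harmonic approximation uniformly in $\tau$ over a window of size $O(1)$ (this uniformity is precisely what the implicit inversion requires), identify the effective tangential operator with enough precision to isolate the $(2j-1)$ oscillator spacing, and confirm that $D_0$ is strictly positive and essentially independent of $j$ in this window. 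These are quantitative strengthenings of the Fournais-Helffer scheme for the Neumann magnetic Laplacian \cite{FH}, adapted to the Robin boundary condition treated in \cite{Ka06,FLRV}.
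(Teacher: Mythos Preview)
Your approach is essentially the same as the paper's: reduce to the Robin problem via the correspondence of Proposition~\ref{prop:zero-R}, use a three-term asymptotic expansion of $\mu_j^{\rm R}(\gamma,b)$ that is locally uniform in the boundary parameter, and invert to recover $\lambda_j^{\rm DN}$. The only difference is that you propose to \emph{derive} the three-term Robin expansion, whereas the paper simply quotes it from \cite{FLRV} (their expansion $\mu_j^{\rm R}(\gamma,b)=b\Theta(\gamma)-b^{1/2}C_1(\gamma)\kappa_{\max}+(2j-1)b^{1/4}C_2(\gamma)+o(b^{1/4})$, locally uniformly in $\gamma$), and then Taylor-expands the coefficients $\Theta,C_1,C_2$ at $\gamma=-\hat\alpha$; this yields $K_*(\Omega)=2C_2(-\hat\alpha)/\Theta'(-\hat\alpha)$ directly, so the ``hard step'' you flag is already available as a black box.
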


The Robin problem is not analyzed under non-constant magnetic field in two dimensions.  For the three dimensional case,  the existing results in \cite{HKR} only cover the regime $|\hat\lambda|=o(b^{1/2})$,  whereas the relevant regime for  our setting is $|\hat\lambda|\propto b^{1/2}$.  

\subsection{Weak field limit}

Our final result concerns the limit as $b\to0$ in simply connected domains.   Assuming that $\Omega\subset\R^n$ is simply connected,  $n=2,3$,  there is a unique vector field $A_\Omega\in H^1(\Omega;\R^n)$ such that \cite[Prop. D.1.1]{FH4}
\[\curl A_\Omega=1\mbox{ and } \mathrm{div}\,A_\Omega=0\mbox{ on }\Omega,\quad \vec{\nu}\cdot A_\Omega=0\mbox{ on} \ \partial\Omega.\] 

\begin{theorem}\label{thm:small-b}
Let $\Omega$ be a regular domain in $\R^n$ and $A$ be a vector field generating a constant magnetic field $\curl A=1$.  Then, the lowest eigenvalue of the D-t-N operator in $\Omega$ satisfies
\[ \lambda^{\rm DN}(bA,\Omega)=\frac{b^2}{|\partial\Omega|} \int_\Omega|A_\Omega|^2dx+o(b^2)\quad\mbox{ as } b\to0. \]
\end{theorem}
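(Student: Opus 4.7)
The plan is to exploit the variational characterization \eqref{eq:def-var-DN} together with the distinguishing features of the specific gauge $A_\Omega$, namely $\mathrm{div}\,A_\Omega = 0$ in $\Omega$ and $\vec{\nu}\cdot A_\Omega = 0$ on $\partial\Omega$. Since $\Omega$ is simply connected, any vector field with $\mathrm{curl}\,A = 1$ differs from $A_\Omega$ by a gradient $\nabla\varphi$, and gauge invariance of the D-t-N spectrum via the unitary multiplication $u\mapsto e^{i\varphi}u$ lets us assume $A = A_\Omega$ throughout.

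The upper bound is immediate: the constant test function $u_*\equiv |\partial\Omega|^{-1/2}$ satisfies $\|u_*\|_{\partial\Omega}=1$ and $\nabla u_*=0$, so \eqref{eq:def-var-DN} gives
\[
\lambda^{\rm DN}(bA_\Omega,\Omega) \;\leq\; b^2\int_\Omega |A_\Omega|^2|u_*|^2\,dx \;=\; \frac{b^2}{|\partial\Omega|}\int_\Omega |A_\Omega|^2\,dx,
\]
with no remainder.

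For the matching lower bound, let $u_b$ denote the $L^2(\partial\Omega)$-normalized ground state, and split $u_b = c_b+w_b$ with $c_b:=|\partial\Omega|^{-1}\int_{\partial\Omega}u_b\,ds$ constant and $\int_{\partial\Omega}w_b\,ds=0$. The gauge properties of $A_\Omega$ produce the decisive cancellation
\[
\int_\Omega \nabla w_b\cdot A_\Omega\,dx \;=\; \int_{\partial\Omega} w_b\,(A_\Omega\cdot\vec{\nu})\,ds - \int_\Omega w_b\,\mathrm{div}\,A_\Omega\,dx \;=\; 0,
\]
which kills the cross-term linear in $\bar c_b w_b$. Expansion then gives
\[
\|(-i\nabla - bA_\Omega)u_b\|^2_\Omega \;=\; \|\nabla w_b\|^2_\Omega \;-\; 2b\,\mathrm{Im}\!\int_\Omega \bar{w}_b\,A_\Omega\cdot\nabla w_b\,dx \;+\; b^2\!\int_\Omega |A_\Omega|^2|c_b+w_b|^2\,dx.
\]
A Poincaré--trace inequality, valid for $H^1$-functions with zero boundary mean (Rellich compactness plus the fact that constants with zero boundary average must vanish), yields $\|w_b\|_\Omega+\|w_b\|_{\partial\Omega}\leq C\|\nabla w_b\|_\Omega$. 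Combining this with Cauchy--Schwarz on the cross term and the already established upper bound $\|(-i\nabla-bA_\Omega)u_b\|^2_\Omega\leq Cb^2$, a short bootstrap produces the a priori estimate $\|\nabla w_b\|_\Omega = O(b)$, and hence $\|w_b\|_\Omega,\,\|w_b\|_{\partial\Omega}=O(b)$. Plugging these bounds back into the expansion and using $|c_b|^2 = (1-\|w_b\|^2_{\partial\Omega})/|\partial\Omega|$ delivers
\[
\lambda^{\rm DN}(bA_\Omega,\Omega) \;\geq\; \frac{b^2}{|\partial\Omega|}\int_\Omega |A_\Omega|^2\,dx \;-\; O(b^3),
\]
matching the upper bound up to $o(b^2)$ (in fact with a stronger $O(b^3)$ remainder).

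The main obstacle is securing the a priori control $\|\nabla w_b\|_\Omega = O(b)$: without the cancellation $\int_\Omega \nabla w_b\cdot A_\Omega = 0$, a cross-term of order $b\|\nabla w_b\|_\Omega$ would survive and force the Rayleigh-quotient lower bound to be of order $b$ rather than $b^2$, collapsing the bootstrap. The Coulomb-type gauge $A_\Omega$ is therefore not a convenience but a structural requirement; equivalently, the result may be read as identifying $|\partial\Omega|^{-1}\int_\Omega|A_\Omega|^2\,dx$ with $|\partial\Omega|^{-1}\inf_{\varphi\in H^1(\Omega)}\int_\Omega|A-\nabla\varphi|^2\,dx$ for any reference $A$ with $\mathrm{curl}\,A=1$.
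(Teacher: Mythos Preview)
Your proof is correct and takes a genuinely different route from the paper's. The paper does not work directly with the D-to-N minimizer $u_b$; instead it passes through the Robin problem via the link $\mu_1(bA_\Omega,\hat\lambda,\Omega)=0$ (Proposition~\ref{prop:zero-R}), invokes the known small-$\hat\lambda$ asymptotics of the non-magnetic Robin ground state energy from \cite{GS}, builds a quasi-mode $f_{\hat\lambda}+b^2 g_{\hat\lambda}$ for the magnetic Robin operator, and then reads off $\hat\lambda$. Your argument bypasses all of this: you decompose the D-to-N minimizer as $u_b=c_b+w_b$ with $w_b$ of zero boundary mean, exploit the gauge identities $\mathrm{div}\,A_\Omega=0$, $\vec\nu\cdot A_\Omega=0$ to kill the dangerous $O(b)$ cross-term $\bar c_b\int_\Omega A_\Omega\cdot\nabla w_b$, and then a Poincar\'e--trace inequality plus a short bootstrap yields $\|\nabla w_b\|_\Omega=O(b)$. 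This is more elementary and self-contained (no Robin detour, no \cite{GS}), and it delivers a sharper $O(b^3)$ remainder in place of the paper's $o(b^2)$. The paper's route, on the other hand, illustrates the Robin--D-to-N correspondence that is used elsewhere (Section~\ref{sec:splitting}) and would adapt more readily if one wanted higher-order terms via systematic quasi-mode expansion.
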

Compared with the magnetic Laplacian,  the coefficient of $b^2$ is  the average of $|A_\Omega|^2$ over $\Omega$ \cite[Proposition~1.5.2]{FH4}.   

By \cite[Proposition 3.1]{FH} and the geometric isoperimetric inequality,  we obtain as corollary of Theorem~\ref{thm:small-b}:
\begin{corollary}\label{corol:small-b}
Let $\Omega$ be a regular domain in $\mathbb R^2$ and $A$ be a magnetic potential with constant magnetic field with norm $1$.  Suppose that $\Omega$ is simply connected and $\cB$ is a disk with the same area as $\Omega$. Then there exists $b_0(\Omega)>0$ such that,   $|b|\leq b_0(\Omega)$, 
 \[
\lambda^{\rm DN}(b A,\Omega) \leq   \lambda^{\rm DN}(b A,\cB).
 \]
\end{corollary}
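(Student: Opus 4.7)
The plan is to apply Theorem~\ref{thm:small-b} to both $\Omega$ and the disk $\cB$ with $|\cB|=|\Omega|$, which gives, as $b\to0$,
\[\lambda^{\rm DN}(bA,\Omega)=\frac{b^2}{|\partial\Omega|}\int_\Omega |A_\Omega|^2\,dx+o(b^2),\qquad \lambda^{\rm DN}(bA,\cB)=\frac{b^2}{|\partial\cB|}\int_\cB |A_\cB|^2\,dx+o(b^2),\]
and then reduce the statement to the purely geometric comparison
\[\frac{1}{|\partial\Omega|}\int_\Omega |A_\Omega|^2\,dx \leq \frac{1}{|\partial\cB|}\int_\cB |A_\cB|^2\,dx,\]
which should be strict unless $\Omega$ itself is a disk.

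The comparison of the numerators is the content that I attribute to \cite[Proposition 3.1]{FH}. The key point is the stream-function representation: writing $\psi$ for the solution of $\Delta\psi=1$ in $\Omega$ with $\psi=0$ on $\partial\Omega$, one has $A_\Omega=(-\partial_{x_2}\psi,\partial_{x_1}\psi)$ (which satisfies $\curl A_\Omega=1$, $\mathrm{div}\,A_\Omega=0$, and $\vec\nu\cdot A_\Omega=0$ because $\psi$ is constant on $\partial\Omega$). An integration by parts then gives
\[\int_\Omega|A_\Omega|^2\,dx=\int_\Omega|\nabla\psi|^2\,dx=-\int_\Omega\psi\,dx,\]
which is the torsional rigidity $T(\Omega)$. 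The classical Saint-Venant inequality yields $T(\Omega)\leq T(\cB)$ under the area constraint $|\Omega|=|\cB|$, with equality only when $\Omega$ is a disk. The denominators are handled by the geometric isoperimetric inequality, which gives $|\partial\Omega|\geq|\partial\cB|$, i.e.\ $\frac{1}{|\partial\Omega|}\leq\frac{1}{|\partial\cB|}$. Multiplying the two monotonicities yields the desired comparison.

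To finish the argument, if $\Omega$ is itself a disk the conclusion is trivial. Otherwise, the geometric comparison is strict, so there exists $\delta(\Omega)>0$ with
\[\frac{1}{|\partial\cB|}\int_\cB |A_\cB|^2\,dx-\frac{1}{|\partial\Omega|}\int_\Omega |A_\Omega|^2\,dx\geq\delta(\Omega),\]
and I choose $b_0(\Omega)>0$ small enough so that both $o(b^2)$ remainders in Theorem~\ref{thm:small-b} are bounded in absolute value by $\tfrac12\delta(\Omega)\,b^2$ for $|b|\leq b_0(\Omega)$. The asserted inequality $\lambda^{\rm DN}(bA,\Omega)\leq\lambda^{\rm DN}(bA,\cB)$ then follows by subtraction.

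The only nontrivial verification is that the form of \cite[Proposition 3.1]{FH} being invoked is indeed the Saint-Venant bound on $\int_\Omega|A_\Omega|^2\,dx$ (or the equivalent torsional-rigidity formulation); once this is in hand, the result follows mechanically by combining two classical isoperimetric-type inequalities and absorbing the $o(b^2)$ error from Theorem~\ref{thm:small-b}.
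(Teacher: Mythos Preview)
Your proof is correct and follows exactly the route sketched in the paper, which simply cites \cite[Proposition~3.1]{FH} together with the geometric isoperimetric inequality and Theorem~\ref{thm:small-b}. You have correctly unpacked the content of that proposition as the identification $\int_\Omega|A_\Omega|^2\,dx=T(\Omega)$ via the stream function and the Saint-Venant inequality, and your handling of the $o(b^2)$ remainders (distinguishing the trivial case $\Omega=\cB$ from the strict case) is the standard way to make the conclusion rigorous.
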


\begin{remark}
 When $\Omega$ is the disk of radius $R$, the magnetic potential $A_\Omega (x,y) = \half (-y,x)$ and the lowest eigenvalue of the D-t-N operator is explicitly given by (see \cite[Remark 6.1]{HeNi0} and \cite[Remark 5.10]{HeNi1}), 
	\begin{equation*}
	\lambda^{\rm DN}(bA,D(0,R))= \frac{bR}{2} \ \frac{ I_0' (\frac{bR^2}{4})} {I_0 (\frac{bR^2}{4}) }\,,
	\end{equation*}
	where 	
		\begin{equation}\label{BesselI0}
			I_0(z) =\sum_{k=0}^{+\infty}  \frac{z^{2k}}{2^{2k} (k!)^2}
		\end{equation} 
		is the modified Bessel function of the first kind of order $0$. So, in this case, we can give a more accurate asymptotics: 
		\begin{equation}\label{lambda0}
		\lambda^{\rm DN}(bA,D(0,R))= \frac{R^3 b^2}{16} + O(b^4).
		\end{equation}
\end{remark}

\subsection{Organization}

In Section~\ref{sec:pre},  we collect preliminaries to be used throughout the paper.  In Sections~\ref{sec:2D-ub} and \ref{sec:2D-lb},  we prove upper and lower bounds on the low-lying eigenvalues in the two dimensional case,  which yield  Theorems~\ref{conj:HN},  \ref{thm:main-2D} and \ref{conj4}.  For generic 2D domains, we study the splitting of the low-lying eigenvalues in Section~\ref{sec:splitting}. The proof of Theorem~\ref{thm:main-3D} in $3$D domains occupies Section~\ref{sec:3D}.  We  prove Theorem~\ref{thm:small-b}. in Section~\ref{sec:small-b}.  Finally,  there are two appendices on gauge transformations and a reference operator in the half-space,  respectively.

\section{Preliminaries}\label{sec:pre}

\subsection{The half-plane}
In the half-plane $\R_+^2=\{(t,x)\colon t>0\}$, for $b\in\R$, we introduce
\begin{equation}\label{eq:HN-m(b)}
E(b)=\inf_{u\in C^\infty(\overline{\R_+^2}), \|u\|_{\partial\mathbb R^2_+}=1} \|(-i\nabla- bA_{\tau})u\|^2_{\R_+^2}\,,
\end{equation}
where $A_{\tau}(t,x)=(0,t)$, for $(t,x)\in\R_+^2$. Notice that $A_{\tau}$ is tangent to the boundary of $\R_+^2$, and it generates a constant unit magnetic field with norm $1$.

\vspace{0.1cm}\noindent
The sign of $b$ is irrelevant by the invariance under the unitary transformation of complex conjugation, since
\begin{equation}\label{eq:E(-b)}
E(b)=E(-b).
\end{equation}
By scaling (see \cite[Section 3]{HeNi1}),  
 $E(b)$ and $\hat \alpha$ can be expressed as\footnote{In \cite{HeNi1}, the authors consider $m(b)=E(2b)$ and $\alpha=m(1)$.}
\begin{equation}\label{eq:HN-sec3}
E(b)=b^{1/2}E(1),\quad \hat \alpha=E(1).
\end{equation}

\subsection{Harmonic oscillator with Robin condition}\label{sec:1D-Robin}
The constant $\hat\alpha$ is also  related to the harmonic oscillator on the half-axis
\[-\frac{d^2}{dt^2} +(t-\xi)^2\quad\mbox{on }\R_+, \]
with Robin boundary condition at $t=0$,  (i.e with the boundary condition  $u'(0)=\gamma u(0)$). The lowest eigenvalue of this operator was studied in \cite{LuPa99, Ka06}, and  the  other eigenvalues are more recently  studied  in \cite{FLRV}.

\vspace{0.1cm}\noindent
We denote by $\|\cdot\|_2$ the $L^2$-norm on $\R_+$, and for  
$b>0$, $\gamma,\xi\in\R^2$, we introduce the lowest eigenvalue  of the Robin magnetic harmonic oscillator:
\[\mu(\gamma,\xi;b)=\inf_{\|f\|_2=1}\Bigl(\int_{0}^{+\infty}\bigl(|f'(t)|^2+(bt-\xi)^2|f(t)|^2\bigr)dt+\gamma|f(0)|^2\Bigr).\]
If we minimize over $\xi\in\R$, we set
\[\Theta(\gamma;b)=\inf_{\xi\in\R}\mu(\gamma,\xi;b),\]
and we have by scaling
\begin{equation}\label{eq:DG-scaling}
\Theta(\gamma;b)=b\Theta(b^{-1/2}\gamma;1),\quad \mu(\gamma,\xi;b)=b\mu(b^{-1/2}\gamma,b^{-1/2}\xi;1).
\end{equation}
This shows that it suffices to consider the case $b=1$, and we thus introduce
\begin{equation}\label{eq:1D-Robin*}
\Theta(\gamma):=\Theta(\gamma;1),\quad \mu(\gamma,\xi) :=
\mu(\gamma;\xi,1).
\end{equation}
 The Neumann case $\gamma=0$ corresponds to the de\,Gennes model, and  $$\Theta_0=\Theta(0)\,$$ is the so-called  de\,Gennes constant.   It is known that approximatively $\Theta_0\approx 0.590106$.

\vspace{0.1cm}\noindent
As function of $\xi \in ]-\infty, +\infty[$, the eigenvalue $\mu(\gamma,\xi)$ decreases from $+\infty$ until it reaches  a unique minimum  attained at 
\begin{equation}\label{eq:th-xi}
\xi(\gamma)=\sqrt{\Theta(\gamma)+\gamma^2},
\end{equation}
then  $\mu(\gamma,\xi)$ increases to $1$. The function $\R\ni\gamma\to\Theta(\gamma)\in(-\infty,1)$ is smooth and increasing \cite[Theorem~II.1 and Proposition~II.5]{Ka06}, and it has a unique zero $\gamma_0<0$,    (see Figure 1):
\begin{equation}
\Theta(\gamma_0)=0\,.
\end{equation}
Furthermore, the derivative of $\Theta(\cdot)$ is given in \cite[Proposition~II.5]{Ka06} as
\begin{equation}\label{eq:der-Theta}
\Theta'(\gamma)=|u_\gamma(0)|^2,
\end{equation}
where $u_\gamma$ is a  eigenfunction corresponding to the eigenvalue $\Theta(\gamma)$, normalized in $L^2(\R_+)$.

\vspace{0.1cm}\noindent
We can also express the relation between $\xi(\gamma)$ and $\Theta(\gamma)$ in terms of the parabolic cylindrical functions as in \cite[Eqs. (2.41)-(2.42)]{HeNi1}. In fact, one can prove that $\xi(\gamma)$ satisfies the implicit equation:
\[
\sqrt{2}\  \ D'_{\frac{ \Theta(\gamma) -1}{2}} ( -\sqrt{2}\ \xi(\gamma)) = \gamma \
D_{\frac{\Theta(\gamma) -1}{2}} ( -\sqrt{2}\ \xi(\gamma))\,.
\] 
Under the constraint (\ref{eq:th-xi}), or equivalently using the relation (\cite{MOS1966}, p. 327),
\begin{equation}
D'_\nu(z) - \frac{z}{2} D_\nu (z) + D_{\nu+1} (z)=0\,,
\end{equation}	
we see that $\xi(\gamma)$ is also solution of
\begin{equation} \label{otherelation}
-\sqrt{2} D_{\frac{\Theta(\gamma)+1}{2}}(-\sqrt{2} \xi(\gamma))	= (\gamma+ \xi(\gamma))  \ D_{\frac{\Theta(\gamma)-1}{2}}(-\sqrt{2} \xi(\gamma))\,.
\end{equation}
In light of \eqref{otherelation}, we have the following:
\vspace{0.1cm}
\begin{itemize}
    \item If we take  $\gamma=0$, we recover  \cite[Eqs (2.40)-(2.41)]{HeNi1}.
    \item If we take $\gamma=\gamma_0$, knowing that $\xi(\gamma_0)=-\gamma_0$ and $\Theta(\gamma_0)=0$, we derive obviously $D_{\half} (\sqrt{2}\,\gamma_0)=0$ in (\ref{otherelation}). Since $\hat\alpha=\alpha/\sqrt{2}$ and $-\alpha$ is the unique negative zero of $D_{\half}$, we finally get that 
    \begin{equation}
    \hat\alpha=-\gamma_0\,.
    \end{equation}
\end{itemize}
	\begin{figure}[h]
	\begin{center}
		\includegraphics[width=0.32\textwidth]{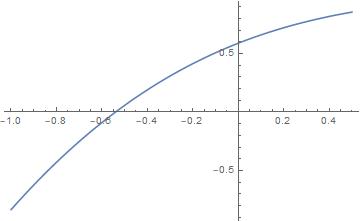}
	\end{center}
	\caption{Graph of the function $\Theta(\gamma)$. }
\end{figure}
\subsection{D-to-N on the half-axis}~\\
We can also derive the relation $\hat\alpha=-\gamma_0$, where $\gamma_0$ is the unique zero of $\Theta(\cdot)$, directly from the following characterization of $\hat\alpha=\alpha/\sqrt{2}$  (see \cite[Eq. (6.7)]{HeNi1})
\begin{equation}\label{eq:def-alpha-1D}
\hat\alpha=\inf_{\substack{f(0)\not=0\\\xi\in\R}}
\frac{\int_{0}^{+\infty}\bigl(|f'(t)|^2+(t-\xi)^2|f(t)|^2\bigr)dt}{|f(0)|^2}.
\end{equation}
This point of view is helpful to prove the following. 
\begin{proposition}\label{prop:link-1D}
There exists a function $f_*$ in the Schwartz space $\mathcal S(\overline{\R_+})$ such that
\begin{enumerate}[\rm (i)] 
\item $f_*>0$ on $\R_+$ and $f_*(0)=1$.
\item \[\int_{\R_+}\bigl(|f'_*(t)|^2+(t-\hat\alpha)^2|f_*(t)|^2 \bigr)dt=\hat\alpha.\]
\item 
\[
\int_{\R_+}(t-\hat\alpha)|f_*(t)|^2dt=0\,,\quad \int_{\R_+}(t-\hat\alpha)^2|f_*(t)|^2dt=\frac{\hat\alpha}4\,,\quad
\int_{\R_+}(t-\hat\alpha)^3|f_*(t)|^3dt=\frac16(1-2\hat\alpha^2).
\]
\item
\[  \int_{\R_+}f'_*(t)f_*(t)dt=-\frac12\,,\quad  \int_{\R_+}t|f'_*(t)|^2dt=\frac13+\frac{\hat\alpha^2}{12}\,.\]
\end{enumerate}
\end{proposition}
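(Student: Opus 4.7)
The plan is to identify $f_*$ with (a scalar multiple of) the ground state of the Robin harmonic oscillator of Section~\ref{sec:1D-Robin}. Since $\hat\alpha=-\gamma_0$ and $\xi(\gamma_0)=\hat\alpha$, the eigenfunction $u_{\gamma_0}$ of $-\partial_t^2+(t-\hat\alpha)^2$ on $\R_+$ associated with the Robin condition $u'(0)=\gamma_0\,u(0)$ and with eigenvalue $\Theta(\gamma_0)=0$ is a positive function in $\cS(\overline{\R_+})$: positivity is standard for the ground state, and Schwartz decay follows from the explicit representation via parabolic cylinder functions together with~\eqref{asymptDnu}. Set
\[
f_*:=u_{\gamma_0}/u_{\gamma_0}(0),
\]
so that (i) holds and
\[
-f_*''+(t-\hat\alpha)^2 f_*=0\ \mbox{on}\ \R_+,\qquad f_*(0)=1,\qquad f_*'(0)=\gamma_0=-\hat\alpha.
\]

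For (ii), integrate $\int_{\R_+}|f_*'|^2\,dt$ by parts and substitute the equation; the boundary contribution at $t=0$ equals $-f_*'(0)f_*(0)=\hat\alpha$, yielding the energy identity. The first identity in (iii) comes from the variational characterization~\eqref{eq:def-alpha-1D}: for each $\xi$ the minimizer $f_\xi$ normalized by $f_\xi(0)=1$ defines a smooth function $\lambda(\xi)$ whose derivative, computed by a Feynman--Hellmann / envelope argument, is $-2\int(t-\xi)f_\xi^2\,dt$; since the infimum is attained at $\xi=\hat\alpha$ with $f_{\hat\alpha}=f_*$, this derivative vanishes there.

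The remaining moments follow from virial-type identities obtained by testing the zero-mode equation against well-chosen multipliers. Multiplying by $tf_*'$ and integrating by parts, together with the first identity in (iii), gives $\int(f_*')^2\,dt=3\int(t-\hat\alpha)^2 f_*^2\,dt$; combined with (ii) this produces $\int(t-\hat\alpha)^2 f_*^2\,dt=\hat\alpha/4$ and $\int(f_*')^2\,dt=3\hat\alpha/4$. For the third identity in (iii), two independent weighted versions of the equation are needed: testing against $(t-\hat\alpha)^2 f_*'$ yields
\[
\int(t-\hat\alpha)(f_*')^2\,dt=2\int(t-\hat\alpha)^3 f_*^2\,dt,
\]
while testing against $(t-\hat\alpha)f_*$ and using the boundary data $f_*(0)=1$, $f_*'(0)=-\hat\alpha$ gives
\[
\int(t-\hat\alpha)(f_*')^2\,dt+\int(t-\hat\alpha)^3 f_*^2\,dt=\tfrac12-\hat\alpha^2;
\]
the resulting $2\times 2$ linear system yields $\int(t-\hat\alpha)^3 f_*^2\,dt=(1-2\hat\alpha^2)/6$ and, as a by-product, $\int(t-\hat\alpha)(f_*')^2\,dt=(1-2\hat\alpha^2)/3$.

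Finally, (iv) follows directly: $\int f_*'f_*\,dt=\tfrac12[f_*^2]_0^\infty=-\tfrac12$, and splitting $t=(t-\hat\alpha)+\hat\alpha$ yields
\[
\int t(f_*')^2\,dt=\tfrac{1-2\hat\alpha^2}{3}+\hat\alpha\cdot\tfrac{3\hat\alpha}{4}=\tfrac13+\tfrac{\hat\alpha^2}{12}.
\]
The only mildly delicate point is the justification of the boundary terms at $+\infty$ in all the integrations by parts; this is routine given the Schwartz decay of $f_*$, and the same decay allows to dismiss the question of exchanging $\partial_\xi$ with the integral in the envelope argument.
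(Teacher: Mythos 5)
Your proposal is correct: all the identities check out (I verified the three multiplier computations and the resulting $2\times2$ system, which give $\int(t-\hat\alpha)^3f_*^2\,dt=\frac16(1-2\hat\alpha^2)$ and $\int(t-\hat\alpha)|f_*'|^2\,dt=\frac13(1-2\hat\alpha^2)$, hence $\int t|f_*'|^2\,dt=\frac13+\frac{\hat\alpha^2}{12}$), and the overall strategy is the same as the paper's: identify $f_*$ with the normalized Robin zero mode, so that $-f_*''+(t-\hat\alpha)^2f_*=0$, $f_*(0)=1$, $f_*'(0)=-\hat\alpha$, get (ii) and $\int f_*'f_*=-\frac12$ by integration by parts, and obtain the moments from multiplier identities. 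The difference is in how the moment identities are produced: the paper follows Bernoff--Sternberg and uses the single family $v=2pf_*'-p'f_*$ with $p=(t-\hat\alpha)^k$, $k=0,1,2$, which yields all three identities of (iii) systematically from one boundary formula, whereas you obtain the vanishing first moment from the variational characterization \eqref{eq:def-alpha-1D} via a Feynman--Hellmann/envelope argument and the remaining moments from the ad hoc multipliers $tf_*'$, $(t-\hat\alpha)^2f_*'$ and $(t-\hat\alpha)f_*$. Your route avoids invoking the Bernoff--Sternberg computation and gives $\int(t-\hat\alpha)|f_*'|^2\,dt$ explicitly as a by-product; its only mildly weak point is the appeal to smoothness of $\xi\mapsto\lambda(\xi)$, which is unnecessary: since by (ii) the pair $(f_*,\hat\alpha)$ attains the infimum in \eqref{eq:def-alpha-1D}, the quadratic function $\xi\mapsto\int\bigl(|f_*'|^2+(t-\xi)^2f_*^2\bigr)dt$ (with $f_*$ frozen) is minimal at $\xi=\hat\alpha$, and setting its derivative to zero gives $\int(t-\hat\alpha)f_*^2\,dt=0$ directly. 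Note finally that the exponent $3$ on $|f_*|$ in the third identity of (iii) as printed in the statement is a typo for $|f_*|^2$; your computation proves the intended (and used) version.
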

\begin{proof}
Knowing that $\hat\alpha=-\gamma_0$ and $\Theta(\gamma_0)=0$, we infer from \eqref{eq:th-xi} that $\xi( \gamma_0)=\hat\alpha$ and  $\mu(-\hat\alpha,\hat\alpha)=0$. Choose a positive and normalized (in $L^2(\R_+)$) ground state $\varphi_*$ of $\mu(-\hat\alpha,\hat\alpha)$ and put $f_*=(1/|\varphi_*(0)|)\varphi_*$. Then $f_*$ satisfies
\[-f''_*+(t-\hat\alpha)^2f_*=0\mbox{ on }\R_+,\quad f'_*(0)=-\hat\alpha,\quad f_*(0)=1.\]
We get (ii) and $\int_{\R_+}f_*'(t)f_*(t)dt=-\frac12$ by integration by parts. For the  identities in (iii),  we reproduce the calculations in \cite{BS}.
We notice that for  
\[v=2pf_*'-p'f_*\,, \,p(t)=(t-\hat\alpha)^k \mbox{  and } k\in\{0,1,2\}\,,\]
 we have
\[\bigl(-\partial_t^2+(t-\hat\alpha)^2\bigr)v=\bigl(-4(t-\hat\alpha)^2p'-4(t-\hat\alpha)p\bigr)f_*,\]
and we use integration by parts to write
\[\int_{\R_+}\bigl(-4(t-\hat\alpha)^2p'-4(t-\hat\alpha)p\bigr)|f_*(t)|^2dt=\bigl(v'(0)+\hat\alpha v(0)\bigr)f_*(0).\]
The formulas of (iii) follow by considering successively $k=0,1,2$.\\
Finally, the last identity in (iv) is obtained by integration by parts
\[\int_{\R_+} t|f_*'(t)|^2dt=-\int_{\R_+}f_*(t)f_*'(t)dt-\int_{\R_+}tf_*(t)f_*''(t)dt,\]
and by using $f''_*(t)=(t-\hat\alpha)^2f_*(t)$ and the identities in (iii).
\end{proof} 

\begin{remark}\label{rem:0moment}
Although not needed in the proof,  notice that we have
\[
f_*(t)= \frac{D_{-1/2}( \sqrt{2}t - \alpha)}{D_{-1/2}( - \alpha)} \,.
\]
Note that the parabolic cylinder function $D_{-\half}(z)$ is also related with the usual Hankel function:  $D_{-\half}(z) = \sqrt{\frac{z}{2\pi}}\ K_{\frac14}  (\frac{z^2}{4})$. Using Mathematica, we get numerically 
\[
\int_0^{+\infty} f_*(t)^2 dt \approx 0.6861814388\,.
\]
\end{remark}
\subsection{A useful identity}
Recall that $W^{1,\infty}(\Omega;\R)$ is the space of $L^\infty(\Omega)$ real-valued functions $f$ such that $\nabla f\in L^\infty(\Omega)$.

In various proofs, we will use the now standard identity given by  the following proposition.
\begin{proposition}\label{prop:identity}
    Suppose that $A\in H^1(\Omega;\R^2)$. If $u\in H^1(\Omega)$ and $w\in W^{1,\infty}(\Omega,\mathbb R)$, then
    \[
\Re \int_\Omega(-i\nabla - A)u\cdot \overline{(-i\nabla - A) (w^2 u)}\,dx = 
\int_\Omega |(-i\nabla - A) (wu)|^2\,dx-\int_\Omega|\nabla w|^2|u|^2\,dx.\]
\end{proposition}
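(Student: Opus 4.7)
The plan is to establish the identity \emph{pointwise} almost everywhere on $\Omega$, after which integration immediately yields the stated formula. Writing $P := -i\nabla - A$, I would first record the magnetic Leibniz rule
\[
P(\phi v) = \phi\, Pv - iv\,\nabla \phi,
\]
valid for any real-valued $\phi \in W^{1,\infty}(\Omega,\R)$ and any $v \in H^1(\Omega)$. Applying this with $(\phi,v) = (w,u)$ and with $(\phi,v) = (w^2,u)$ produces
\[
P(wu) = w\, Pu - iu\,\nabla w, \qquad P(w^2 u) = w^2\, Pu - 2iwu\,\nabla w,
\]
the gradient corrections being purely imaginary multiples of the real vector $\nabla w$.

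Next I would substitute these formulas into both sides of the desired identity and compare. Expanding $|P(wu)|^2 = P(wu) \cdot \overline{P(wu)}$ produces three kinds of terms: the main contribution $w^2 |Pu|^2$, the ``gradient squared'' contribution $|u|^2|\nabla w|^2$, and two cross terms whose sum collapses to $-2w\,\nabla w \cdot \mathsf{Im}(\overline{u}\, Pu)$, using only that $w$ and $\nabla w$ are real. The left-hand side $\Re\bigl[Pu \cdot \overline{P(w^2 u)}\bigr]$ expands as $w^2|Pu|^2$ plus the real part of the single purely imaginary cross term $2iw\overline{u}\,\nabla w \cdot Pu$, which equals $-2w\,\nabla w \cdot \mathsf{Im}(\overline{u}\, Pu)$ as well, but with \emph{no} $|u|^2|\nabla w|^2$ contribution. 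Matching the two expressions recovers the claimed identity pointwise, and integration over $\Omega$ finishes the proof.

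The only background point to check is integrability of every term. Since $w, \nabla w \in L^\infty(\Omega)$, $u \in H^1(\Omega)$, and $A \in H^1(\Omega;\R^2)$, the Sobolev embedding $H^1(\Omega) \hookrightarrow L^p(\Omega)$ for all finite $p$ in dimension $2$ ensures that $Au \in L^2(\Omega)$, so that $Pu$, $P(wu)$, and $P(w^2 u)$ all lie in $L^2(\Omega;\C^2)$ and every summand of the identity is in $L^1(\Omega)$. I do not foresee any genuine obstacle in this proof: the argument is essentially a one-line pointwise computation, the only care needed being the bookkeeping of signs and conjugates in the cross terms.
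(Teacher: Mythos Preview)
Your proof is correct and follows essentially the same route as the paper: a pointwise algebraic computation using the magnetic Leibniz rule $P(\phi v)=\phi\,Pv - iv\,\nabla\phi$. The only cosmetic difference is that the paper groups the two identities as $P(w^2u)=w\,P(wu)-iwu\,\nabla w$ and $w\,Pu=P(wu)+iu\,\nabla w$, which makes the term $|P(wu)|^2$ appear directly without a separate expansion, but the substance is identical.
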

\begin{proof}
This is a direct consequence of the two identities
\[(-i\nabla-A)(w^2u)=w(-i\nabla-A)(wu)  -i wu \nabla w,\quad w(-i\nabla-A)u=(-i\nabla-A)(wu)+iu\nabla w. \]
\end{proof}
\subsection{Gauge invariance}

If $U\subset\Omega$ is an open set and $\phi:U\to\R$ is in $H^1(U)$, then for any function $u\in H^1(U)$, we have the following identities on $U$,
\[|(-i\nabla-bA)u|^2=|(-i\nabla-bA')v|^2,\quad |u|^2=|v|^2,\quad \curl A= \curl A',\]
where $v=u \,e^{-ib\phi}$ and $A'=A-\nabla\phi$. This amounts to a local gauge transformation. 

\vspace{0.1cm}\noindent
If $U$ is simply connected and we know that the vector potentials $A$ and $A'$ have the same curl on $U$, then we can find a function $\phi$ defined on $U$ such that $A'=A-\nabla\phi$.

\vspace{0.1cm}\noindent
We will use local gauge transformations to transform a given vector potential to a more  convenient one. For instance, if $B=d A$ is not constant, we describe below how we can locally approximate $A$ by a vector potential with constant magnetic field,  up to a local gauge transformation.
\begin{proposition}\label{prop:gauge-gen}
Suppose that $\partial\Omega$ is $C^1$, $A\in H^1(\Omega;\R^2)$ and $B=\curl A$ is $C^1$ on $\overline{\Omega}$. There exists a constant $C>0$ such that, for every $p\in \overline{\Omega}$, there exists a function $\phi_p:U_p\to\R$ such that
\[|A(x)-\nabla\phi_p-B(p)A_0(x-p)|\leq C\, |x-p|^2\quad (x\in U_p), \]
where $A_0(x)=\frac12(-x_2,x_1)$, and $U_p=\overline{\Omega}\cap \cB(p,\delta)$ for some $\delta>0$.
\end{proposition}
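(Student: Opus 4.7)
The plan is to construct $\phi_p$ by a Poincar\'e-type homotopy formula applied to the 1-form $A-B(p)A_0(\cdot-p)$, which has \emph{small} exterior derivative near $p$. Since a full ball centered at $p$ is star-shaped with respect to $p$, I first extend $A$ from $U_p$ to $B(p,\delta)$ and then apply the homotopy on the ball.

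\textbf{Step 1 (local extension).} Since $B=\curl A\in C^1(\overline\Omega)$, I extend it to $\tilde B\in C^1_c(\R^2)$, and fix any $C^1$ vector field $A^*$ on $\R^2$ with $\curl A^*=\tilde B$ (for instance, $A^*(x_1,x_2)=(0,\int_0^{x_1}\tilde B(s,x_2)\,ds)$). By compactness of $\overline\Omega$ and the uniform $C^1$ regularity of $\partial\Omega$, one can choose $\delta>0$, uniform in $p\in\overline\Omega$, such that $U_p=\overline\Omega\cap B(p,\delta)$ is simply connected. Then on $U_p$, $A-A^*$ is a closed 1-form, so $A-A^*=\nabla\psi_p$ for some function $\psi_p$. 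Using a bounded $C^1$ extension operator from $U_p$ to $B(p,\delta)$ (available since $\partial\Omega$ is $C^1$), I extend $\psi_p$ to $\tilde\psi_p$ and set $\hat A:=A^*+\nabla\tilde\psi_p$; then $\hat A$ is defined on the whole ball $B(p,\delta)$, satisfies $\curl \hat A=\tilde B$, and agrees with $A$ on $U_p$.

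\textbf{Step 2 (Poincar\'e homotopy).} On the convex ball $B(p,\delta)$, I set $\hat\alpha:=\hat A-B(p)A_0(\cdot-p)$, so that $\curl\hat\alpha=\tilde B-B(p)$, and define
\[\phi_p(x):=\int_0^1 \hat\alpha(p+t(x-p))\cdot(x-p)\,dt.\]
Differentiating under the integral sign, using the identities
\[\tfrac{d}{dt}\hat\alpha_j(p+t(x-p))=\sum_k\partial_k\hat\alpha_j(p+t(x-p))(x_k-p_k),\qquad \partial_j\hat\alpha_k-\partial_k\hat\alpha_j=\epsilon_{jk}(\tilde B-B(p)),\]
with $\epsilon_{12}=-\epsilon_{21}=1$, and integrating by parts in $t$, one obtains
\[\nabla\phi_p(x)-\hat\alpha(x)=R(x),\qquad R_j(x)=\int_0^1 t\sum_k\epsilon_{jk}\bigl(\tilde B(p+t(x-p))-B(p)\bigr)(x_k-p_k)\,dt.\]

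\textbf{Step 3 (estimate).} From $|\tilde B(p+t(x-p))-B(p)|\leq \|\nabla\tilde B\|_\infty\, t\,|x-p|$, I obtain $|R(x)|\leq \tfrac{1}{3}\|\nabla\tilde B\|_\infty\,|x-p|^2$ on $B(p,\delta)$. Since $\hat A=A$ on $U_p$, this yields $|A(x)-\nabla\phi_p(x)-B(p)A_0(x-p)|\leq C|x-p|^2$ on $U_p$ with $C=\tfrac{1}{3}\|\nabla\tilde B\|_\infty$, uniform in $p\in\overline\Omega$.

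The main obstacle is Step 1: one must produce $\hat A$ as a sufficiently regular extension of $A$ to the full ball, with the prescribed curl. This hinges on working in a local gauge in which $A$ itself is regular enough (say $C^0$) for the pointwise estimate in Step 3 to be meaningful, which may always be arranged locally using the $C^1$ control on $B$ (for instance via a local Coulomb-type gauge). Once $\hat A$ is in hand, the Poincar\'e computation in Steps 2--3 is a direct calculation on the convex set $B(p,\delta)$.
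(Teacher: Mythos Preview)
Your proof is correct and rests on the same idea as the paper—a Coulomb/Poincar\'e gauge centered at $p$—but your Step~1 is a detour that creates exactly the regularity worry you flag at the end. The paper's route is shorter: after extending $A$ (and hence $B$) to all of $\R^2$, it defines directly
\[A'(x)=2\Bigl(\int_0^1 B\bigl(p+s(x-p)\bigr)\,s\,ds\Bigr)A_0(x-p)\]
on $\cB(p,\delta)$. This $A'$ is automatically as regular as $B$, satisfies $\curl A'=B=\curl A$, and hence $A=A'-\nabla\phi$ on the simply connected ball; the bound $|A'(x)-B(p)A_0(x-p)|\le C|x-p|^2$ is then a one-line Taylor estimate using the $C^1$ regularity of $B$. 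In your language this amounts to taking $A^*$ to be $A'$ itself (recentered at each $p$), so that your $\psi_p$ already is the desired $\phi_p$ and there is no need for the extension $\tilde\psi_p$, the auxiliary $\hat A$, or the explicit homotopy integral in Steps~2--3. Your computation of the remainder $R$ is in fact nothing but the verification that $A'-B(p)A_0(\cdot-p)=-R$.
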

\noindent
A stronger version of Proposition~\ref{prop:gauge-gen} is given in \cite[Lemma~3.1]{LuPa0}, when the point $p$ is in $\Omega$.
\begin{proof}[Proof of Proposition~\ref{prop:gauge-gen}]
We can extend $A$ to a compactly supported vector field in  $H^1(\R^2;\R^2)$, and we obtain an extension of $B=\curl A$ to all of $\R^2$ as well. 
In a disk $\cB(p,\delta)$ centered  at $p$, consider the Coulomb gauge
\[A'(x)=2\left(\int_{0}^1 B\bigl(p+s(x-p)\bigr)sds\right)A_0(x-p).\]
Noticing that $\curl A=\curl A'$,   we can write $A=A'-\nabla\phi$ in $\cB(p,\delta)$.  Since $B$ is $C^1$ on $\overline{\Omega}$, we get
\(A'(x)=B(p)A_0(x-p)+\mathcal O (|x-p|^2)\), \(x\in U_p\).
\end{proof}
\subsection{Parallel coordinates in two dimensions}

In the course of the proofs, we will often deal with functions supported in a neighborhood of a boundary point of $\Omega$. In such cases, it is convenient to carry out the computations in \emph{parallel coordinates} that we introduce below.

\vspace{0.1cm}\noindent
Pick a connected component $\Gamma$ of the boundary of $\Omega$, and denote by $L$ its length. By means of the arc-length parametrization, we can identify $\Gamma$ and $\R/L\Z$, where  $s\in \R/L\Z$ is the curvilinear coordinate of a point $x\in\Gamma$. We choose $t_0>0$ sufficiently small such that 
\[\Omega_0:=\{x\in\Omega\colon \mathrm{dist}(x,\Gamma)<t_0\}\] is diffeomorphic to $(0,t_0)\times (\mathbb R/L\mathbb Z)$. More precisely, we introduce the diffeomorphism 
\begin{equation}\label{eq:def-Phi_0}
\Phi_0:(0,t_0)\times (\mathbb R/L\mathbb Z)\to\Omega_0,
\end{equation}
such that, for $x=\Phi_0(t,s)\in\Omega_0$, $t=\mathrm{dist}(x,\Gamma)$ and $s$  is the curvilinear coordinate of $p(x)$, the the orthogonal projection of $x$ on $\Gamma$.  Thus, $t$ denotes the normal distance to $\Gamma$, and $s$ measures the tangential distance along $\Gamma$.

\vspace{0.1cm}\noindent
Note that, if $\Omega$ is simply connected, the boundary consists of a single connected component, $\Gamma=\partial\Omega$. 
\section{Upper bounds in two dimensional domains}\label{sec:2D-ub}
\subsection{Non vanishing magnetic field}
\begin{proposition}\label{prop:ub-2D}
 	Let $\Omega$ be a regular domain in $\mathbb R^2$ and $A$ be a vector potential with a magnetic field $B=\curl A$ that vanishes nowhere  on $\partial\Omega$.
    Suppose that $B$ is $C^1$ on a neighborhood of $\partial\Omega$. Then, for every fixed $j\geq 1$, the $j$'th eigenvalue value  of the D-to-N map $\Lambda_{b A}$ satisfies
 \begin{equation*}
 \lambda^{\rm DN}_j(b A,\Omega) \leq  \Bigl(\inf_{x\in\partial\Omega}|B(x)|\Bigr)^{\frac12}\hat\alpha\, b^{\frac{1}{2}}+  \mathcal O(b^{1/3})\quad , \quad b\to+\infty \,.
 \end{equation*}
 \end{proposition}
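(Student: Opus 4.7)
The plan is to apply the min-max formula \eqref{eq:def-var-DN-j} with a $j$-dimensional space of trial functions having pairwise disjoint supports, patterned after the half-plane ground state at the effective field strength $\beta:=\inf_{x\in\partial\Omega}|B(x)|$. I would first fix a point $x_0\in\partial\Omega$ realizing this infimum and pass to the parallel coordinates $(t,s)$ on a tubular neighborhood of the boundary component containing $x_0$, as set up in Section~\ref{sec:pre}. By Proposition~\ref{prop:gauge-gen}, after multiplying by a phase $e^{ib\phi(x)}$ one may assume $A(x)=B(x_0)A_0(x-x_0)+\mathcal{O}(|x-x_0|^2)$ on a neighborhood of $x_0$, and a further linear gauge change aligns the leading part with the tangential form $\beta(0,t)$ in the parallel frame, matching exactly the half-plane model of magnitude $\beta$.

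The quasi-mode is inspired by Proposition~\ref{prop:link-1D}: the product $\Psi(\tau,\sigma):=f_*(\tau)e^{i\hat\alpha\sigma}$ is a bounded generalized zero-mode of the half-plane magnetic Laplacian $(-i\nabla-A_\tau)^2$ with $|\Psi(0,\sigma)|=1$, so it realizes $\hat\alpha$ as a D-to-N eigenvalue in scaled variables. For $k=1,\dots,j$, pick points $s_1,\dots,s_j$ on the boundary component close to $x_0$, mutually separated by more than twice the tangential cutoff width $\eta_b$, and set
\[
u_k(x):=\chi(t/\delta_b)\,\chi_1\bigl((s-s_k)/\eta_b\bigr)\,f_*\bigl((\beta b)^{1/2}t\bigr)\,e^{i\hat\alpha(\beta b)^{1/2}(s-s_k)}\,e^{ib\phi(x)},
\]
with smooth cutoffs $\chi,\chi_1$ equal to $1$ near the origin. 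Their pairwise disjoint supports make $\{u_k\}_{k=1}^j$ orthogonal both in $L^2(\Omega)$ and in $L^2(\partial\Omega)$ and decouple the cross terms in the numerator of the Rayleigh quotient, so $\lambda^{\rm DN}_j(bA,\Omega)\leq\max_k R(u_k)$ with $R(u_k):=\|(-i\nabla-bA)u_k\|_\Omega^2/\|u_k\|_{\partial\Omega}^2$.

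It remains to compute $R(u_k)$ in the rescaled variables $\tau=(\beta b)^{1/2}t$, $\sigma=(\beta b)^{1/2}(s-s_k)$, using Proposition~\ref{prop:identity} to peel off the cutoffs. The leading integrand in the numerator is $|f_*'|^2+(\tau-\hat\alpha)^2|f_*|^2$, whose integral equals $\hat\alpha$ by Proposition~\ref{prop:link-1D}(ii); this produces the announced main term $\hat\alpha\beta^{1/2}b^{1/2}$. The remainder splits into three kinds of terms: (a) commutators with $\chi$ and $\chi_1$, negligible thanks to the Gaussian decay of $f_*$ and the $1/\eta_b^2$ size of $|\nabla\chi_1|^2$; (b) curvature corrections to the metric and to the vector potential in parallel coordinates, only $\mathcal{O}(1)$ because $f_*$ concentrates on the scale $(\beta b)^{-1/2}$ in $t$; and (c) the quadratic gauge remainder $\mathcal{O}(|x-x_0|^2)$ from Proposition~\ref{prop:gauge-gen}. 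Choosing $\delta_b=\eta_b=b^{-1/3}$ balances these at $\mathcal{O}(b^{1/3})$. The main obstacle will be to control (c) sharply, since the naive Cauchy--Schwarz swallows the main term: one must apply the $\varepsilon$--refined inequality $|a+b|^2\leq(1+\varepsilon)|a|^2+(1+\varepsilon^{-1})|b|^2$ and optimize $\varepsilon$, obtaining a contribution $\mathcal{O}(b\eta_b^2)=\mathcal{O}(b^{1/3})$ uniformly in $k=1,\dots,j$.
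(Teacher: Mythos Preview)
Your proposal is correct and follows essentially the same line as the paper: localize near a boundary point realizing $\inf_{\partial\Omega}|B|$, pass to parallel coordinates after the gauge linearization of Proposition~\ref{prop:gauge-gen}, build quasi-modes from $f_*$ and the half-plane phase, and balance the cutoff scale at $b^{-1/3}$ via the $\varepsilon$-inequality. The only cosmetic difference is how the $j$-dimensional trial space is manufactured: the paper keeps a single center $p$ and uses $j$ orthonormal profiles $\chi_1,\dots,\chi_j\in C_c^\infty(-1,1)$ with pairwise disjoint supports (all rescaled by $b^{\rho}$ around $p$), whereas you translate one profile to $j$ nearby points $s_1,\dots,s_j$. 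Both constructions give disjoint supports inside a window of size $\mathcal O(jb^{-1/3})$ around the minimizing point, so the gauge remainder and curvature errors are controlled identically, and the min--max argument goes through the same way. Two small remarks on your bookkeeping: the tangential commutator with $\chi_1$ is not ``negligible'' in the sense of $o(1)$ --- it contributes $\mathcal O(\eta_b^{-2}b^{-1/2})=\mathcal O(b^{1/6})$ to the Rayleigh quotient, which is however harmlessly absorbed in $\mathcal O(b^{1/3})$; and the phrase ``contribution $\mathcal O(b\eta_b^2)$'' for the gauge remainder is only a heuristic for the size of the perturbation of $bA$, the actual error in the quadratic form being $\varepsilon b^{1/2}+\varepsilon^{-1}b^{3/2}\eta_b^4$, which optimizes to $\mathcal O(b^{1/3})$ at $\varepsilon=b^{-1/6}$ once $\eta_b=b^{-1/3}$.
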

\begin{proof}
Since the D-to-N operators with vector potentials $A$ and $-A$ are unitarily equivalent, it suffices to consider  $B>0$.\medskip

\noindent
{\bf Step 1. The test function.}\medskip

Choose a point $p\in\partial\Omega$ such that $B(p)=\min_{x\in\partial\Omega}B(x)$. By Proposition~\ref{prop:gauge-gen}, we can assume that, modulo a (local) gauge transformation, that $A$ satisfies,
\[A(x)=B(p)A_0(x-p)+\mathcal O (|x-p|^2).\]
Let $\Gamma$ be the connected component of $\partial\Omega$ that contains $p$. Working in parallel coordinates $(t,s)\in(0,t_0)\times(\R/L\Z)$, centered at the point $p$, we introduce the functions
\[v(t,s)=u(t,s)\cdot e^{-ib\varphi(s,t)},\quad  u(t,s)=
 b^{\rho/2}\chi_1(b^{\rho} s) \cdot\chi(b^{\rho}t)\cdot f_*\bigl(B(p)^{1/2}b^{1/2}t\bigr)\cdot e^{-i(B(p)b)^{1/2}\hat\alpha s}, \]
where $\chi_1\in C_c^{\infty}(-1,1)$ is normalized in $L^2(\R)$, $\chi$ is a smooth cut-off function, equal to $1$ in a neighborhood of $0$, $\rho\in(0,\frac12)$, and $f_*$ is the function introduced in Proposition~\ref{prop:link-1D}. The function $\varphi$ is real-valued and amounts to a (local) gauge transformation in the $(t,s)$ coordinates (its choice will be explained below). 

\vspace{0.1cm}\noindent
With $\Phi_0$ the diffeomorphism introduced in \eqref{eq:def-Phi_0}, $v\circ\Phi_0$ defines a function in $H^1(\Omega)$, which will be the test function with which we will work. Since $\chi_1$ is nomalized in $L^2(\R)$, we get that  the restriction of 
$v\circ\Phi_0$  to $\partial \Omega$ is normalized in $L^2(\partial\Omega)$.
\vspace{0.1cm}\noindent
Put $A^{\rm lin}(x):=B(p)A_0(x-p)$. We choose the function $\varphi$ such that (see Appendix~\ref{appendix.A}):
\begin{multline}\label{eq:ub-qf}
\|(-i\nabla-bA^{\rm lin})v\circ\Phi_0\|^2_\Omega=\\
\int_{\R_+^2}\Bigl(|\partial_tu|^2+(1-tk(s))^{-2}\bigl|\bigl(-i\partial_s+B(p)b(t-\mbox{$\frac12$}t^2k(s)\bigr)u\bigr|^2 \Bigr)(1-tk(s))dtds,
\end{multline}
where $k(s)$ is the curvature at the point of curvilinear coordinate $s$.\medskip

\noindent
{\bf Step 2. Some estimates.}\medskip

Our choice of the  function $u$ yields
\[ \int_{\R^2_+}|u(t,s)|^2dtds\leq \bigl(B(p)b\bigr)^{-1/2}\int_{\R_+}|f_*(\tau)|^2d\tau.\]
With $\epsilon\in(0,1)$, we have by the Cauchy-Schwarz inequality,
\[\int_{\R^2_+}\bigl|\bigl(-i\partial_s+B(p)b(t-\mbox{$\frac12$}t^2k(s)\bigr)u\bigr|^2dtds
\leq (1+\epsilon)\int_{\R^2_+}\bigl|(-i\partial_s+B(p)bt)u\bigr|^2dtds+\mathcal O(\epsilon^{-1}b^{\frac32-4\rho}).\]
Moreover, a routine calculation yields
\[\int_{\R^2_+}\bigl|(-i\partial_s+B(p)bt)u\bigr|^2dtds\leq (1+\epsilon)\bigl(B(p)b\bigr)^{1/2}\int_{\R_+}(\tau-\hat\alpha)^2|f_*(\tau)|^2d\tau+\mathcal O(\epsilon^{-1}b^{-\frac12+2\rho}). \]
Using that $f_*$ is a   Schwartz function, we have\footnote{We write $F=\mathcal O(b^{-\infty})$ if, for any $n\in\mathbb N$, $F=\mathcal O(b^{-n})$ as $b\to+\infty$.}
\[\int_{\R^2_+}|\partial_tu|^2dtds=\bigl(B(p)b\bigr)^{1/2}\int_{\R_+^2} |f'_*(\tau)|^2d\tau+\mathcal O(b^{-\infty}).\]
Returning back to \eqref{eq:ub-qf}, 
 we bound $1-tk(s)$ from above by $1+\mathcal O(b^{-\rho})$. Using ii. in Proposition~\ref{prop:link-1D},  and choosing $\epsilon=b^{-\rho}$ and $\rho=1/3$, 
 we get 
\[\|(-i\nabla-bA^{\rm lin})v\circ\Phi_0\|^2_\Omega\leq \bigl(B(p)b\bigr)^{1/2}\hat\alpha+\mathcal O(b^{1/3}).
\]

\vspace{0.1cm}\noindent
{\bf Step 3. Finishing the proof for the lowest eigenvalue.}\medskip

Using the Cauchy-Schwarz inequality, one has 
\begin{align*}
    \|(-i\nabla-bA)v\circ\Phi_0\|^2_\Omega&\leq (1+b^{-1/6})\|(-i\nabla-bA^{\rm lin})v\circ\Phi_0\|^2_\Omega+
    \mathcal O\bigl(b^{1/6}\|b(A-A^{\rm lin})v\circ\Phi_0\|^2_\Omega \bigr)\\
    &\leq \bigl(B(p)b\bigr)^{1/2}\hat\alpha+\mathcal O(b^{1/3}).
\end{align*}
Since $v\circ\Phi_0^{-1}$ is normalized in $L^2(\partial\Omega)$, and $B(p)$ is the minimum of $B$ on the boundary, this finishes  the proof of the proposition for $j=1$, thanks to the characterization of $\lambda^{\rm DN}_1(bA,\Omega)$ in \eqref{eq:def-var-DN}.\medskip

\vspace{0.1cm}\noindent
{\bf Step 4. Finishing the proof for the $j$'th eigenvalue, $j>1$.}\medskip

Consider $\chi_1,,\cdots,\chi_j\in C_c^\infty(-1,1)$ that constitute an orthonormal set in $L^2(\R)$ and such that their supports are pairwise disjoint.  We slightly modify the test function by introducing, for every $j\in\N$,
\[v_j(t,s)=u_j(t,s)\cdot e^{-ib\varphi(s,t)},\quad  u_j(t,s)=b^{\rho/2}\chi_{j}(b^{\rho}s)\cdot\chi(b^{\rho}t)\cdot f_*\bigl(B(p)^{1/2}b^{1/2}t\bigr)\cdot e^{-i(B(p)b)^{1/2}\hat\alpha s}. \]
Then, with $\rho=\frac 13$,  and $q\in\{1,\cdots,j\}$,
\[
    \|(-i\nabla-bA)v_q\circ\Phi_0\|^2_\Omega\leq \bigl(B(p)b\bigr)^{1/2}\hat\alpha+\mathcal O(b^{1/3}),
\]
and for $q'\not=q$,
\[
  \langle (-i\nabla-bA)v_q\circ\Phi_0,(-i\nabla-bA)v_{q'}\circ\Phi_0\rangle_\Omega=0.
\]
Let $M_j=\mathrm{Span}(v_1\circ\Phi_0,\cdots,v_j\circ\Phi_0)$ and $M_j^{\partial \Omega}$ be the space of its restriction to $\partial \Omega$. We observe 
that the  $v_p(0,\cdot)$ ($p=1,\cdots,j$) form an orthonormal basis of 
$M_j^{\partial \Omega}$. 
Hence, $\mathrm{dim}(M_j)=j$,  and 
we  conclude by using the variational formulation in \eqref{eq:def-var-DN-j},
\[ \lambda_j(bA,\Omega)\leq \max_{g\in M_j} \frac{\|(-i\nabla-bA)g\|^2_\Omega}{\|g\|_{\partial\Omega}^2}\leq \bigl(B(p)b\bigr)^{1/2}\hat\alpha+\mathcal O(b^{1/3}).\]
\end{proof}

\par
\subsection{Improvement in the constant magnetic field}
\begin{proposition}\label{prop:ub-2D-B=1}
 	Let $\Omega$ be a regular domain in $\mathbb R^2$ and $A$ be a vector potential with magnetic field $B=\curl A$. Suppose that $B=1$ on a neighborhood of $\partial\Omega$. Then, for every fixed $j\geq 1$, the $j$'th eigenvalue value  of the D-to-N map $\Lambda_{b A}$ satisfies
 \begin{equation*}
 \lambda^{\rm DN}_j(b A,\Omega) \leq  \hat\alpha\, b^{\frac{1}{2}} -  \frac{\hat \alpha^2+1}{3} \max_{x\in\partial\Omega}k(x)+\mathcal O(b^{-1/6})\quad(b\to+\infty),
 \end{equation*}
 where $k$ is the curvature of $\partial\Omega$.
 \end{proposition}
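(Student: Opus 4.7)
The strategy refines the construction of Proposition~\ref{prop:ub-2D} to produce the subleading term. Since $B\equiv 1$ on a neighborhood of $\partial\Omega$, the leading coefficient in front of $b^{1/2}$ is forced to be $\hat\alpha$, and the next correction is purely geometric. I localize at a point $p\in\partial\Omega$ where the curvature attains its maximum and denote $k_{\max}:=\max_{x\in\partial\Omega}k(x)$. Working in parallel coordinates $(t,s)$ centered at $p$, a local gauge transformation as in the proof of Proposition~\ref{prop:ub-2D} brings the quadratic form to
\[
\int_{\R^2_+}\!\Bigl(|\partial_t u|^2+(1-tk(s))^{-2}\bigl|(-i\partial_s+b(t-\tfrac12 t^2 k(s)))u\bigr|^2\Bigr)(1-tk(s))\,dt\,ds,
\]
and I use the same test function $u(t,s)=b^{\rho/2}\chi_1(b^{\rho}s)\chi(b^{\rho}t)f_*(b^{1/2}t)e^{-ib^{1/2}\hat\alpha s}$, where $f_*$ is the Schwartz function provided by Proposition~\ref{prop:link-1D}, $\chi_1\in C_c^\infty(-1,1)$ is real-valued and $L^2(\R)$-normalized, and $\rho\in(0,1/4)$ is to be fixed. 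Since $f_*(0)=\chi(0)=1$, the trace $u|_{\partial\Omega}$ is $L^2$-normalized.

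After the rescalings $\tau=b^{1/2}t$ and $\sigma=b^\rho s$, I expand each factor to the precision needed to produce an $O(1)$ subleading term. The magnetic phase expands as $b(t-\tfrac12 t^2 k)-b^{1/2}\hat\alpha=b^{1/2}(\tau-\hat\alpha)-\tfrac12 \tau^2 k$, the metric factors as $(1-tk)^{\pm 1}=1\pm b^{-1/2}\tau k+\mathcal O(b^{-1})$, and since $k'(p)=0$ at a maximum, $k(s)=k_{\max}+\mathcal O(b^{-2\rho})$ on the support of $\chi_1(b^\rho\cdot)$. Writing $u=Ue^{-ib^{1/2}\hat\alpha s}$ with $U=b^{\rho/2}\chi_1(b^\rho s)\chi(b^\rho t)f_*(b^{1/2}t)$ real, the cross term between $\partial_s U$ and the real coefficient $b(t-\tfrac12 t^2 k)-b^{1/2}\hat\alpha$ integrates to zero, so the tangential contribution reduces (modulo a remainder of size $\mathcal O(b^{2\rho-1/2})$ coming from $|\partial_s U|^2$) to
\[
b^{-1/2}\!\int (1-b^{-1/2}\tau k)^{-1}\Bigl[b^{1/2}(\tau-\hat\alpha)-\tfrac12\tau^2k\Bigr]^2|f_*(\tau)|^2\,d\tau.
\]
Expanding the bracket and the metric factor, the key algebraic cancellation $\tau(\tau-\hat\alpha)^2-\tau^2(\tau-\hat\alpha)=-\hat\alpha\,\tau(\tau-\hat\alpha)$ collapses two separate first-order-in-$k$ contributions into $-\hat\alpha k_{\max}\int\tau(\tau-\hat\alpha)|f_*|^2\,d\tau$, and this integral equals $\hat\alpha/4$ by Proposition~\ref{prop:link-1D}(iii). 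The normal term unfolds similarly: the area element contributes $-k_{\max}\int\tau|f'_*|^2\,d\tau=-k_{\max}(\tfrac13+\tfrac{\hat\alpha^2}{12})$ by Proposition~\ref{prop:link-1D}(iv), while $\int|f'_*|^2\,d\tau=\tfrac{3\hat\alpha}{4}$ follows from Proposition~\ref{prop:link-1D}(ii)-(iii). Summation yields
\[
\|(-i\nabla-bA)u\|_{\Omega}^2=\hat\alpha\, b^{1/2}-k_{\max}\Bigl(\tfrac13+\tfrac{\hat\alpha^2}{12}+\tfrac{\hat\alpha^2}{4}\Bigr)+\mathcal R(b)=\hat\alpha\,b^{1/2}-\tfrac{\hat\alpha^2+1}{3}k_{\max}+\mathcal R(b).
\]

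The remainder $\mathcal R(b)$ aggregates three contributions: the curvature variation on the tangential support, of size $\mathcal O(b^{-2\rho})$; the kinetic energy $\|\partial_s U\|_{\Omega}^2$ of the tangential cut-off, of size $\mathcal O(b^{2\rho-1/2})$; and higher-order Taylor remainders of size $\mathcal O(b^{-1/2})$. Balancing the first two by choosing $\rho=\tfrac16$ gives $\mathcal R(b)=\mathcal O(b^{-1/6})$, and the variational principle~\eqref{eq:def-var-DN} yields the desired upper bound when $j=1$. The extension to $j\geq 2$ is verbatim as in Step~4 of the proof of Proposition~\ref{prop:ub-2D}: one replaces $\chi_1$ by $j$ pairwise disjointly-supported, $L^2(\R)$-orthonormal cut-offs $\chi_1,\ldots,\chi_j\in C_c^\infty(-1,1)$, and the resulting test functions form a $j$-dimensional trial space whose boundary traces are orthonormal. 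The main technical obstacle is that four distinct sources of first-order-in-$k$ corrections appear --- from the area element, the metric factor $(1-tk)^{-2}$, the quadratic phase correction $-\tfrac12 t^2 k$, and the $t$-weighted normal kinetic energy --- and their clean consolidation into the single coefficient $(\hat\alpha^2+1)/3$ hinges on the algebraic identity above, which eliminates the third moment of $|f_*|^2$ in favor of the quadratic one provided by Proposition~\ref{prop:link-1D}(iii).
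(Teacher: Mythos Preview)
Your proof is correct and follows the same route as the paper: the same test function built from $f_*$, localized at a curvature maximum, expanded in parallel coordinates to subleading order, with the moment identities of Proposition~\ref{prop:link-1D} consolidating the four curvature contributions into the single coefficient $-(\hat\alpha^2+1)/3$, and the final balance $\rho=1/6$. Your additional observation that $k'(p)=0$ sharpens the curvature-variation error to $\mathcal O(b^{-2\rho})$ (the paper only uses $\mathcal O(b^{-\rho})$) and would in fact permit $\rho=1/8$ with remainder $\mathcal O(b^{-1/4})$, though you do not exploit this.
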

\begin{proof}~\medskip

\vspace{0.1cm}\noindent
{\bf Step 1. The test function.}\medskip

The test function has a  similar structure to the one constructed in Proposition~\ref{prop:ub-2D}, but since the magnetic field is constant on a neighborhood of $\partial\Omega$, we can carry out the computations to sub-leading terms.

Choose a point $p\in\partial\Omega$ such that $k(p)=\max_{x\in\partial\Omega}k(x)$, and let $\Gamma$ be the connected component of $\partial\Omega$ that contains $p$. Let $\Phi_0$ be the coordinate transformation as introduced in \eqref{eq:def-Phi_0}, but we center it at $p$, i.e. $\Phi_0^{-1}(p)=(0,0)$. 

Consider  $\chi_1,\chi_2,\cdots\in C_c^\infty(-1,1)$ that constitute an orthonormal set in $L^2(\R)$ .  For every $j\in\N$, 
the test function has the form $v_j\circ \Phi_0^{-1}$, with $v_j$ defined as
\[v_j(t,s)=u_j(t,s)\cdot e^{-ib\varphi(s,t)},\quad  u_j(t,s)=b^{\rho/2}\chi_j(b^{\rho} s)\cdot\chi(b^{\rho}t)\cdot f_*\bigl(b^{1/2}t\bigr)\cdot e^{-ib^{1/2}\hat\alpha s}. \]
Here, $\chi$ is a cut-off function, equal to $1$ in a neighborhood of $0$, $\rho\in(0,\frac12)$, and $f_*$ is the function introduced in Proposition~\ref{prop:link-1D}.   Moreover,   we can suppose 
that the functions $\chi_1,\cdots,\chi_j$ have pairwise disjoint supports,

The function $\varphi$ is defined in Appendix \ref{appendix.A}.   We can use
 \begin{multline}\label{eq:ub-qf*}
\|(-i\nabla-bA)v_j\circ\Phi_0\|^2_\Omega=\\
\int_{\R_+^2}\Bigl(|\partial_tu_j |^2+(1-tk(s))^{-2}\bigl|\bigl(-i\partial_s+b(t-\mbox{$\frac12$}t^2k(s)\bigr)u_j\bigr|^2 \Bigr)(1-tk(s))dtds.
\end{multline}
Restricting the functions $v_j\circ\Phi_0$ to $\partial\Omega$, we obtain an orthonormal set in $L^2(\partial\Omega)$.\medskip

\vspace{0.1cm}\noindent
{\bf Step 2. Some estimates.}\medskip

Consider $u\in\{u_j\colon j\in\N\}$. Since $f_*$ is a Schwartz function, the function $u$  satisfies, for $m\geq 0$,
\begin{equation}\label{eq:ub-B=1-norms}
\begin{gathered}
\int_{\R^2_+}t^m|u|^2dtds=b^{-\frac{m+1}{2}}\int_{\R_+}\tau^m|f_*(\tau)|^2d\tau+\mathcal O(b^{-\infty}),\\
\int_{\R_+^2} t^m|\partial_tu|^2dtds=b^{-\frac{m-1}{2}}\int_{\R_+} \tau^m|f'_*(\tau)|^2d\tau+\mathcal O(b^{-\infty}).
\end{gathered}
\end{equation}
In the support of $u$, we have $k(s)=k(0)+\mathcal O(b^{-\rho})$. Hence,
\begin{equation}\label{eq:ub-B=1-norms*}
\begin{gathered}
\int_{\R^2_+}t^m|u|^2 k(s) dtds=b^{-\frac{m+1}{2}}k(0)\int_{\R_+}\tau^m|f_*(\tau)|^2d\tau+\mathcal O\bigl(b^{-\frac{m+1}{2}-\rho}\bigr),\\
\int_{\R_+^2} t^m|\partial_tu|^2 k(s) dtds=b^{-\frac{m-1}{2}}k(0)\int_{\R_+} \tau^m|f'_*(\tau)|^2d\tau+\mathcal O\bigl(b^{-\frac{m-1}{2}-\rho}\bigr).
\end{gathered}
\end{equation}
In particular, we have
\begin{equation}\label{eq:dt-term}
\int_{\R_+^2}|\partial_tu|^2(1-tk(s))dtds=b^{1/2}\int_{\R_+}|f_*'(\tau)|^2d\tau
-k(0)\int_{\R_+}\tau |f_*'(\tau)|^2d\tau+\mathcal O(b^{-\rho}).
\end{equation}
{\bf Step 3. More estimates.}\medskip

Since the functions $f_*$ and $\chi$ are real-valued, it is straightforward to verify that
\begin{equation}\label{eq:ds-term} 
\bigl|\bigl(-i\partial_s+b(t-\mbox{$\frac12$}t^2k(s)\bigr)u\bigr|^2=
F+G,
\end{equation}
where
\[ F=b\bigl|\bigl(b^{1/2}(t-\mbox{$\frac12$}t^2k(s))-\hat\alpha\bigr)u\bigr|^2,\quad 
G=b^{ 3\rho} |\partial_s\chi(b^{\rho}s)|^2|\chi(b^{\rho}t)f_*(b^{1/2}t)|^2.\]
Writing $(1-tk(s))^{-1}=\mathcal O(1)$ in the support of $u$, 
and doing the change of variables $(\sigma,\tau)=(b^{\rho}s,b^{1/2}t)$, we get
\begin{equation}\label{eq:ub-B=1-norms**}
\int_{\R_+^2}(1-tk(s))^{-1}G\,dtds=\mathcal O(b^{2\rho-\frac12}).
\end{equation}
To get an accurate estimate of the integral of $(1-tk(s))^{-1}F$, we write  $$(1-tk(s))^{-1}=1+tk(s)+\mathcal O(t^2) \mbox{ and } k(s)=k(0)+\mathcal O(b^{-\rho})$$
 in the support of $u$, and  we expand the square to get
\[\begin{gathered}
(1-tk(s))^{-1}F=(1+tk(0))F+\mathcal O\bigl((b^{-\rho}t+t^2)F\bigr),\\
F=b\bigl[(b^{1/2}t-\hat\alpha)^2-b^{1/2}t^2k(s)(b^{1/2}t-\hat\alpha)+\mathcal O(bt^4)\bigr]|u|^2.
\end{gathered}
\]
Doing a routine calculation, we obtain
\begin{multline}\label{eq:ub=1-norms***}
\int_{\R_+^2}(1-tk(s))^{-1}F\,dtds=b^{1/2}\int_{\R_+}(\tau-\hat\alpha)^2|f_*(\tau)|^2d\tau\\
-k(0)\int_{\R_+}\bigl[\tau^2(\tau-\hat\alpha) -\tau(\tau-\hat\alpha)^2\bigr]|f_*(\tau)|^2d\tau+\mathcal O(b^{-\rho}).
\end{multline}
Returning to \eqref{eq:ds-term}, we infer from \eqref{eq:ub-B=1-norms**} and \eqref{eq:ub=1-norms***},
\begin{multline}\label{eq:ub-B=1-step3}
\int_{\R_+^2}(1-tk(s))^{-1}\bigl|\bigl(-i\partial_s-b(t-\mbox{$\frac12$}t^2k(s)\bigr)u\bigr|^2dtds=b^{1/2}\int_{\R_+}(\tau-\hat\alpha)^2|f_*(\tau)|^2d\tau\\
-k(0)\int_{\R_+}\bigl[\tau^2(\tau-\hat\alpha)-\tau(\tau-\hat\alpha)^2\bigr]|f_*(\tau)|^2d\tau+\mathcal O(b^{-\rho})+\mathcal O(b^{2\rho-\frac12}).
\end{multline}


{\bf Step 4. Finishing the proof.}\medskip

We introduce the following constant
\[C_*=-\int_{\R_+}\tau|f_*'(\tau)|^2d\tau-\int_{\R_+}\bigl[\tau^2(\tau-\hat\alpha)-\tau(\tau-\hat\alpha)^2\bigr]|f_*(\tau)|^2d\tau,\]
which can also be expressed as\footnote{This form is similar to the one in the displayed equation appearing after Eq. (11.27) in \cite{HeMo}.}
\[  C_*= -\int_{\R_+}\tau|f_*'(\tau)|^2d\tau- \hat\alpha \int_{\R_+} (\tau-\hat\alpha)^2 |f_*(\tau)|^2d\tau\,. \]
Then, we collect \eqref{eq:dt-term} and \eqref{eq:ub-B=1-step3} and choose $\rho$ such that $-\rho=2\rho-\frac12$, i.e. $\rho=1/6$. Eventually, for $v\in\{v_j\colon j\in\N\}$, we infer from \eqref{eq:ub-qf*},
\[\|(-i\nabla-bA)v\circ\Phi_0\|^2_\Omega=\hat\alpha b^{1/2}+C_*k(0)+\mathcal O(b^{-1/6}),\]
where we used  (ii)  in Proposition~\ref{prop:link-1D}.

\vspace{0.1cm}\noindent
The space $M_j=\mathrm{Span}(v_n\circ\Phi_0^{-1}\colon 1\leq n\leq j\}$ has dimension $j$ and its restriction to $\partial\Omega$ has an orthonormal basis consisting of the functions $v_p(0,\cdot)$ ($p=1,\cdots,j$).  
Thus, we conclude by \eqref{eq:def-var-DN-j} that 
\[ \lambda^{\rm DN}_j(bA,\Omega)\leq \hat\alpha b^{1/2}+C_*k(0)+\mathcal O(b^{-1/6}).\]
To finish the proof, we recall that $k(0)=\max_{x\in\partial\Omega}k(x)$, and we use (iii) and (iv) in Proposition~\ref{prop:link-1D} to deduce that $$ C_*=-\frac{\hat\alpha^2+1}{3}\,.$$
\end{proof}

\section{lower bounds in two dimensions}\label{sec:2D-lb}

\subsection{Non vanishing magnetic fields}

\begin{assumption}[Admissible magnetic fields]\label{ass:adm-B}
Let $\Omega$ be a regular domain in $\mathbb R^2$ and $A:\Omega\to\R^2$ be a vector potential with a magnetic field $B=\curl A$ such that 
\[ \liminf_{b\to+\infty}\Bigl(b^{-\zeta}\inf_{\|u\|_\Omega=1}\|(-i\nabla-bA)u\|_\Omega^2\Bigr)>0,\]
for some $\zeta>1/2$.
\end{assumption}

\begin{example}
[Magnetic fields satisfying Assumption~\ref{ass:adm-B}]\label{ex:adm-B}~
\begin{enumerate}[\rm (i)]
    \item If $B$ is $C^1$ on $\overline\Omega$ and $|B|>0$ everywhere on $\overline\Omega$, then by \cite{HeMo}, Assumption~\ref{ass:adm-B} holds with $\zeta=1$. 
    \item If $B$ is a non-vanishing step function and the discontinuity set consists of  a finite  number of smooth curves in $\Omega$, then Assumption~\ref{ass:adm-B} holds with $\zeta=1$ (see \cite{A,  AK}).
    \item If $B$ is $C^1$ on $\overline\Omega$, $|B|>0$ on $\partial\Omega$, and the set $\mathcal Z(B)=\{x\in\Omega\colon B(x)=0\}$ consists of a finite number of smooth curves such that $|\nabla B|>0$ on $\mathcal Z(B)$, then by \cite[Theorem~4]{PK}, Assumption~\ref{ass:adm-B} holds with $\zeta=2/3$. 
\end{enumerate}
\end{example}

\begin{proposition}\label{prop:lb}
 	 	Let $\Omega$ be a regular domain in $\mathbb R^2$ and $A$ be a vector potential with a magnetic field $B=\curl A$ that satisfies Assumption~\ref{ass:adm-B}  and does not vanish on $\partial\Omega$. 
    Suppose that $B$  is $C^1$ on   a neighborhood of $\partial\Omega$. 
   Then,  there is $\delta>0$ such that, the ground state energy of the D-to-N map $\Lambda_{b A}$ satisfies
 \begin{equation*}
 \lambda^{\rm DN}_1(b A,\Omega) \geq  \Bigl(\inf_{x\in \partial \Omega} |B(x)|\Bigr)^{\frac 12} b^\frac 12\hat \alpha  + \mathcal O (b^{\frac12-\delta}))\quad , \quad b\to+\infty\,.
 \end{equation*}
\end{proposition}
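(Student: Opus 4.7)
The strategy is to work directly with the variational characterization \eqref{eq:def-var-DN} and establish the lower bound
\[\|(-i\nabla-bA)u\|^2_\Omega \;\geq\; \hat\lambda_* \, \|u\|^2_{\partial\Omega}\]
for every $u\in C^\infty(\overline\Omega)$, with $\hat\lambda_* = (\inf_{x\in\partial\Omega}|B(x)|)^{1/2}\hat\alpha\, b^{1/2} + \mathcal O(b^{1/2-\delta})$. By \eqref{eq:def-var-DN}, this is equivalent to the claimed lower bound on $\lambda^{\rm DN}_1(bA,\Omega)$. I will first produce a family of cut-off functions realizing an IMS partition of unity on $\Omega$: an ``interior'' cut-off $\chi_{\rm int}$ supported in $\{x\in\Omega\colon \dist(x,\partial\Omega)>\rho\}$ and a finite family $(\chi_j)$ indexed by points $p_j\in\partial\Omega$ forming a $\rho$-net, with each $\chi_j$ supported in $\cB(p_j,2\rho)\cap\overline\Omega$, where $\rho=b^{-\rho_0}$ for some $\rho_0\in(0,1/2)$ to be fixed. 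The standard IMS formula gives
\[\|(-i\nabla-bA)u\|^2_\Omega \;=\; \|(-i\nabla-bA)(\chi_{\rm int} u)\|^2_\Omega+\sum_j\|(-i\nabla-bA)(\chi_j u)\|^2_\Omega -\sum_j\||\nabla\chi_j|u\|^2_\Omega-\||\nabla\chi_{\rm int}|u\|^2_\Omega,\]
with localization error of order $\rho^{-2}\|u\|^2_\Omega$.

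For the interior piece, Assumption~\ref{ass:adm-B} gives
\[\|(-i\nabla-bA)(\chi_{\rm int} u)\|^2_\Omega \;\geq\; c\, b^{\zeta}\, \|\chi_{\rm int}u\|^2_\Omega\]
with $\zeta>1/2$, and $\chi_{\rm int}u$ contributes nothing to $\|u\|^2_{\partial\Omega}$. For each boundary piece, I invoke Proposition~\ref{prop:gauge-gen} at $p_j$ and, after the local gauge transformation $v_j=e^{-ib\phi_j}\chi_j u$, replace $A$ by $A^{\rm lin}_j(x)=B(p_j)A_0(x-p_j)$ up to an $\mathcal O(|x-p_j|^2)=\mathcal O(\rho^2)$ remainder. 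A Cauchy--Schwarz splitting with parameter $\eps\in(0,1)$ yields
\[\|(-i\nabla-bA)(\chi_j u)\|^2_\Omega \;\geq\; (1-\eps)\|(-i\nabla-bA^{\rm lin}_j)v_j\|^2_\Omega - C\,\eps^{-1} b^2\rho^4\,\|v_j\|^2_\Omega.\]

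The next, and most delicate, step is to reduce the localized quadratic form with constant magnetic field $B(p_j)$ to the half-plane model of Section~\ref{sec:pre}. Using the parallel coordinates $\Phi_0$ around the boundary arc through $p_j$, the domain $\supp v_j$ maps into a rectangle in $(t,s)$, and after a second gauge transformation that kills a boundary-tangent gradient, the magnetic vector potential becomes tangent to the boundary with magnitude approximately $B(p_j)$ (with $\mathcal O(\rho)$ errors from the curvature and Jacobian $1-tk(s)$). Extending $v_j$ by zero across the boundary of the tube and using the scaling $E(b)=\hat\alpha b^{1/2}$ from \eqref{eq:HN-sec3}, together with the variational definition \eqref{eq:HN-m(b)}, gives
\[\|(-i\nabla-bA^{\rm lin}_j)v_j\|^2_\Omega \;\geq\; \bigl(|B(p_j)|b\bigr)^{1/2}\hat\alpha\,\|v_j\|^2_{\partial\Omega} \;-\; C\rho\, b^{1/2}\,\|v_j\|^2_{\partial\Omega} \;-\; C\,\|v_j\|^2_\Omega,\]
where the last error comes from the geometric distortion in parallel coordinates. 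Since $|B(p_j)|\geq \inf_{\partial\Omega}|B|$ and $|v_j|=|\chi_j u|$ pointwise, summing over $j$ and using $\sum_j|\chi_j|^2=1$ on $\partial\Omega$ reassembles $\|u\|^2_{\partial\Omega}$.

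Combining the three contributions, one arrives at an inequality of the form
\[\|(-i\nabla-bA)u\|^2_\Omega \;\geq\; \bigl((\inf_{\partial\Omega}|B|)^{1/2}\hat\alpha\, b^{1/2} - R(b)\bigr)\|u\|^2_{\partial\Omega} + c\,b^{\zeta}\|\chi_{\rm int}u\|^2_\Omega - \mathcal O\bigl(\rho^{-2}+\eps^{-1}b^2\rho^4\bigr)\|u\|^2_\Omega,\]
with $R(b)=\mathcal O(\eps\, b^{1/2}+\rho\,b^{1/2})$. Optimizing by choosing $\eps=b^{-\delta_1}$, $\rho=b^{-\rho_0}$ with $2\rho_0=\zeta-\delta_2$ and $2+\delta_1-4\rho_0<\zeta$, the residual $\|u\|^2_\Omega$ error is absorbed by the interior term $c\,b^{\zeta}\|\chi_{\rm int}u\|^2_\Omega$, while the remaining $\|u\|^2_{\partial\Omega}$ boundary error $R(b)$ is kept of order $b^{1/2-\delta}$ for some $\delta>0$; this uses crucially that $\zeta>1/2$. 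The absorption of $\|u\|^2_\Omega$ by $\|\chi_{\rm int}u\|^2_\Omega$ plus the boundary contributions is the main technical point: one writes $\|u\|^2_\Omega=\|\chi_{\rm int}u\|^2_\Omega+\sum_j\|\chi_j u\|^2_\Omega$ and uses a trace-type estimate together with Assumption~\ref{ass:adm-B} to dominate the boundary-localized $\|u\|^2_\Omega$ pieces in terms of the already available boundary eigenvalue bound.

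The main obstacle is precisely this error-balancing: one has to juggle the IMS remainder $\rho^{-2}$, the gauge remainder $\eps^{-1}b^2\rho^4$, the curvature remainder $\rho\,b^{1/2}$, and the $\mathcal O(1)\,\|v_j\|^2_\Omega$ coming from the Jacobian, and use Assumption~\ref{ass:adm-B} (which is stronger than the naive bound by zero) exactly to absorb the $\|u\|^2_\Omega$ leftover into the main boundary term with only a $b^{1/2-\delta}$ loss. Once this bookkeeping is carried out, the conclusion of Proposition~\ref{prop:lb} follows.
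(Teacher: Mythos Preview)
Your overall architecture (IMS localization, gauge linearization via Proposition~\ref{prop:gauge-gen}, reduction to the half-plane model \eqref{eq:HN-m(b)}--\eqref{eq:HN-sec3}) is the same as the paper's. The genuine gap is in the absorption step you flag as ``the main technical point''. The interior piece only produces $c\,b^{\zeta}\|\chi_{\rm int}u\|_\Omega^2$, and this \emph{cannot} absorb a residual of size $\mathcal O(\rho^{-2}+\eps^{-1}b^2\rho^4)\|u\|_\Omega^2$: for the actual minimizers one has $\chi_{\rm int}u\to 0$ while $\|u\|_\Omega^2$ does not. Your suggested fix via ``a trace-type estimate'' goes the wrong way (trace inequalities bound the boundary norm by the bulk norm, not conversely), and once you have already spent $\|(-i\nabla-bA)(\chi_j u)\|_\Omega^2$ on the half-plane lower bound you cannot reuse it to control $\|\chi_j u\|_\Omega^2$.

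The paper closes this by sacrificing an $\eps=b^{-\delta}$ fraction of the \emph{full} quadratic form \emph{before} any localization: one writes
\[
\|(-i\nabla-bA)u\|_\Omega^2=(1-\eps)\|(-i\nabla-bA)u\|_\Omega^2+\eps\|(-i\nabla-bA)u\|_\Omega^2
\]
and applies Assumption~\ref{ass:adm-B} to the second summand, yielding $\eps\,\Theta_1 b^{\zeta}\|u\|_\Omega^2$ on the whole of $\Omega$; this absorbs the IMS remainder $\mathcal O(b^{2\rho})\|u\|_\Omega^2$ provided $2\rho-\zeta+\delta<0$. The same trick is repeated on the boundary strip, this time using the stronger bound $\|(-i\nabla-bA)\psi\|_\Omega^2\geq \Theta_1 b\|\psi\|_\Omega^2$ valid for $\supp\psi\subset\{|B|>0\}$, to absorb the gauge-linearization error $\mathcal O(\eps^{-1}b^{2-4\rho})\|v_j\|_\Omega^2$; this forces $1-4\rho+2\delta<0$. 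With these two sacrifices the remaining bookkeeping is exactly as you describe, and both constraints are compatible with some $\rho\in(\tfrac14,\tfrac{\zeta}{2})$ precisely because $\zeta>\tfrac12$.
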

\begin{proof}
We choose $t_0>0$ such that $B$ is $C^1$ and does not vanish on 
\[\mathcal N:=\{x\in \overline{\Omega}\colon \mathrm{dist}(x,\partial\Omega)<t_0\}.\]
Let $u\in C^\infty(\overline\Omega)$. In the sequel, all estimates will be uniform with respect to $u$, and with respect to $b$ in a neighborhood of $+\infty$. 

\vspace{0.1cm}\noindent
As a consequence of Assumption~\ref{ass:adm-B}, there exist positive constants
$\Theta_1,b_0$ and $\zeta>\frac12$ such that
\begin{equation}\label{eq:HM}
\|(-i\nabla-bA)u\|^2_\Omega\geq \Theta_1 b^{\zeta}\|u\|_\Omega^2\quad\mbox{ for all $b\geq b_0$,}
\end{equation}
and by \cite{HeMo}, 
\begin{equation}\label{eq:HM*}
\|(-i\nabla-bA)u\|^2_\Omega\geq \Theta_1 b\|u\|_\Omega^2\quad\mbox{ if $\supp u\subset \mathcal N$.}
\end{equation}

\noindent
We now  proceed to  the proof of Proposition~\ref{prop:lb}, which we split into several steps.\medskip

\vspace{0.1cm}\noindent
{\bf Step 1:}\medskip

Consider a constant $\rho\in(\frac14,\zeta)$ and a partition of unity 
\[\chi_1^2+\chi_2^2=1\quad\mbox{on }\Omega,\]
where $\chi_1=1$ on $\{\mathrm{dist}(x,\partial\Omega)<b^{-\rho}\} $, $\supp \chi_1 \subset \{\mathrm{dist}(x,\partial\Omega)<2 b^{-\rho}\} $ and
\[ |\nabla\chi_1| +|\nabla\chi_2|=\mathcal O(b^{\rho}).\]
Then, with $\epsilon=b^{-\delta}$ and $\delta>0$, we write
\[\begin{aligned}
 &\|(-i\nabla-bA)u\|^2_\Omega\\
 &=(1-\epsilon)\|(-i\nabla-bA )u\|^2_\Omega+\epsilon\|(-i\nabla-b{A} )u\|^2_\Omega\\
 &=(1-\epsilon)\sum_{j=1}^2\|(-i\nabla-bA )\chi_j u\|^2_\Omega+ \epsilon\|(-i\nabla-b{A} )u\|^2_\Omega+\mathcal O(b^{2\rho}\|u\|^2_\Omega)\\
 &\geq (1-\epsilon)\sum_{j=1}^2\|(-i\nabla-bA )\chi_j u\|^2_\Omega+\bigl({\epsilon\Theta_1 b^\zeta}+\mathcal O(b^{2\rho})\bigr)\|u\|_\Omega^2,
 \end{aligned}\]
 where we have used \eqref{eq:HM}, and Proposition~\ref{prop:identity} with $w=\chi_j$.
 For instance, assuming that 
 \begin{equation}\label{eq:cond-rho1}
2\rho-\zeta+\delta<0\,,
 \end{equation}
  we obtain  that, for $b$ sufficiently large,
\begin{equation}\label{eq:step1}
 \|(-i\nabla-bA )u\|^2_\Omega\geq (1-\epsilon)\|(-i\nabla-bA )\chi_1 u\|^2_\Omega+\frac{\epsilon\Theta_1b^\zeta}{2}\|u\|^2_\Omega.
\end{equation}

\vspace{0.5cm}\noindent
{\bf Step 2:}\medskip

For the sake of simplicity, we suppose that $\Omega$ is simply connected, to ensure that \[\Omega_0:=\{x\in\Omega\colon \mathrm{dist}(x,\partial\Omega)<t_0\}\] is diffeomorphic to $(0,t_0)\times (\mathbb R/L\mathbb Z)$, and use the parallel coordinates defined by the transformation $\Phi_0$ in \eqref{eq:def-Phi_0}. The proof can be easily adjusted to cover the case the non-simply connected case where $\partial\Omega$ consists of a finite number of connected components, by treating doing the computations on each connected component.

\vspace{0.1cm}\noindent
We introduce a partition of unity of $\Omega_0$,
\[\sum_{j}g_j^2=1,\quad \sum_{j}|\nabla g_j|^2=\mathcal O(b^{2\rho}),\]
such that 
\[\mathrm{supp\,}g_j\subset \Phi_0\bigl((0,t_0)\times (s_j-b^{-\rho},s_j+b^{-\rho}\bigr).\]
With  $v=\chi_1u$ supported in $\{|B|>0\}$, we use \eqref{eq:HM*} to write 
\[\|(-i\nabla-bA)v\|^2_\Omega \geq 
(1-\epsilon)\|(-i\nabla-bA)v\|^2_\Omega+\epsilon\Theta_1b\|v\|_\Omega^2.\]
Consequently,  we have, with $v_j=g_j v$,
\begin{equation}\label{eq:step2}
\|(-i\nabla-bA)v\|^2_\Omega \geq 
(1-\epsilon)\sum_{j}\|(-i\nabla-bA )v_j\|^2_\Omega+\epsilon\Theta_1b\|v\|_\Omega^2+\mathcal O(b^{2\rho}\|v\|^2_\Omega).
\end{equation}
For all $j$, pick $p_j\in\partial\Omega\cap\mathrm{supp\,}g_j$ and put $A_j^{\rm lin}(x)=B(p_j)A_0(x-p_j)$. In a neighborhood of $p_j$, we apply a gauge transformation 
\[(v_j,A)\mapsto (v'_j=ve^{-ib\phi_j},A'_j=\nabla\phi_j+A_j^{\rm lin}) \]
as indicated in Proposition~\ref{prop:gauge-gen}. For instance, we have
\[|A-A_j'|\leq C\, b^{-4\rho}\quad \mbox{ on }\mathrm{supp\,}v_j.\]
To lighten the notation, we skip the $'$ when referring to the new configuration $(v_j',A_j')$. We have by the Cauchy-Schwarz inequality,
\[\|(-i\nabla-bA )v_j\|^2_\Omega\geq (1-\epsilon)\|(-i\nabla-bA_j^{\rm lin} )v_j\|^2_\Omega-C\epsilon^{-1}b^{2-4\rho}\|v_j\|^2.\]
Inserting this into \eqref{eq:step2} and arguing as in \eqref{eq:step1}, we get
\begin{equation}\label{eq:step2*}
\|(-i\nabla-bA)v\|^2_\Omega =(1-2\epsilon)\sum_j\|(-i\nabla-bA_j^{\rm lin} )v_j\|^2_\Omega+\frac{\epsilon\Theta_1b}{2}\|v\|^2_\Omega,
\end{equation}
provided that
\begin{equation}\label{eq:cond-rho2} 
1-4\rho+2\delta<0.
\end{equation}
%

{\bf Step 3.}\medskip

Now we write a lower bound for $\|(-i\nabla-bA_j^{\rm lin} )v_j\|^2_\Omega$. Notice that $\curl A_j^{\rm lin}=B(p_j)$ is constant. 
As recalled in Appendix~\ref{appendix.A},
\[\begin{aligned}
\|v_j\|^2_\Omega&=\int_{\R_+^2} |\tilde v_j|^2(1-tk(s))dtds,\\
\|(-i\nabla-bA_j^{\rm lin} )v_j\|^2_\Omega& =
\int_{\R_+^2} \Bigl( |\partial_t\tilde v_j|^2+(1-tk(s))^{-2}|(-i\partial_s^2+B(p_j)b(t-\mbox{$\frac12$}t^2k(s)))\tilde v_j|^2 \Bigr) (1-tk(s))dtds,
\end{aligned} \]
where 
\[
\tilde v_j(t,s)=e^{-ib \varphi_j(t,s)}v\circ\Phi_0(t,s)
\]
is obtained after expressing $v_j$ in the $(t,s)$ coordinates and after performing a gauge transformation.

\vspace{0.1cm}\noindent
Recall that the support of $v_j$ is contained in $\{\mathrm{dist}(x,\partial\Omega)\leq 2b^{-\rho}\}\cap\mathrm{supp\,}g_j$. Consequently, on the support of $\tilde v_j$,
\[(1-tk(s))= 1+\mathcal O(b^{-\rho}),\quad (1-tk(s))^{-2} =1+\mathcal O(b^{-\rho}),\]
and a routine application of H\"older's inequality yields,
\[ |(-i\partial_s+b(t-\mbox{$\frac12$}t^2k(s)))\tilde v_j|^2\geq (1-\epsilon)|(-i\partial_s+B(p_j) bt)\tilde v_j|^2+\mathcal O(\epsilon^{-1}b^{2-4\rho}|\tilde v_j|^2).\]
Collecting the previous estimates, we get
\[
\|(-i\nabla-bA_j^{\rm lin} )v_j\|^2_\Omega 
\geq \bigl(1-\epsilon+\mathcal O(b^{-\rho})\bigr)\|(-i\nabla-B(p_j)bA_0)\tilde v_j\|_{\R_+^2}^2+\mathcal O(\epsilon^{-1}b^{2-4\rho}\|v_j\|_\Omega^2). 
\]
where we notice that $v_j$ is supported in $\overline{\mathbb R_+^2}$. Inserting this into \eqref{eq:step2*} and using \eqref{eq:HN-sec3}, we obtain
\begin{multline}\label{eq:lb-step3}
\|(-i\nabla-bA )\chi_1 u\|^2_\Omega\geq 
(1+\mathcal O(\epsilon)+\mathcal O(b^{-\rho})) \Bigl(\inf_{j}|B(p_j)|\Bigr)^{1/2}b^{1/2}\hat \alpha\sum_{j=1}^2\|\tilde v_j\|^2_{\partial\mathbb R_+^2}+\frac{\epsilon\Theta_1b}{4}\|v\|_\Omega^2.
\end{multline}

\vspace{0.1cm}\noindent
{\bf Step 4.}\medskip

To finish the proof, we observe that
\[
\sum_j \|\tilde v_j\|^2_{\partial\mathbb R_+^2}=\int_{\partial\Omega}\Biggl(\sum_jg_j^2\Biggr)|u|^2ds(x)=\|u\|_{\partial\Omega}^2, \]
 and we choose $(\rho,\delta)$ such that  
 \[\frac14+\frac{\delta}2<\rho<\frac{\zeta-\delta}{2} \mbox{ and }0< \delta < \frac12\Bigl(\zeta-\frac12\Bigr)\,,\]
to ensure that the conditions in \eqref{eq:cond-rho1} and \eqref{eq:cond-rho2} are respected.

\vspace{0.1cm}\noindent
We insert \eqref{eq:lb-step3} into \eqref{eq:step1} and we recall that  $\epsilon=b^{-\delta}$. Eventually,  if we choose $\rho>\delta$ we obtain
 \[  \|(-i\nabla-bA)u\|^2_\Omega\geq \bigl(1+\mathcal O(b^{-\delta})\bigr)\,\Bigl(\inf_{x\in\partial\Omega}|B(x)|\Bigr)b^{1/2}\hat \alpha\|u\|_{\partial\Omega}^2.\]
 Using the variational formulation of $\lambda^{\rm DN}_1(bA,\Omega)$ achieves  the proof.
\end{proof}

\subsection{Concentration of the magnetic harmonic extension}

In the case of a magnetic field which is constant on a neighborhood of $\partial\Omega$, we would like to get a more accurate lower bound for $\lambda^{\rm DN}(bA,\Omega)$ that matches with the upper bound in Proposition~\ref{prop:ub-2D-B=1}. As an intermediate step, we need some information on the concentration of the magnetic harmonic extension of ground states.

For all $b>0$, let $f_b$ be a normalized eigenfunction (in $L^2(\partial\Omega)$) of the D-to-N eigenvalue $\lambda^{\rm DN}(bA,\Omega)$. Let us denote by $u_b$ its magnetic harmonic extension to $\Omega$, that is
\begin{equation}\label{eq:conc-he}
(-i\nabla-bA)^2u_b=0\quad\mbox{ on }\Omega,\quad u_b|_{\partial\Omega}=f_b.
\end{equation}
The weak formulation of the D-to-N map yields
\begin{equation}\label{eq:conc-wf}
    \forall\, v\in H^1(\Omega),\quad \langle (-i\nabla-bA)u_b,(-i\nabla-bA)v\rangle_\Omega-\lambda^{\rm DN}(bA,\Omega)
    \langle f_b,v|_{\partial\Omega}\rangle_{\partial\Omega}=0,
\end{equation}
where $\langle\cdot,\cdot\rangle_\Omega$ and $\langle\cdot,\cdot\rangle_{\partial\Omega}$ denote the inner product in $L^2(\Omega)$ and $L^2(\partial\Omega)$, respectively.

The next proposition states that $u_b$ decays exponentially away from the boundary. 
\begin{proposition}\label{prop:conc-dec}
Let $\Omega$ be a regular domain in $\mathbb R^2$ and $A$ be a vector potential with a magnetic field $B=\curl A$  that does not vanish on $\overline\Omega$.
    Suppose that $B$  is $C^1$  on $\overline{\Omega}$ and that $B=1$  on a neighborhood of $\partial\Omega$. Let
    \[0<\delta<m(B,\Omega):=\inf_{x\in\overline{\Omega}}|B(x)|.\]
    Then, there exists $C_\delta,b_\delta>0$ such that, for all $b\geq b_\delta$, the magnetic harmonic extension $u_b$ of $f_b$ satisfies
\[\begin{gathered}
\int_{\Omega}|u_b|^2\exp\bigl(\delta b^{1/2}\,t(x)\bigr)\,dx\leq C_\delta\int_\Omega|u_b|^2dx,\\
    \int_{\Omega}|(-i\nabla-bA)u_b|^2\exp\bigl(\delta b^{1/2}\,t(x)\bigr)\,dx\leq C_\delta b\int_\Omega|u_b|^2dx,
\end{gathered}\]
where $t(x)=\mathrm{dist}(x,\partial\Omega)$.
\end{proposition}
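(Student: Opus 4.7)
This is a magnetic Agmon exponential-decay estimate for the harmonic extension $u_b$, combining the weak formulation~\eqref{eq:conc-wf}, Proposition~\ref{prop:identity}, and the Bochner--Kato magnetic lower bound
\[
\|(-i\nabla-bA)w\|_\Omega^2 \;\ge\; b\int_\Omega B\,|w|^2\,dx \;\ge\; m(B,\Omega)\,b\,\|w\|_\Omega^2 \qquad (w\in H^1_0(\Omega)),
\]
valid after absorbing the sign of $B$ into $A$ (allowed since $B$ does not vanish on $\overline\Omega$).

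\textbf{Step 1 (Agmon identity).} Fix $\delta\in(0,m(B,\Omega))$ and introduce a Lipschitz weight $\phi(x)=(\delta/2)b^{1/2}\tilde t(x)$, where $\tilde t$ is a smooth Lipschitz approximation of the boundary distance $t$ with $|\nabla\tilde t|\le 1$ and $\tilde t=t$ on $\{t\le t_0\}$ for some fixed $t_0>0$, so that $\phi|_{\partial\Omega}=0$ and $|\nabla\phi|^2\le(\delta^2/4)b$. Plugging $v=e^{2\phi}u_b\in H^1(\Omega)$ into \eqref{eq:conc-wf} (admissible since $v|_{\partial\Omega}=f_b$) and then applying Proposition~\ref{prop:identity} with $w=e^\phi$ yields the magnetic Agmon identity
\[
\|(-i\nabla-bA)(e^\phi u_b)\|_\Omega^2 \;=\; \lambda^{\rm DN}(bA,\Omega) + \int_\Omega|\nabla\phi|^2\,|e^\phi u_b|^2\,dx.
\]

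\textbf{Step 2 (Localization and conclusion).} Introduce a $b$-adapted partition of unity $\chi_1^2+\chi_2^2=1$ with $\chi_1\equiv 1$ on $\{t\le Lb^{-1/2}\}$ and $\chi_1\equiv 0$ on $\{t\ge 2Lb^{-1/2}\}$, where $L\gg 1$ is a large constant. Then $|\nabla\chi_j|^2\le CL^{-2}b$, on $\supp\chi_1$ the weight satisfies $e^{2\phi}\le e^{\delta L}$, and $\chi_2 e^\phi u_b\in H^1_0(\Omega)$. Applying IMS and Bochner--Kato to the $\chi_2$-piece, combining with the Agmon identity, and using $\|\chi_1 e^\phi u_b\|_\Omega^2\le e^{\delta L}\|u_b\|_\Omega^2$, one obtains after rearrangement
\[
\Bigl(m(B,\Omega)-\tfrac{\delta^2}{4}-\tfrac{C}{L^2}\Bigr)\,b\,\|\chi_2 e^\phi u_b\|_\Omega^2 \;\le\; \lambda^{\rm DN}(bA,\Omega) + C_L\,b\,\|u_b\|_\Omega^2.
\]
Since $\delta<m(B,\Omega)\le 1$ forces $\delta^2/4<m(B,\Omega)$, choosing $L$ large makes the coefficient on the left a positive constant. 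Combining with $\lambda^{\rm DN}(bA,\Omega)=O(b^{1/2})$ from Proposition~\ref{prop:ub-2D} and the a priori lower bound $\|u_b\|_\Omega^2\gtrsim b^{-1/2}$ (which follows from a magnetic trace inequality applied to $\|f_b\|_{\partial\Omega}=1$), we deduce $\|e^\phi u_b\|_\Omega^2\le C_\delta\|u_b\|_\Omega^2$. Since $2\phi(x)=\delta b^{1/2}t(x)$ on $\{t\le t_0\}$, this yields the first estimate on this set; the contribution from the far region $\{t>t_0\}$ is exponentially small and absorbed by applying the same argument with a larger parameter, or by the standard observation that $u_b$ is exponentially small away from $\partial\Omega$. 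The second estimate is immediate from the pointwise bound $|(-i\nabla-bA)(e^\phi u_b)|\ge e^\phi|(-i\nabla-bA)u_b|-e^\phi|\nabla\phi||u_b|$ squared, combined with the Agmon identity of Step~1 and the first estimate.

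\textbf{Main obstacle.} The delicate point is the interplay between the cutoff scale $Lb^{-1/2}$ and the weight: the cutoff must be placed at scale $b^{-1/2}$ so that $e^\phi$ remains uniformly bounded on $\supp\chi_1$, yet $L$ must be chosen large enough that the IMS localization error $CL^{-2}b$ is dominated by the Bochner--Kato constant $m(B,\Omega)b$. A secondary technicality is the a priori lower bound $\|u_b\|_\Omega^2\gtrsim b^{-1/2}$, needed to convert $\lambda^{\rm DN}/b$ into a constant multiple of $\|u_b\|_\Omega^2$; this is not a routine consequence of the estimates already recorded in the paper and requires a separate magnetic trace argument or an explicit competitor to $u_b$ in the variational problem defining the harmonic extension.
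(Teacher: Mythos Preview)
Your proof is correct and shares the paper's overall Agmon framework: the weighted identity via Proposition~\ref{prop:identity} and the weak formulation~\eqref{eq:conc-wf}, IMS localization at scale $b^{-1/2}$, and the magnetic lower bound $\|(-i\nabla-bA)w\|_\Omega^2\ge m(B,\Omega)\,b\|w\|_\Omega^2$ on the interior piece $\chi_2 e^{\phi}u_b\in H^1_0(\Omega)$. The one substantive difference is how the boundary term $\lambda^{\rm DN}(bA,\Omega)$ surviving in the Agmon identity is absorbed. The paper invokes a Robin--Laplacian lower bound from \cite{Ka}: for $\psi$ supported where $B=1$,
\[
\|(-i\nabla-bA)\psi\|_\Omega^2-\lambda(b)\|\psi\|_{\partial\Omega}^2\;\ge\;\bigl(\Theta(-\hat\alpha)b+o(b)\bigr)\|\psi\|_\Omega^2\;=\;o(b)\,\|\psi\|_\Omega^2,
\]
and applying this with $\psi=\chi_1 w u_b$ produces a term $+\lambda(b)$ that exactly cancels the one coming from the identity, so no a~priori control of $\|u_b\|_\Omega$ is needed. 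You instead simply discard the (nonnegative) $\chi_1$-energy and bound the residual $\lambda(b)$ by $b\|u_b\|_\Omega^2$ via $\|u_b\|_\Omega^2\gtrsim b^{-1/2}$. Your route is more elementary---it avoids the Robin model and does not actually use the hypothesis $B=1$ near $\partial\Omega$---at the price of that auxiliary lower bound, which you correctly flag but which is in fact cheap: the scaled trace inequality $1=\|f_b\|_{\partial\Omega}^2\le C\bigl(\epsilon\|\nabla|u_b|\|_\Omega^2+\epsilon^{-1}\|u_b\|_\Omega^2\bigr)$, the diamagnetic inequality, and $\|(-i\nabla-bA)u_b\|_\Omega^2=\lambda(b)=O(b^{1/2})$ give it immediately with $\epsilon\sim b^{-1/2}$. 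For the second displayed estimate the paper again goes through a Robin bound (now with parameter $-2\hat\alpha$), whereas your pointwise inequality combined with the Agmon identity and the first estimate is more direct and equally valid.
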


\begin{proof} The proof relies on the method of Agmon estimates, with due adjustments to fit the D-to-N operator.\medskip

\vspace{0.1cm}\noindent
{\bf Step 1. \it Link with a magnetic Robin Laplacian.}\medskip

To lighten the  notation, we will write $\lambda(b)$ for $\lambda^{\rm DN}(bA,\Omega)$.  Knowing from the previous sections  that $\lambda(b)=\hat \alpha b^{1/2}+o(b^{1/2})$, we have the following lower bounds \cite[Theorem~1.1 (2)]{Ka}, for all $\psi\in H^1(\Omega)$  with support in $\{B=1\}$, 
\begin{equation}\label{eq:conc-qf-lb}
\|(-i\nabla-bA)\psi\|_\Omega^2 -\lambda(b)\|\psi\|_{\partial\Omega}^2
\geq \bigl(\Theta(-\hat \alpha)b+o(b) \bigr)\|\psi\|_\Omega^2,
\end{equation}
and
\begin{equation}\label{eq:conc-qf-lb*}\|(-i\nabla-bA)\psi\|_\Omega^2 -2\lambda(b)\|\psi\|_{\partial\Omega}^2
\geq \bigl(\Theta(-2\hat \alpha)b+o(b) \bigr)\|\psi\|_\Omega^2,
\end{equation}
where $\Theta(-\hat\alpha)=0$ and $\Theta(-2\hat\alpha)<0$, by the considerations in Subsection~\ref{sec:1D-Robin}. 

\vspace{0.1cm}\noindent
Essentially, the lower bounds in \eqref{eq:conc-qf-lb} and \eqref{eq:conc-qf-lb} result from 
\[\|(-i\nabla-bA)\psi\|_\Omega^2 -\gamma\|\psi\|_{\partial\Omega}^2
\geq \bigl(\Theta(\gamma,b)+o(b) \bigr)\|\psi\|_\Omega^2,\]
the scaling relation in \eqref{eq:DG-scaling}, and the continuity of $\Theta(\cdot)$.\medskip

\vspace{0.1cm}\noindent
{\bf Step 2.}\medskip

Using \eqref{eq:conc-wf} with $v=w^2u_b$ and $w=\exp(\delta b^{1/2} t(x))$, we get by Proposition~\ref{prop:identity},
\begin{equation}\label{eq:conc-Agmon-id}
0=\|(-i\nabla -bA)wu_b\|_\Omega^2 -\lambda(b)\|u_b\|_{\partial\Omega}^2
-\delta^2 b\|wu_b\|^2_\Omega.
\end{equation}
Consider $R>1$ to be chosen sufficiently large, and a partition of unity \[\chi_1^2+\chi_2^2=1\quad\mbox{on }\Omega,\]
where $\chi_1=1$ on $\{\mathrm{dist}(x,\partial\Omega)<R b^{-1/2}\} $, $  \supp \chi_1\subset  \{\mathrm{dist}(x,\partial\Omega)< 2 R b^{-1/2}\} $ and
\[ |\nabla\chi_1| +|\nabla\chi_2|=\mathcal O(R^{-1}b^{1/2)}).\] We then have
\[\|(-i\nabla -bA)(wu_b)\|_\Omega^2=\sum_{j=1}^2\|(-i\nabla -bA)(\chi_jwu_b)\|_\Omega^2+\mathcal O(R^{-2}b)\|wu_b\|^2_\Omega,\]
with 
\[\|(-i\nabla -bA)(\chi_2 wu_b)\|_\Omega^2\geq m(B,\Omega)b\|\chi_2 wu_b\|^2_\Omega,\]
which follows from \cite[Lemma~1.4.1]{FH4}, since $\chi_2 wu_b$ is compactly supported and $B$ does not vanish in $\Omega$

Eventually, by using \eqref{eq:conc-qf-lb} with $\psi=\chi_1wu_b$, and noticing that $w|_{\partial\Omega}=1$, we get from \eqref{eq:conc-Agmon-id},
\[0\geq o(b)\|\chi_1wu_b\|_\Omega^2 +m(B,\Omega)b\|\chi_2wu_b\|_\Omega^2
-\bigl(\delta^2 b+\mathcal O(R^{-2}b)\big)\|wu_b\|_\Omega^2.\]
On the support of $\chi_1$, we know that $w=\mathcal O(1)$, so we have
\[\bigl(m(B,\Omega)-\delta^2+\mathcal O(R^{-2}) \bigr)b\|\chi_1 w u_b\|_\Omega^2\leq C b \|u_b\|_\Omega^2, \]
where the constant $C$  depends on $\delta$ and $R$. \medskip

\vspace{0.1cm}\noindent
{\bf Step 3.}\medskip

To conclude, since $0<\delta<m(B,\Omega)\leq 1$,  we choose $R$ sufficiently large so that 
\[ m(B,\Omega)-
\delta^2+\mathcal O(R^{-2})\geq \frac12(m(B,\Omega)-\delta^2)\]
and obtain the estimate
\[\|wu_b\|_\Omega^2=\mathcal O(\|u_b\|_\Omega^2).\]
Returning to \eqref{eq:conc-Agmon-id} and writing
\[\delta^2 b\|wu_b\|_\Omega^2=\frac12\|(-i\nabla -bA)(wu_b)\|_\Omega^2+\frac12\|(-i\nabla -bA)(wu_b)\|_\Omega^2-\lambda(b)\|u_b\|_{\partial\Omega}^2,\]
we get from \eqref{eq:conc-qf-lb*},
\[\delta^2 b\, \|wu_b\|_\Omega^2\geq \frac12\|(-i\nabla -bA)(wu_b)\|_\Omega^2+\bigl(\frac 12 \Theta(-2\hat \alpha)b+o(b))\|wu_b\|_\Omega^2,\]
which eventually yields
\[\|(-i\nabla -bA)(wu_b)\|_\Omega^2=\mathcal O(b\|wu_b\|^2_\Omega).\]
\end{proof}
\begin{remark}
    \label{rem:gs-dec}
The decay estimates in Proposition~\ref{prop:conc-dec} continue to hold if $u_b$ is the magnetic harmonic extension of 
 a $f_b\in H^{1/2}(\partial\Omega)$ with 
the D-to-N energy $\langle\Lambda_{bA}f_b,f_b\rangle\leq  Cb^{1/2}$, where $C\in\R_+$.     
\end{remark}
\begin{corollary}\label{corol:decay}
Given an integer $m\geq 0$, there exists $C_m,b_m>0$ such that, for all $b\geq b_m$, 
\[\begin{gathered}\int_\Omega |u_b|^2 \bigl(t(x)\bigr)^mdx\leq C_m\,  b^{-\frac{m+1}{2}},\\ \int_\Omega|(-i\nabla-bA)u_b|^2 \bigl(t(x)\bigr)^mdx\leq C_m\,  b^{-\frac{m-1}{2}}.
\end{gathered}\]
\end{corollary}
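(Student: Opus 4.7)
The plan is to combine the exponential weight estimates from Proposition~\ref{prop:conc-dec} with an elementary pointwise bound, together with an \emph{a priori} control on $\|u_b\|_\Omega^2$.

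First I would record the elementary inequality: for any $m\in\N$ and $\delta>0$, the function $x\mapsto x^me^{-\delta x}$ on $[0,\infty)$ attains its maximum at $x=m/\delta$, so by taking $x=b^{1/2}t(x)$ one obtains
\[
\bigl(t(x)\bigr)^m \leq C_m\,b^{-m/2}\exp\!\bigl(\delta b^{1/2}\, t(x)\bigr),\qquad x\in\overline\Omega,
\]
with $C_m=(m/(e\delta))^m$ independent of $b$ and $x$.

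Next I would establish an \emph{a priori} bound $\|u_b\|_\Omega^2=\mathcal O(b^{-1/2})$. Since $\|f_b\|_{\partial\Omega}=1$ and $u_b$ is the magnetic harmonic extension of $f_b$, one has
\[
\|(-i\nabla-bA)u_b\|_\Omega^2=\langle \Lambda_{bA}f_b,f_b\rangle=\lambda^{\rm DN}(bA,\Omega)=\mathcal O(b^{1/2}),
\]
by the upper bound in Proposition~\ref{prop:ub-2D-B=1}. Since $B$ does not vanish on $\overline\Omega$ and is $C^1$, Assumption~\ref{ass:adm-B} applies with $\zeta=1$ (item (i) of Example~\ref{ex:adm-B}), giving $\|(-i\nabla-bA)u_b\|_\Omega^2\geq \Theta_1 b\,\|u_b\|_\Omega^2$ for $b$ large. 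Combining these two inequalities yields $\|u_b\|_\Omega^2\leq C\,b^{-1/2}$.

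Finally I would combine. Fix any $\delta\in(0,m(B,\Omega))$. Multiplying the pointwise inequality by $|u_b|^2$ and integrating, and then applying the first estimate of Proposition~\ref{prop:conc-dec}, one obtains
\[
\int_\Omega |u_b|^2\,\bigl(t(x)\bigr)^m\,dx\leq C_m b^{-m/2}\!\int_\Omega |u_b|^2 e^{\delta b^{1/2}t(x)}dx\leq C_m C_\delta\,b^{-m/2}\|u_b\|_\Omega^2=\mathcal O(b^{-(m+1)/2}).
\]
Exactly the same argument, applied now with $|(-i\nabla-bA)u_b|^2$ in place of $|u_b|^2$ and using the second estimate of Proposition~\ref{prop:conc-dec}, gives
\[
\int_\Omega |(-i\nabla-bA)u_b|^2\bigl(t(x)\bigr)^m dx\leq C_m b^{-m/2}\cdot C_\delta b\,\|u_b\|_\Omega^2=\mathcal O(b^{-(m-1)/2}),
\]
which completes the proof. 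The only non-routine point is the \emph{a priori} bound $\|u_b\|_\Omega^2=\mathcal O(b^{-1/2})$, but this is immediate from Assumption~\ref{ass:adm-B} together with the already-established upper bound on $\lambda^{\rm DN}(bA,\Omega)$; everything else is a purely mechanical application of the elementary weight inequality and Proposition~\ref{prop:conc-dec}.
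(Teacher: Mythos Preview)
Your proof is correct and follows essentially the same approach as the paper: the paper also first derives $\|u_b\|_\Omega^2=\mathcal O(b^{-1/2})$ from the upper bound on $\lambda^{\rm DN}(bA,\Omega)$ together with the lower bound $\|(-i\nabla-bA)u_b\|_\Omega^2\geq \Theta_1 b\|u_b\|_\Omega^2$, and then invokes Proposition~\ref{prop:conc-dec} via the elementary inequality $z^m\leq m!\,e^z$ (your version $x^m\leq (m/(e\delta))^m e^{\delta x}$ is an equivalent variant). The only cosmetic difference is that the paper's write-up is more terse.
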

\begin{proof}
Knowing that $\|u_b\|_{\partial\Omega}=1$ and $\lambda^{\rm DN}(bA,\Omega)\leq \alpha b^{1/2}+o(b^{1/2})$,  \eqref{eq:def-var-DN} yields
\[\Theta_1b\|u_b\|^2_\Omega= \mathcal O(b^{1/2})\,.
\]
The result in the corollary   follows  from Proposition~\ref{prop:conc-dec} since $z^m\leq m!\, e^z$ for $z\geq 0$.
\end{proof}
\subsection{Two terms asymptotics}
The main idea, following what has been done in Surface Superconductivity is to use the result in \cite{HeNi1} for disks, the radius being locally chosen as the inverse of the curvature when it is positive. The starting point is in the case of the disk $\cB_R$ of radius $R$
\begin{equation}\label{eq:asymp-disk}
\lambda^{\rm DN}(bA,\cB_R)= \hat \alpha b^{1/2}  - \frac{\hat \alpha^2 +1}{3}  R^{-1}+ \mathcal O (b^{-1/2})\,.
\end{equation}
 The analysis in \cite{HeNi1} can be applied to $\cB_R^{\rm ext}$, the exterior of the disk $\cB_R$, and we get
\begin{equation}\label{eq:asymp-disk-ext}
\lambda^{\rm DN}(bA,\cB_R^{\rm ext})= \hat \alpha b^{1/2} + \frac{\hat\alpha^2 +1}{3} R^{-1}+ \mathcal O (b^{-1/2})\,.
\end{equation}
Hence we can also consider boundary points with negative curvature.  We will give another proof of \eqref{eq:asymp-disk-ext} below, which relies on a known result for a model with a Robin boundary condition.

\vspace{0.1cm}\noindent
To cover later all the cases with one notation we introduce for $R\in \mathbb R$
\[
\lambda^{\rm DN}(b,R) =\left\{
\begin{array}{ll} \lambda^{\rm DN}(bA,\cB_R)& \mbox{ if } R>0\\
 \hat \alpha b^{1/2}& \mbox{ if } R=0 \\
 \lambda^{\rm DN}(bA,\cB_{-R}^{\rm ext})& \mbox{ if } R<0\,,
 \end{array}
\right.
\]
and observe that
\begin{equation}\label{eq:asymp-const-curv}
\lambda^{\rm DN}(b,R) = \hat \alpha b^{1/2}  - \frac{\hat \alpha^2 +1}{3}  R^{-1}+ \mathcal O (b^{-1/2})\,.
\end{equation}

\vspace{0.1cm}\noindent
We will prove the following proposition.
\begin{proposition}\label{prop:lb-2D-B=1}
Let $\Omega$ be a regular domain in $\mathbb R^2$ and $A$ be a vector potential with a magnetic field $B=\curl A$ that does not vanish on $\overline\Omega$.
    Suppose that $B$  is $C^1$  on $\overline{\Omega}$ and that $B=1$  on a neighborhood of $\partial\Omega$. Then, the ground state energy of the D-to-N map $\Lambda_{b A}$ satisfies
 \begin{equation*}
 \lambda^{\rm DN}(b A,\Omega) \geq  \hat\alpha\, b^{\frac{1}{2}} -  \frac{\hat \alpha^2+1}{3}\, \max_{x\in\partial\Omega}k(x)+\mathcal O(b^{-1/6})\quad , \quad b\to+\infty\,,
 \end{equation*}
 where $k$ is the curvature of $\partial\Omega$.
\end{proposition}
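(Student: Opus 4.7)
The plan is to complement the upper bound of Proposition~\ref{prop:ub-2D-B=1} by localising the D-to-N ground state near $\partial\Omega$ via the concentration estimates, partitioning $\partial\Omega$ into short arcs, and on each arc comparing the magnetic energy with the one on an osculating disk (or disk exterior), whose D-to-N eigenvalue admits the two-term expansion \eqref{eq:asymp-const-curv}.

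Let $f_b$ be a normalised ground state of $\Lambda_{bA}$ and $u_b$ its magnetic harmonic extension, so that $\lambda^{\rm DN}(bA,\Omega)=\|(-i\nabla-bA)u_b\|^2_\Omega$ with $\|f_b\|_{\partial\Omega}=1$ and, by Corollary~\ref{corol:decay}, $\|u_b\|^2_\Omega=\mathcal O(b^{-1/2})$. Fix $\rho\in(0,\tfrac12)$ to be optimised and introduce a partition of unity $\sum_j\chi_j^2=1$ on $\partial\Omega$ with each $\chi_j$ supported in an arc of length $\mathcal O(b^{-\rho})$ around a point $p_j\in\partial\Omega$ and $|\nabla\chi_j|=\mathcal O(b^\rho)$, extended normally via the chart \eqref{eq:def-Phi_0}. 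Proposition~\ref{prop:identity} then gives
\[\lambda^{\rm DN}(bA,\Omega)=\sum_j\|(-i\nabla-bA)(\chi_j u_b)\|^2_\Omega-\sum_j\||\nabla\chi_j|u_b\|^2_\Omega,\]
where the localisation error is $\mathcal O(b^{2\rho})\|u_b\|^2_\Omega=\mathcal O(b^{2\rho-1/2})$.

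For each $j$, set $k_j=k(p_j)$ and let $\mathcal D_j$ be the osculating disk $\cB_{1/k_j}$ if $k_j>0$, the osculating exterior $\cB^{\rm ext}_{-1/k_j}$ if $k_j<0$, or $\R_+^2$ if $k_j=0$. In parallel coordinates $(t,s)$ centred at $p_j$ and after the gauge transformation of Appendix~\ref{appendix.A}, $\|(-i\nabla-bA)(\chi_j u_b)\|^2_\Omega$ takes the form \eqref{eq:ub-qf*}. The same integral expression with $k(s)$ replaced by the constant $k_j$ is precisely the magnetic energy on $\mathcal D_j$ of the function $\tilde u_j$ obtained by extending the gauged $\chi_j u_b$ by zero (which is meaningful for $b$ large, since $b^{-\rho}\ll |1/k_j|$), and one has $\|\tilde u_j\|^2_{\partial\mathcal D_j}=\|\chi_j u_b\|^2_{\partial\Omega}$. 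Using $|k(s)-k_j|=\mathcal O(b^{-\rho})$ on $\supp\chi_j$ together with the $t$-weighted moment estimates of Corollary~\ref{corol:decay}, the difference between the two quadratic forms sums to $\mathcal O(b^{-\rho})$. Combining this with the D-to-N variational inequality on $\mathcal D_j$ and \eqref{eq:asymp-const-curv}, with $k_j\leq k_{\max}:=\max_{\partial\Omega}k$ and $\sum_j\|\chi_j u_b\|^2_{\partial\Omega}=\|f_b\|^2_{\partial\Omega}=1$, we arrive at
\[\lambda^{\rm DN}(bA,\Omega)\geq\hat\alpha b^{1/2}-\tfrac{\hat\alpha^2+1}{3}k_{\max}+\mathcal O(b^{-\rho})+\mathcal O(b^{2\rho-1/2}).\]
The choice $\rho=1/6$ balances the two remainders and produces the announced $\mathcal O(b^{-1/6})$.

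The heart of the argument is the comparison step. The curvature $k(s)$ enters the quadratic form \eqref{eq:ub-qf*} both through the metric factor $(1-tk(s))^{\pm 1}$ and through the magnetic correction $b(t-\tfrac12 t^2k(s))$ inside $|-i\partial_s+\cdot|^2$; replacing $k(s)$ by $k_j$ generates errors whose integrand is pointwise $\mathcal O(b^{-\rho})$ times a $t$-weighted density of the magnetic energy. The crucial point is that Corollary~\ref{corol:decay} bounds these weighted integrals by $\mathcal O(1)$ uniformly in $b$, as a consequence of the $b^{-1/2}$-concentration of $u_b$ at the boundary, so the summed matching error is genuinely $\mathcal O(b^{-\rho})$ and not of order $\mathcal O(b^{1/2-\rho})$. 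This is what makes $\rho=1/6$ simultaneously compatible with the IMS and matching constraints, and it is the same mechanism that produces the $\mathcal O(b^{-1/6})$ remainder in the upper bound of Proposition~\ref{prop:ub-2D-B=1}.
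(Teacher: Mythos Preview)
Your approach is essentially the same as the paper's: localise the ground state via IMS along a tangential partition of scale $b^{-\rho}$, transplant each piece to the osculating constant-curvature model, invoke the two-term disk/exterior-disk asymptotics \eqref{eq:asymp-const-curv}, control the matching error by the weighted moment bounds of Corollary~\ref{corol:decay}, and balance $\mathcal O(b^{2\rho-1/2})$ against $\mathcal O(b^{-\rho})$ at $\rho=1/6$.

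There is one technical point you gloss over that the paper handles explicitly. Your $\chi_j$ are a partition of unity on $\partial\Omega$, extended normally via $\Phi_0$; this makes $\sum_j\chi_j^2=1$ only on the tubular neighbourhood $\Omega_0$, not on all of $\Omega$, so the IMS identity you write is not literally valid. Moreover, the transplant of $\chi_j u_b$ to the osculating disk $\cB_{1/k_j}$ requires the normal support to lie in $(0,1/k_j)$, which is not guaranteed if you extend $\chi_j$ all the way to $t_0$ (the condition $b^{-\rho}\ll|1/k_j|$ you mention concerns arc length, not normal depth). The paper fixes both issues at once by first inserting a normal cutoff $\chi_1$ supported in $\{t(x)<b^{-\rho}\}$ and using the exponential decay of Proposition~\ref{prop:conc-dec} to discard the interior contribution at cost $\mathcal O(b^{-\infty})$; only then does it introduce the tangential partition. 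With that preliminary step added, your argument goes through exactly as written.
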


\subsubsection{Warmup}

Consider $R_*>0$, $\rho\in(0,\frac12)$ and
a smooth complex-valued function $\tilde u$ on $(0,t_0)\times(\mathbb R/L\mathbb Z)$ such that 
\[\mathrm{supp}\,\tilde u\subset [s_1,s_2]\times(0,b^{-\rho}),\]
where $|s_2-s_1|<2\pi R_*<L$, $b\geq b_0$ and $R_*>b^{-\rho}$.
We introduce the energy
\[
    q_*(\tilde u)=
    \int_{\mathbb R/L\mathbb Z}\int_0^{t_0}\Bigl(|\partial_t\tilde u|^2+(1-R^{-1}_*t)^{-2}|(-i\partial_s+b(t-\mbox{$\frac12$}R^{-1}_*t^2)\tilde u| \Bigr)^2(1-R_*^{-1}t)dtds.
\]
Using \eqref{eq:asymp-disk} and reverting to polar coordinates with the change of variables $r=R_*-t$, $\theta=s-s_1$, we get
\begin{equation}\label{eq:warmup}
q_*(\tilde u)\geq  \lambda^{\rm DN}(b,{R_*}) \int_{\mathbb R/L\mathbb Z}|\tilde u(0,s)|^2ds.
\end{equation}

\vspace{0.1cm}\noindent
A similar analysis applies if $R_*\leq 0$, by using \eqref{eq:asymp-const-curv}, and we find that \eqref{eq:warmup} continues to hold in this case too. 
\subsubsection{A perturbed model}

To deal with the exterior of a  disk, or  more general exterior/interior domains, we consider an approximate model with a constant curvature $\beta$. 

Consider $\beta\in\R$, $\rho\in(\frac14,\frac12)$, and the following energy
\[q^{\rm app}_\beta(u)=\int_{\R}\int_0^{b^{-\rho}}\Bigl(|\partial_tu|^2+(1+2t\beta)
\bigl|\bigl(-i\partial_s+b(t-\mbox{$\frac12$}t^2\beta)\bigr)u|^2\Bigr)(1-t\beta)dtds, \]
where $u\in C_c^\infty\bigl(\R\times(0,b^{-\rho})\bigr)$. 
\begin{lemma}
    \label{lem:pert-model}
    There exist constants $C,b_0>0$, such that, for $b\geq b_0$ and $u\in C_c^\infty\bigl(\R\times(0,b^{-\rho})\bigr)$, we have
    \[q^{\rm app}_\beta(u)\geq \Bigl(b^{1/2}\hat\alpha-\frac{\hat\alpha^2+1}{3}\beta -Cb^{-\frac12}\Bigr)\int_{\R}|u(0,s)|^2ds. \]
    Moreover, the constants $C,b_0$ can be chosen independently of $\beta$ when it varies in a bounded interval.
\end{lemma}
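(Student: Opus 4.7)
The plan is to Fourier transform in $s$ and rescale $\tau = b^{1/2}t$, reducing $q^{\rm app}_\beta$ to $b^{1/2}$ times a direct integral of one-dimensional D-to-N quadratic forms; then to prove a uniform-in-$\eta$ lower bound on the 1D forms using the profile $f_*$ from Proposition~\ref{prop:link-1D} and the identities recorded there.

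\textbf{Reduction.} Writing $u(t,s) = (2\pi)^{-1/2}\int \hat u(t,\xi)e^{is\xi}\,d\xi$, Plancherel's identity in $s$ decouples the $s$-variable. Setting $\tau = b^{1/2}t$, $\eta = -b^{-1/2}\xi$, $\epsilon = b^{-1/2}\beta$, $T = b^{1/2-\rho}\to+\infty$ and $\tilde u(\tau,\xi) = \hat u(b^{-1/2}\tau,\xi)$, a direct calculation gives
\[
q^{\rm app}_\beta(u) = b^{1/2}\int_\R\mathcal Q_\epsilon^T\bigl(\tilde u(\cdot,\xi);\eta(\xi)\bigr)\,d\xi,\qquad \int_\R|u(0,s)|^2\,ds=\int_\R|\tilde u(0,\xi)|^2\,d\xi,
\]
\[
\mathcal Q_\epsilon^T(v;\eta)=\int_0^T\Bigl[(1-\epsilon\tau)|v'|^2+(1-\epsilon\tau)(1+2\epsilon\tau)\bigl|(\tau-\eta)-\tfrac12\epsilon\tau^2\bigr|^2|v|^2\Bigr]d\tau.
\]
Dividing the target inequality by $b^{1/2}$, the lemma reduces to the 1D bound $\mathcal Q_\epsilon^T(v;\eta)\geq\bigl[\hat\alpha-\tfrac{\hat\alpha^2+1}{3}\epsilon-C\epsilon^2\bigr]|v(0)|^2$, uniformly in $\eta\in\R$ and $v\in H^1(0,T)$, with $C$ independent of $\eta$ and of $\beta$ in any bounded interval.

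\textbf{1D inequality.} Substitute $v=f_*h$. Using the ODE $-f_*''+(\tau-\hat\alpha)^2f_*=0$ and the boundary values $f_*(0)=1,\ f_*'(0)=-\hat\alpha$, integration by parts (with an exponentially small error $O(e^{-cT})$ at $\tau=T$ coming from the Schwartz decay of $f_*$) gives
\[
\int_0^T|v'|^2\,d\tau=\hat\alpha|v(0)|^2+\int_0^T|f_*|^2|h'|^2\,d\tau-\int_0^T(\tau-\hat\alpha)^2|f_*|^2|h|^2\,d\tau+O(e^{-cT}).
\]
Expanding the tangential factor as $(1-\epsilon\tau)(1+2\epsilon\tau)|(\tau-\eta)-\tfrac12\epsilon\tau^2|^2=(\tau-\eta)^2-\epsilon\eta\tau(\tau-\eta)+O(\epsilon^2\,p(\tau,\eta))$ and regrouping,
\[
\mathcal Q_\epsilon^T(f_*h;\eta)=\hat\alpha|h(0)|^2+\mathcal P_0(h;\eta)+\epsilon\,\mathcal P_1(h;\eta)+O\bigl(\epsilon^2\|f_*h\|^2+e^{-cT}\bigr),
\]
where $\mathcal P_0(h;\eta)=\int_0^T|f_*|^2|h'|^2\,d\tau+\int_0^T\bigl[(\tau-\eta)^2-(\tau-\hat\alpha)^2\bigr]|f_*|^2|h|^2\,d\tau\geq 0$ (the nonnegativity is equivalent to $\mu(-\hat\alpha,\eta)\geq 0=\mu(-\hat\alpha,\hat\alpha)$). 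The key computation is $\mathcal P_1(1;\hat\alpha)=-(\hat\alpha^2+1)/3$, obtained from parts (iii)--(iv) of Proposition~\ref{prop:link-1D} by the same calculation as at the end of the proof of Proposition~\ref{prop:ub-2D-B=1}.

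\textbf{Main obstacle.} The hard step is absorbing $\epsilon\,\mathcal P_1(h;\eta)$ into the nonnegative $\mathcal P_0(h;\eta)$ at the cost of an $O(\epsilon^2|v(0)|^2)$ error uniform in $\eta$. I would split into two regimes. (a) For $|\eta-\hat\alpha|\leq\delta$, write $\eta=\hat\alpha+\epsilon a$ and $h=1+\epsilon g$; the stationarity of $(\hat\alpha,f_*)$ kills the first variation, so the first-order extremum over $(a,g)$ is exactly $-(\hat\alpha^2+1)/3\cdot\epsilon|v(0)|^2$ plus an $O(\epsilon^2)$ remainder. (b) For $|\eta-\hat\alpha|>\delta$, the non-degenerate minimum of $\mu(-\hat\alpha,\cdot)$ at $\hat\alpha$ yields $\mathcal P_0(h;\eta)\geq c(\delta)\bigl(|v(0)|^2+\|f_*h\|^2\bigr)$, and a Cauchy--Schwarz bound $\epsilon|\mathcal P_1(h;\eta)|\leq\tfrac12\mathcal P_0(h;\eta)+C\epsilon^2|v(0)|^2$ finishes the estimate. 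Finally, $b^{1/2}\cdot O(e^{-cT})=O(b^{1/2}e^{-cb^{1/2-\rho}})=o(b^{-1/2})$ is negligible since $\rho<1/2$, and the constants are independent of $\beta$ as long as $\epsilon=\beta b^{-1/2}$ stays bounded.
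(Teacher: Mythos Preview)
Your reduction via Fourier transform in $s$ and rescaling to the one-dimensional forms $\mathcal Q_\epsilon^T(\cdot;\eta)$ is correct, as is the ground-state substitution $v=f_*h$ and the identification $\mathcal P_1(1;\hat\alpha)=-(\hat\alpha^2+1)/3$. The approach is genuinely different from the paper's: the paper never works at the 1D level but instead passes to the Robin problem $q^{\rm app}_\beta(u)-\hat\lambda\|u(0,\cdot)\|^2$ and quotes the two-term asymptotics of its ground state energy from \cite[Lemma~V.9]{Ka06} and \cite[Lemma~B.4]{FLRV}. One then chooses $\hat\lambda_*=\hat\alpha b^{1/2}-\tfrac{\hat\alpha^2+1}{3}\beta-\zeta b^{-1/2}$ so that this Robin energy is nonnegative, and the D-to-N lower bound follows immediately. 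This bypasses precisely the ``main obstacle'' you flag.

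Your treatment of that obstacle has real gaps. First, the remainder you write as $O(\epsilon^2\|f_*h\|^2)$ is not uniform in $\eta$: the exact second-order terms contain $\epsilon^2\tau^2(\tau-\eta)^2|v|^2$, which integrates to something of size $\epsilon^2\eta^2\|v\|^2$ for large $|\eta|$. Second, in regime~(a) you set $h=1+\epsilon g$, but the inequality must hold for \emph{every} $v$, not only for near-minimizers; there is no reason an arbitrary $h$ is $\epsilon$-close to a constant. Third, in regime~(b) the bound $\epsilon|\mathcal P_1|\le\tfrac12\mathcal P_0+C\epsilon^2|v(0)|^2$ is asserted but not proved: $\mathcal P_1$ contains $-\int\tau|v'|^2$ and $-\eta\int\tau(\tau-\eta)|v|^2$, and neither is dominated by $\mathcal P_0\ge\mu(-\hat\alpha,\eta)\|v\|^2$ uniformly in $h$ and $\eta$ (recall $\mu(-\hat\alpha,\eta)\to 1$ as $\eta\to+\infty$, while $\mathcal P_1$ grows quadratically in $\eta$). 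These issues are not unfixable---for large $|\eta|$ the full $(\tau-\eta)^2$ term in $\mathcal Q_\epsilon^T$ does dominate---but the argument as written does not supply the needed uniform estimates. The paper's Robin route is both shorter and avoids this uniformity problem, at the cost of relying on the cited external results.
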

\begin{proof}~\medskip

\vspace{0.1cm}\noindent
{\bf Step 1.}\medskip

Let  us introduce
\[\lambda=\inf_{\substack{u\in C_c^\infty(\R\times(0,b^{-\rho}))\\u|_{\{0\}\times\R}\not=0}}\frac{q^{\rm app}_\beta(u)}{\int_\R|u(0,s)|^2ds}.\]
Arguing as in Propositions~\ref{prop:ub-2D} and Proposition~\ref{prop:lb}, we can show that
\[\lambda=\hat\alpha b^{1/2}+o(b^{1/2}).\]

\vspace{0.1cm}\noindent
{\bf Step 2.}\medskip 

For $d\in\R$, let $$\hat\lambda(d)=\hat\alpha b^{1/2}+d\,.$$
We introduce the ground state energy
\[G(\hat\lambda(d))=\inf_{\substack{u\in C_c^\infty(\R\times(0,b^{-\rho}))\\u\not=0}}
\frac{q^{\rm app}_\beta(u)-\hat\lambda(d)\int_\R|u(0,s)|^2ds}{\int_{\R}\int_0^{b^{-\rho}}|u|^2(1-\beta t)dtds}.\]
With $\gamma=-b^{-1/2}\hat\lambda(d)$ and $\Theta(\gamma)$ as in \eqref{eq:DG-scaling},  we know from  \cite[Lemma~V.9]{Ka06} 
\[G(\hat\lambda(d))= \Bigl(\Theta(\gamma) b- \beta C_1(\gamma)b^{1/2}+\mathcal O(b^{-\rho+\frac14})\Bigr)\int_{\R}\int_0^{b^{-\rho}}|u|^2(1-\beta t)dtds ,\]
where
\[ C_1(\gamma)=-\bigl\langle \bigl((\tau-\xi(\gamma))^3+\partial_\tau\bigr)f_\gamma,f_\gamma\bigr\rangle_{L^2(\R_+)},\]
and $f_\gamma$ is a normalized ground state of $\Theta(\gamma)$.\medskip

\vspace{0.1cm}\noindent

\vspace{0.1cm}\noindent
This constant is calculated in \cite[Lemma~B.4]{FLRV},
\begin{equation}\label{eq:def-C1}
C_1(\gamma)=\frac13\bigl(1-\gamma\xi(\gamma)\bigr)|u_\gamma(0)|^2.
\end{equation}
Knowing that $\gamma=-\hat\alpha -d b^{-1/2}$ and that $\Theta(\cdot)$, $\xi(\cdot)$, and $u_{\cdot}$ are smooth,  we get by Taylor's expansion at $\hat \alpha$
\[\begin{gathered}
\Theta(\gamma)=-d\Theta'(-\hat\alpha)b^{-1/2}+\mathcal O(d^2b^{-1}),\\ 
\xi(\gamma)=\hat\alpha-\frac{d\Theta'(-\hat\alpha)}{2\hat\alpha}b^{-1/2}+\mathcal O(d^2b^{-1}),\\
C_1(\gamma)=
\frac13(1+\hat\alpha^2)|u_{-\hat\alpha}(0)|^2+\mathcal O(db^{-1/2}).\end{gathered}\]
We can choose $\zeta>0$,
\[ d=d_*=-\frac{\beta}{3}(1+\hat\alpha^2)\frac{|u_{-\hat\alpha}(0)|^2}{\Theta'(-\hat\alpha)}-\zeta b^{-1/2},\] 
and $\hat\lambda_*=\hat\lambda(d_*)$ such that 
\[G^{\rm app}(\hat\lambda_*):=\Theta(\gamma) b- \beta C_1(\gamma)b^{1/2}+\mathcal O(b^{-\rho+\frac14})\geq 0.\]

\vspace{0.1cm}\noindent
{\bf Step 3.}\medskip

Returning to $\lambda$ introduced in Step~1, and by writing
\[ q^{\rm app}(u)=q^{\rm app}(u)-\hat\lambda_*\int_{\R}|u(0,s)|^2ds+\hat\lambda_*\int_{\R}|u(0,s)|^2ds
\]
we get
\[ q^{\rm app}(u)\geq G^{\rm app}(\hat\lambda_*)\int_0^{b^{-\rho}}|u|^2(1-\beta t)dtds+\hat\lambda_*\int_{\R}|u(0,s)|^2ds\geq \hat\lambda_*\int_{\R}|u(0,s)|^2ds, \]
and consequently,
\[\lambda\geq\hat\lambda_*= \hat\alpha b^{1/2}+d_*. \]
To finish the proof, notice that, by \cite[Proposition~II.5]{Ka06},  $$ \Theta'(-\hat\alpha)=|u_{-\hat\alpha}(0)|^2\,.$$ 
 Hence 
\[d_*=-\frac{\beta}{3}(1+\hat\alpha^2)-\zeta b^{-1/2}\,.
\]
\end{proof}
\subsubsection{ Reduction to a tubular domain}

To simplify the presentation, suppose that $\Omega$ is simply connected. Let $f_b$ be a normalized eigenfunction of the D-to-N operator, and denote by $u=u_b$ the magnetic harmonic extension of $f_b$. Consider 
a partition of unity on $\Omega$, $\chi_1^2+\chi_2^2=1$, with $\mathrm{supp}\,\chi_1\subset \{t(x)<b^{-\rho}\}$, where $t(x)=\mathrm{dist}(x,\partial\Omega)$. Then, thanks to Proposition~\ref{prop:conc-dec} and Corollary~\ref{corol:decay},
\begin{equation}\label{eq:disk}
\begin{aligned}
\lambda^{\rm DN}(bA,\Omega)&=\|(-i\nabla-bA)u\|^2_\Omega\\
&=\|(-i\nabla-bA)\chi_1u\|^2_\Omega+\mathcal O(b^{-\infty}).
\end{aligned}
\end{equation}
The support of $\chi_1u$ is contained in $\Omega_0=\{t(x)<t_0\}$, which is diffeomorphic to $\mathcal S:=(0,t_0)\times (\mathbb R/L\mathbb Z)$. 

\subsubsection{Reduction to  constant curvature}

We work with the parallel coordinates $(t,s)=\Phi_0^{-1}(x)$ and argue as in Step~2 in the proof of Proposition~\ref{prop:lb}. In fact, we introduce a partition of unity of $\mathbb R/L\mathbb Z$
\[\sum_{j}g_j^2=1,\quad \sum_{j}|\nabla g_j|=\mathcal O(b^{\delta}),\quad\mathrm{supp}\,g_j\subset[s_{j}-b^{-\delta},s_{j}+b^{-\delta}],\]
where $0<\delta<\frac12$ is to be chosen later on. We write
\begin{equation}\label{eq:curv-dec}
\|(-i\nabla-bA)\chi_1u\|^2_\Omega=\sum_{j}q(\tilde u_j)+\mathcal O(b^{2\delta}\|u\|_\Omega^2),\end{equation}
where
\[ q(\tilde u_j)=\int_{\mathcal S}\Bigl( |\partial_t\tilde u_j|^2+(1-tk(s))^{-2}|(-i\partial_s-b\,a(t,s))\tilde u_j|^2
\Bigr)(1-tk(s))dsdt, \]
and
\[a(s,t)=-t+\mbox{$\frac12$}t^2k(s),\quad \tilde u_j=g_j(s)e^{i\varphi_j(s,t)} (\chi_1 u\circ\Phi_0^{-1})(s,t).\]
Writing
\[a_j(s,t)=-t+\mbox{$\frac12$}R_j^{-1}t^2,\quad 
R_j^{-1}=\begin{cases}k_j&\mbox{if }k_j\not=0\\
0&\mbox{if }k_j=0\end{cases},\quad k_j=k(s_j),   \]
we have
\[
\begin{gathered}a(s,t)=a_j(s,t)+\mathcal O(b^{-\delta}t^2),\\
1-tk(s)=1-R_j^{-1}+\mathcal O(b^{-\delta}t),\\ 
(1-tk(s))^{-2}=(1-R_j^{-1}t)^{-2}+\mathcal O(b^{-\delta}t),\end{gathered}\]
and
\[q(\tilde u_j)=q_j(\tilde u_j)+\mathcal O(E_{j,1})+\mathcal O(E_{j,2})+\mathcal O(E_{j,3})\]
where
\[\begin{gathered}
   q_j(\tilde u_j)=\int_{\mathcal S}\Bigl(|\partial_t\tilde u|^2+(1-R_j^{-1}t)^{-2}|(-i\partial_s+b(t-\mbox{$\frac12$}R^{-1}_jt^2)\tilde u|^2 \Bigr)(1-R_j^{-1}t)dtds\\
   E_{j,1}=b^{-\delta}\int \bigl(|\partial_t\tilde u_j|^2+|(-i\partial_s-b\,a)\tilde u_j|^2\bigr)t\,dtds,\\
   E_{j,2}=b^{2-2\delta}\int |\tilde u_j|^2 t^4dtds,\\
   E_{j,3}=b^{1-\delta}\|t^2\tilde u_j\|_{2}\|(-i\partial_s-b\,a)\tilde u_j\|_2.
\end{gathered}\]
Using Corollary~\ref{corol:decay}, we write
\[ \sum_{j}E_{j,1}=\mathcal O(b^{-\delta}),\quad \sum_{j}E_{j,2}=\mathcal O(b^{-\frac12-2\delta}),\quad \sum_jE_{j,3}=\mathcal O(b^{-\delta}),\]
and by using \eqref{eq:warmup},  we get
\[ q(\tilde u_j)\geq \lambda^{\rm DN}(b,R_j)\int_{\R/L\Z}|\tilde u_j(0,s)|^2ds+\mathcal O(b^{-\delta})\quad\mbox{if }k_j>0,\]
and a similar estimate when $k_j\leq 0$.
Inserting this into \eqref{eq:curv-dec} and noticing that,
\[ \begin{gathered}
    \sum_j\int_{\R/L\Z}|\tilde u_j(0,s)|^2ds=\int_{\partial\Omega}|u|^2ds=1,\\
b^{2\delta}\|u\|_\Omega^2=\mathcal O(b^{-\frac12+2\delta})\mbox{ by Corollary~\ref{corol:decay}},
\end{gathered}\]
we get
\[ \|(-i\nabla-bA)\chi_1u\|^2\geq \inf_{j}\lambda^{\rm DN}(b,R_j)+\mathcal O(b^{-\delta})+\mathcal O(b^{2\delta-\frac12}).\]
By invoking \eqref{eq:asymp-const-curv} and using that
\[R_j^{-1}=k(s_j)\leq \max_{x\in\partial\Omega}k_x,\]
we get
\[\inf_{j}\lambda^{\rm DN}(b,R_j)\geq \hat \alpha b^{\frac12}-  \frac{\hat \alpha^2+ 1}{3} \max_{x\in\partial\Omega}k_x+\mathcal O(b^{-1/2}).\]

\vspace{0.1cm}\noindent
Optimizing over $\delta$, we choose $\delta=\frac16$, and we get from \eqref{eq:disk}
\[ \lambda^{\rm DN}(bA,\Omega)\geq \hat \alpha b^{\frac12}-  \frac{\hat \alpha^2+ 1}{3}\max_{x\in\partial\Omega}k_x+\mathcal O(b^{-\frac16}).\]

\section{Splitting of eigenvalues in 2D}\label{sec:splitting}
\begin{assumption}\label{ass:max-curv}
    Suppose that $\Omega$ is a regular domain such that the curvature $k$ has a unique maximum of curvilinear coordinate $0$, and that $k_2:=-k''(0)>0$.
\end{assumption}

\begin{theorem}\label{thm:splitting-2D}
    Suppose that Assumption~\ref{ass:max-curv} holds.  Let $A$ be a vector field on $\Omega$ with constant magnetic field $\curl A=1$. Then, for every $j\in\N$,  the $j$'th eigenvalue of the D-to-N map satisfies
    \[ \lambda_j(bA,\Omega)=\hat\alpha b^{\frac12}-\frac{\hat\alpha^2+1}{3}\max_{x\in\partial\Omega}k(x)+(2j-1)c_*b^{-1/4}+o(b^{-1/4}) \quad\mbox{ as }b\to+\infty,\]
where $c_*=c_*(\Omega)>0$ is a  constant.
\end{theorem}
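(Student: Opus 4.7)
The plan is to exploit the Robin--D-to-N correspondence sketched in Section~\ref{sec:splitting}: for $\hat\lambda\in\R$, let $\mu_j(bA,\hat\lambda,\Omega)$ denote the $j$'th eigenvalue of the quadratic form
\[q^{\rm R}_{\hat\lambda}(u)=\|(-i\nabla-bA)u\|_\Omega^2-\hat\lambda\|u\|_{\partial\Omega}^2\,.\]
By the min-max principle applied to the characterization in \eqref{eq:def-var-DN-j}, one has
\[\lambda_j^{\rm DN}(bA,\Omega)=\inf\bigl\{\hat\lambda\in\R\colon \mu_j(bA,\hat\lambda,\Omega)\leq 0\bigr\}\,,\]
so it suffices to control $\mu_j(bA,\hat\lambda,\Omega)$ through three terms in the regime $\hat\lambda=\hat\alpha b^{1/2}+\mathcal O(1)$ and to invert the resulting expansion.

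The first step is to establish, under Assumption~\ref{ass:max-curv}, the three-term asymptotics
\[\mu_j(bA,\hat\lambda,\Omega)=\Theta(\gamma)\,b-C_1(\gamma)\,k_{\max}\,b^{1/2}+(2j-1)\,c(\gamma)\,b^{1/4}+o(b^{1/4})\,,\]
with $\gamma=-b^{-1/2}\hat\lambda$, $k_{\max}=\max_{\partial\Omega}k$, $C_1(\gamma)$ the constant in \eqref{eq:def-C1}, and $c(\gamma)>0$ a smooth function of $\gamma$ near $-\hat\alpha$. This is the natural Robin analogue of the expansions obtained by Fournais--Helffer, Raymond, and Fournais--Lotoreichik--Raymond--Vogel for the Neumann problem in surface superconductivity. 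I would follow the standard path: parallel coordinates $(s,t)$ as in \eqref{eq:def-Phi_0}, Agmon-type tangential decay localizing eigenfunctions in a window $|s|\leq b^{-1/4+\eta}$ around the maximum of curvature (using the non-degeneracy $k_2=-k''(0)>0$), and normal decay on scale $t\sim b^{-1/2}$ as in Proposition~\ref{prop:conc-dec}. After gauge transformation and the rescaling $s=b^{-1/4}\sigma$, $t=b^{-1/2}\tau$, the operator splits, via a Grushin/Feshbach reduction onto the fibre ground state $f_\gamma$ of the one-dimensional Robin model of Subsection~\ref{sec:1D-Robin}, into a transverse part yielding $\Theta(\gamma)b-C_1(\gamma)k_{\max}b^{1/2}$ and a tangential harmonic oscillator of the form $-\partial_\sigma^2+\tfrac{1}{2}\mu''_{\xi\xi}(\gamma,\xi(\gamma))\,k_2\,|u_\gamma(0)|^2\,\sigma^2/\,(\cdots)$ whose $j$'th eigenvalue is exactly $(2j-1)c(\gamma)$. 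Smoothness of $c(\gamma)$ in $\gamma$ at $\gamma=-\hat\alpha$ comes from analytic perturbation theory applied to $\mu(\gamma,\xi;1)$, exactly as in \cite[Proposition II.5]{Ka06}.

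Once this expansion is in hand, the inversion is algebraic. Inserting the ansatz $\hat\lambda=\hat\alpha b^{1/2}+a+\kappa b^{-1/4}$ into the expansion and using the Taylor expansions at $\gamma=-\hat\alpha$,
\[\Theta(\gamma)=-\Theta'(-\hat\alpha)(a b^{-1/2}+\kappa b^{-3/4})+\mathcal O(b^{-1})\,,\qquad C_1(\gamma)=\tfrac{1}{3}(1+\hat\alpha^2)\,\Theta'(-\hat\alpha)+o(1)\,,\]
together with the identity $\Theta'(-\hat\alpha)=|u_{-\hat\alpha}(0)|^2$ from \eqref{eq:der-Theta}, the condition $\mu_j(bA,\hat\lambda,\Omega)=o(b^{1/4})$ forces
\[a=-\frac{\hat\alpha^2+1}{3}\,k_{\max}\,,\qquad \kappa=(2j-1)\,c_*\,,\qquad c_*:=\frac{c(-\hat\alpha)}{\Theta'(-\hat\alpha)}>0\,.\]
Choosing $\kappa$ slightly above (respectively below) this value yields $\mu_j<0$ (resp.\ $>0$) for $b$ large, and the min-max identity above then provides matched upper and lower bounds on $\lambda_j^{\rm DN}(bA,\Omega)$, proving Theorem~\ref{thm:splitting-2D} with $c_*(\Omega)=c(-\hat\alpha)/\Theta'(-\hat\alpha)$.

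The main technical obstacle lies in the first step, specifically in proving the three-term expansion uniformly for $\gamma$ in a small neighborhood of $-\hat\alpha$ and for every fixed $j$; a single value of $\gamma$ would not suffice because the inversion needs smoothness of $\Theta(\gamma)$, $C_1(\gamma)$, $c(\gamma)$ to absorb the $\mathcal O(1)$ and $\mathcal O(b^{-1/4})$ corrections in $\hat\lambda$. The Agmon tangential estimates must therefore be written with constants controlled uniformly in $\gamma$, which is possible because the Robin harmonic oscillator eigenpair $(\Theta(\gamma),f_\gamma)$ depends smoothly on $\gamma$ and the non-degeneracy of the minimum of curvature is an assumption intrinsic to $\Omega$. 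The remainder of the argument—the harmonic approximation and the Feshbach reduction—is technical but follows the now classical scheme of \cite{HeMo, Ray1, FLRV}, suitably adapted to the Robin trace term.
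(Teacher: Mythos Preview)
Your approach is essentially the same as the paper's: link the D-to-N eigenvalues to the zeros of the Robin eigenvalue curve $\gamma\mapsto\mu_j^{\rm R}(\gamma,b)$, invoke a three-term expansion of the Robin eigenvalues, and invert by Taylor expansion around $\gamma=-\hat\alpha$. Your inversion step and the resulting constant $c_*=c(-\hat\alpha)/\Theta'(-\hat\alpha)$ match the paper exactly (the paper writes $C_2(\gamma)$ for your $c(\gamma)$).

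The one substantive difference is economy. You propose to \emph{re-derive} the three-term Robin expansion via Agmon estimates, rescaling, and a Grushin reduction. The paper instead simply quotes this expansion from \cite{FLRV} (Fahs--Le\,Treust--Raymond--V\~u\,Ng\d{o}c, not Fournais--Lotoreichik--Raymond--Vogel), where it is proved locally uniformly in $\gamma$, together with the explicit formula $C_2(\gamma)=\tfrac12\sqrt{k_2\,C_1(\gamma)\,\partial_\xi^2\mu(\gamma,\xi(\gamma))}$. So your ``main technical obstacle'' is already in the literature; once you cite \cite{FLRV}, the proof collapses to a few lines of algebra, as the paper does. Your min-max identity $\lambda_j^{\rm DN}=\inf\{\hat\lambda:\mu_j\leq 0\}$ is a clean alternative to the paper's Proposition~\ref{prop:zero-R}; the paper phrases the correspondence as $\mu_j^{\rm R}(\gamma_j(b),b)=0$ and handles the simplicity hypothesis a posteriori via the splitting in the three-term expansion itself.
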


\subsection{Robin eigenvalues}

We will establish a link between the D-to-N eigenvalues, and the eigenvalues of the magnetic Laplacian with Robin boundary condition. For $j\in\N$ and $\gamma\in\R$, we introduce the eigenvalues
\[\mu_j^{\rm R}(\gamma,b)=\min_{\mathrm{dim}(M)=j}\Bigl(\max_{\|u\|_\Omega=1}
\|(-i\nabla-bA)u\|^2_\Omega+ b^{1/2}\gamma\|u\|_{\partial\Omega}^2 \Bigr).\]
\begin{proposition}\label{prop:zero-R}
For all $j\in\N$ and $b>0$, the equation
\[\mu_j^{\rm R}(\gamma,b)=0\]
has a unique solution $\gamma_j(b)$, and $\gamma_j(b)<0$.  Furthermore, 
\[\lambda_1^{\rm DN}(bA,\Omega)=- b^{1/2}\gamma_1(b), \]
and if for all $j\leq N$,  the eigenvalue $\mu_j(\gamma_j(b),b)$ is simple,  then
\[\lambda_j^{\rm DN}(bA,\Omega)=- b^{1/2}\gamma_j(b)\quad \mbox{ for }2\leq j\leq N.\]
\end{proposition}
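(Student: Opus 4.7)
The plan is to realize $\gamma\mapsto\mu_j^{\rm R}(\gamma,b)$ as a continuous strictly increasing curve crossing zero exactly once at $\gamma_j(b)$, and then to identify $-b^{1/2}\gamma_j(b)$ with the $j$-th D-to-N eigenvalue via Dirichlet's principle. First I would note that for each fixed $u\in H^1(\Omega)$ the expression $q_\gamma(u)=\|(-i\nabla-bA)u\|_\Omega^2+b^{1/2}\gamma\|u\|_{\partial\Omega}^2$ is affine and non-decreasing in $\gamma$, so the min-max characterization of $\mu_j^{\rm R}(\gamma,b)$ is continuous and non-decreasing as well. A Hellmann--Feynman computation along simple branches gives $\partial_\gamma\mu_j^{\rm R}=b^{1/2}\|u_j\|_{\partial\Omega}^2$, and $\|u_j\|_{\partial\Omega}>0$ because a Robin eigenfunction with vanishing trace would also have vanishing normal derivative and hence vanish by unique continuation for the elliptic equation $(-i\nabla-bA)^2u=\mu_j^{\rm R}u$; this yields strict monotonicity. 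As $\gamma\to+\infty$ one gets $\mu_j^{\rm R}(\gamma,b)\to\mu_j^{\rm Dir}(bA,\Omega)>0$ (the diamagnetic inequality bounds this below by the non-magnetic Dirichlet eigenvalue), and as $\gamma\to-\infty$ testing on a $j$-dimensional subspace of functions with non-trivial boundary trace forces $\mu_j^{\rm R}(\gamma,b)\to-\infty$. The intermediate value theorem then produces a unique $\gamma_j(b)$ with $\mu_j^{\rm R}(\gamma_j(b),b)=0$.

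To obtain $\gamma_j(b)<0$ I would evaluate at $\gamma=0$: $\mu_j^{\rm R}(0,b)\geq\mu_1^{\rm R}(0,b)=\mu_1^{\rm N}(bA,\Omega)$, the first magnetic Neumann eigenvalue, which is strictly positive because a null vector would yield a local representation $u=c\,e^{ib\phi}$ with $\nabla\phi=A$, forcing $\curl A=0$ and contradicting $\curl A=1$.

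For the identification when $j=1$, Dirichlet's principle says that every $f\in H^{1/2}(\partial\Omega)$ has a unique magnetic harmonic extension $u_f$ minimizing $\|(-i\nabla-bA)u\|_\Omega^2$ subject to $u|_{\partial\Omega}=f$, and that minimum equals $\langle\Lambda_{bA}f,f\rangle_{\partial\Omega}$. Substituting harmonic extensions into the Rayleigh quotient shows
\[\mu_1^{\rm R}(\gamma,b)\leq 0\iff\lambda_1^{\rm DN}(bA,\Omega)\leq -b^{1/2}\gamma,\]
and strict monotonicity of $\mu_1^{\rm R}(\cdot,b)$ yields the equality $\lambda_1^{\rm DN}(bA,\Omega)=-b^{1/2}\gamma_1(b)$.

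For $j\geq 2$ under the simplicity hypothesis I would proceed by a dual correspondence. Any zero-energy Robin eigenfunction at $\gamma_j(b)$ satisfies $(-i\nabla-bA)^2u=0$, so it is magnetic harmonic and its boundary trace (non-zero by the unique continuation argument above) is an eigenfunction of $\Lambda_{bA}$ with eigenvalue $-b^{1/2}\gamma_j(b)$; moreover multiplicities match under the magnetic-harmonic/trace bijection. Simplicity of $\mu_j^{\rm R}(\gamma_j(b),b)$, combined with the ordering $\mu_1^{\rm R}\leq\mu_2^{\rm R}\leq\cdots$, forces the strict inequalities $\mu_{j-1}^{\rm R}(\gamma_j(b),b)<0<\mu_{j+1}^{\rm R}(\gamma_j(b),b)$, hence $\gamma_{j-1}(b)>\gamma_j(b)>\gamma_{j+1}(b)$ by strict monotonicity. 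Conversely every D-to-N eigenvalue $\lambda$ produces a zero of some $\mu_k^{\rm R}(\cdot,b)$ at $\gamma=-b^{-1/2}\lambda$, so it equals $-b^{1/2}\gamma_k(b)$ for some $k$. The first $N$ D-to-N eigenvalues are therefore exactly $-b^{1/2}\gamma_1(b)<\cdots<-b^{1/2}\gamma_N(b)$. The principal technical obstacle is this final bookkeeping: matching the Robin index $k$ with the min-max D-to-N index $j$, which is precisely what the simplicity hypothesis makes transparent; in the degenerate case one would have to track carefully the dimension jumps of the kernel at each $\gamma$-crossing.
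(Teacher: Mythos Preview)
Your argument is correct and follows the same overall strategy as the paper: monotonicity of $\gamma\mapsto\mu_j^{\rm R}(\gamma,b)$, a sign change to produce the unique zero $\gamma_j(b)<0$, and then the bijection between Robin zero modes and D-to-N eigenfunctions to match indices. Your version is in fact more self-contained than the paper's---you justify \emph{strict} monotonicity via Hellmann--Feynman and unique continuation (the paper only asserts it), and you obtain $\mu_j^{\rm R}(\gamma,b)\to-\infty$ by direct testing rather than by invoking the asymptotics $\mu_j^{\rm R}(\gamma,0)=-\gamma^2+o(\gamma^2)$ from \cite{DK} together with the comparison $\mu_j^{\rm R}(\gamma,b)\leq 2\mu_j^{\rm R}(\gamma,0)+2b^2\|A\|_\infty^2$.
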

\begin{proof}
Suppose that $j$ and $b$ are fixed. The function $\gamma\mapsto \mu_j(\gamma,b)$ is increasing and we know that $\lambda_j(0,b)>0$. By the min-max principle,
\[ \mu_j^{\rm R}(\gamma,b)\leq 2\mu_j^{\rm R}(\gamma,0)+2b^2\|A\|_\infty^2,\]
and we have the spectral asymptotics  \cite{DK}
\[\mu_j^{\rm R}(\gamma,0)=-\gamma^2+o(\gamma^2)\quad(\gamma\to-\infty).\]
This proves that $\gamma\mapsto\mu_j^{\rm R}(\gamma,b)$ has a unique zero $\gamma_j(b)$, and $\gamma_j(b)<0$. Furthermore, by monotonicity, we get $\gamma_{j+1}(b)\leq \gamma_j(b)$. 

Let  $\psi_j$ be an eigenfunction of $\mu_j(\gamma_j(b),b)$, $g_j=\psi_j|_{\partial\Omega}$ is an eigenfunction of the D-to-N operator with eigenvalue $-b^{1/2}\gamma_j(b)$. Hence, $(-b^{1/2}\gamma_j(b))$ is a non-decreasing sequence of eigenvalues of the D-to-N operator $\Lambda_{bA}$.

Consider an orthonormal basis  $\{f_j\colon j\in\N\}$ in $L^2(\partial\Omega)$ such that every $f_j$ is an eigenfunction of the D-to-N  operator, corresponding to the $j$'th eigenvalue $\lambda_j^{\rm DN}(bA,\Omega)$. If $u_j$ is the magnetic harmonic extension of $f_j$, then it is a zero mode for the magnetic Laplacian $(-i\nabla-bA)^2$  with Robin boundary condition
\[ \vec{\nu}\cdot(-i\nabla-bA)u_j=\lambda_j^{\rm DN}(bA,\Omega)u_j\quad \mbox{ on }\partial\Omega.\]

Since $\{u_j\colon j\in\N\}$ is linearly independent, we get that $\lambda^{\rm DN}_j(bA,\Omega)=-b^{1/2}\gamma_j(b)$, for all $j\leq N$, provided that 
the eigenvalues $\mu_1^{\rm R}(\gamma_1(b),b),\cdots, \mu_N^{\rm R}(\gamma_N(b),b)$ are simple.
\end{proof}
\subsection{Proof of Theorem~\ref{thm:splitting-2D}}

Suppose that $j\in\N$ and $\gamma\in\R$ are fixed. If $\Omega$ satisfies Assumption~\ref{ass:max-curv}, and if the magnetic field $B=\curl A$ is equal to $1$, it follows from \cite{FLRV}  that 
\begin{equation}\label{eq:asym-Robin-F}
\mu_j^{\rm R}(\gamma,b)=b\Theta(\gamma)-b^{1/2}
C_1(\gamma)\max_{x\in\partial\Omega}k(x)+(2j-1)b^{1/4}C_2(\gamma)+o(b^{1/4})\quad(b\to+\infty),
\end{equation}
locally uniformly with respect to $\gamma$. The coefficients in \eqref{eq:asym-Robin-F} are given as follows:
\begin{itemize}
    \item $\Theta(\gamma)$ is introduced in \eqref{eq:1D-Robin*};
    \item $C_1(\gamma)$ is introduced in \eqref{eq:def-C1};
    \item $C_2(\gamma)=\frac12\sqrt{k_2C_1(\gamma)\partial_\xi^2\mu(\gamma,\xi(\gamma))}$, $\mu(\gamma,\xi)$ is introduced in \eqref{eq:1D-Robin*}, and $\xi(\gamma)$ is introduced in \eqref{eq:th-xi}.
\end{itemize}
Thanks to  Propositions~\ref{prop:ub-2D-B=1}, \ref{prop:lb-2D-B=1} and \ref{prop:zero-R},  the unique zero of $\mu_j^{\rm R}(\gamma,b)=0$ satisfies
\[\gamma_j(b)=-b^{-1/2}\lambda_j^{\rm DN}(bA,\Omega)=-\hat\alpha +\frac{\hat\alpha^2+1}{3}k_* b^{-1/2}+o(b^{-1/2}),\]
where $k_*=\max_{x\in\partial\Omega}k(x)$. Furthermore,
\[ \Theta(-\hat\alpha)=0,\quad C_1(-\hat\alpha)=\frac{\hat\alpha^2+1}{3}\Theta'(-\hat\alpha).\]
We write by Taylor's formula at $-\hat\alpha$, 
\[\begin{aligned}
    \Theta(\gamma_j(b))&=(\gamma_j(b)+\hat\alpha)\Theta'(-\hat\alpha)+\mathcal O(b^{-1}),\\
C_1(\gamma_j(b))&=C_1(-\hat\alpha)+\mathcal O(b^{-1/2})=\frac{\hat\alpha^2+1}{3}\Theta'(-\hat\alpha)+\mathcal O(b^{-1/2}),\\
C_2(\gamma_j(b))&=C_2(-\hat\alpha)+o(1)\qquad (b\to+\infty).
\end{aligned}\]
Inserting these formulas into \eqref{eq:asym-Robin-F} and noting that $\mu_j^{\rm R}(\gamma_j(b),b)=0$, we get 
\[\gamma_j(b)=-\hat\alpha+\frac{\hat\alpha^2+1}{3}k_*b^{-1/2}-(2j-1)\frac{C_2(-\hat\alpha)}{\Theta'(-\hat\alpha)}b^{-3/4}+o(b^{-3/4}).\]
Thanks to Proposition~\ref{prop:zero-R}, we finish the proof of Theorem~\ref{thm:splitting-2D}, where the constant $c_*$ is 
\[c_*=\frac{C_2(-\hat\alpha)}{\Theta'(-\hat\alpha)}.\]

\section{The case of dimension 3}\label{sec:3D}

\subsection{Introduction}
 
\vspace{0.1cm}\noindent
In the unbounded case, the definition of the magnetic D-to-N map  is not quite as simple as in the case of bounded domains. For compactly supported magnetic fields,  the  D-to-N operator in the half-space $\R^3_+$  was well defined  in (\cite{Li}, Appendix B) using the Lax-Phillips method. For such  compactly supported magnetic fields, the solvability of the direct problem in an infinite slab $\Sigma$  was also studied  in \cite{KrUh}. We recall that an infinite slab is defined as 
\begin{equation}
	\Sigma = \{ x= (x', x_n) \in \R^n \ :\ x'= (x_1, ...., x_{n-1}) \in \R^{n-1}\ , \ 0<x_n<L\} \ , \  n \geq 3\,.
\end{equation}
At last, for non-compactly supported electromagnetic fields, the  D-to-N map on an unbounded open set $\Omega \subset \R^3$ corresponding to a closed waveguide was studied in \cite{Ki}. Here by closed waveguide, we mean that there exists a $C^2$ bounded open simply connected set $\omega \subset \R^2$ such that $\Omega \subset \omega \times \R$.

On the other hand, we do not really need to introduce the D-to-N operator in $\mathbb R^3_+$ but only use the corresponding  ground state energy given by the variational approach and this does not involve explicitly a D-to-N operator. So we choose
 to avoid to refer to this operator in the proofs and will come back to this question which is interesting in itself in the last subsection.

\subsection{The case of the half-space}

Following what was done in Surface Superconductivity (see Lu-Pan \cite{LuPa} and Helffer-Morame \cite{HeMo3}), we have to look at the non-homogeneous Dirichlet  problem in $\mathbb R^3_+=\{(x_1,x_2,x_3) \in \R^3\ ,\ x_1 >0\}$ for the family, parametrized by the angle $(\frac \pi 2 -\vartheta$) of the magnetic field $\vec{H}$  (considered as a vector in $\mathbb R^3$) with the  normal vector $(1,0,0)$ at a point of $\partial \mathbb R^3_+$.

\vspace{0.1cm}\noindent
As in (\cite{HeNi1}, Section 3), after scaling, we can always assume that the magnetic field $b=1$. Thus, we consider the Dirichlet realization of the magnetic Laplacian  in $\mathbb R^3_+$ :
\begin{equation}\label{eq:1.4aa}
H^{\rm Dir} (\vartheta):= D_{x_1}^2 + D_{x_2}^2 + (D_{x_3} + \cos \vartheta x_1-\sin \vartheta x_2)^2\ \ ,\ \vartheta \in [0, \frac{\pi}{2}]\,,
\end{equation}
 and consider (we implicitly assume the condition that the denominator is not zero)
 
\begin{equation}\label{eq:6.2}
\lambda^{DN} (\vartheta):= \inf_{\phi\in C_0^\infty(\overline{\mathbb R^3_+})} \frac{||\nabla_{A_ \vartheta}\phi||^2}{ \int |\phi(0,x_2,x_3)|^2 dx_2 dx_3}\,,
\end{equation}
	where 
\[
A=A_\vartheta=(0,0, -\cos \vartheta x_1+ \sin \vartheta x_2)\,.
\]

\vspace{0.1cm}\noindent
The main result of this subsection is the following :

\begin{proposition}\label{basduspectre}
One has :
\begin{equation}
\inf_{\vartheta \in [0, \frac{\pi}{2}]} \lambda^{DN} (\vartheta) =\hat \alpha\:=\alpha / \sqrt{2}\,,
\end{equation}
and this infimum is uniquely realized at $\vartheta=0$.
\end{proposition}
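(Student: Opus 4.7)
For the tangential case $\vartheta=0$, the operator $D_{x_1}^2+D_{x_2}^2+(D_{x_3}+x_1)^2$ is invariant under translations in both $x_2$ and $x_3$.  A Fourier transform in $x_3$ (with dual variable $\xi$) combined with product test functions $\phi(x_1,x_2,x_3)=f(x_1)g_N(x_2)e^{-i\xi x_3}$, where $g_N\in L^2(\R)$ is a normalized cutoff that spreads out as $N\to\infty$ (so that $\|g_N'\|_2/\|g_N\|_2\to 0$), reduces the variational ratio in \eqref{eq:6.2} to
\[\lambda^{DN}(0)=\inf_{\xi\in\R}\inf_{f}\frac{\int_0^{+\infty}\bigl(|f'(t)|^2+(t+\xi)^2|f(t)|^2\bigr)\,dt}{|f(0)|^2}=\hat\alpha,\]
the last equality being precisely the one-dimensional characterization \eqref{eq:def-alpha-1D}, with the minimum attained in the limit $\xi\to-\hat\alpha$.

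For $\vartheta\in(0,\pi/2]$, Fourier in $x_3$ again fibers the quadratic form, but the additional symmetry $x_2\mapsto x_2+\xi/\sin\vartheta$ (available since $\sin\vartheta\neq0$) now eliminates the dual variable $\xi$, so the problem reduces to the two-dimensional D-to-N quantity
\[\lambda^{DN}(\vartheta)=\inf_{u}\frac{\displaystyle\int_{\R_+^2}\bigl(|\partial_{x_1}u|^2+|\partial_{x_2}u|^2+V_\vartheta^2|u|^2\bigr)\,dx_1dx_2}{\displaystyle\int_{\R}|u(0,x_2)|^2\,dx_2},\quad V_\vartheta=\cos\vartheta\,x_1-\sin\vartheta\,x_2.\]
A rotation by $-\vartheta$ in the $(x_1,x_2)$-plane recasts this as a D-to-N problem for $-\Delta+y_1^2$ on the tilted half-plane $\Pi_\vartheta=\{\cos\vartheta\,y_1+\sin\vartheta\,y_2>0\}$; at $\vartheta=\pi/2$ a direct separation of variables in the rotated tangential/normal coordinates gives $\lambda^{DN}(\pi/2)=1>\hat\alpha$ explicitly.

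The main obstacle will be the strict inequality $\lambda^{DN}(\vartheta)>\hat\alpha$ throughout $(0,\pi/2]$, since the $\inf_\xi$ that produced $\hat\alpha$ in the tangential case is no longer at our disposal.  Following the strategy of Lu-Pan and Helffer-Morame for the analogous Neumann problem on a half-space, I would connect the D-to-N eigenvalue on $\Pi_\vartheta$ to a magnetic Robin ground state via the identity $\lambda^{DN}(\vartheta)=-\gamma(\vartheta)$, where $\gamma(\vartheta)$ is the unique zero of $\gamma\mapsto\mu^{\mathrm R}(\gamma,\vartheta)$, the Robin ground state of $-\Delta+y_1^2$ on $\Pi_\vartheta$ with boundary coefficient $\gamma$, and then establish strict monotonicity of $\vartheta\mapsto\mu^{\mathrm R}(-\hat\alpha,\vartheta)$ starting from $\mu^{\mathrm R}(-\hat\alpha,0)=0$ (which reduces to $\Theta(\gamma_0)=0$ via the tangential case).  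An alternative, more direct route in the spirit of Proposition~\ref{prop:lb}, is to combine the two-dimensional D-to-N estimate in the $x_1$ direction (yielding $\sqrt{\cos\vartheta}\,\hat\alpha\,|u(0,x_2)|^2$ after rescaling the one-dimensional Robin oscillator) with the harmonic oscillator bound in the tangential $x_2$ direction by splitting $V_\vartheta^2=(1-\varepsilon)V_\vartheta^2+\varepsilon V_\vartheta^2$ and optimizing over $\varepsilon$; this recovers the bound $\hat\alpha$ and delivers the strict improvement precisely when $\sin\vartheta>0$, so that the infimum is uniquely achieved at $\vartheta=0$.
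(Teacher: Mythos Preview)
Your reduction steps (Fourier in $x_3$, the translation in $x_2$ that kills $\tau$ when $\sin\vartheta\neq0$, and the explicit computations at $\vartheta=0$ and $\vartheta=\pi/2$) match the paper exactly. The second route you sketch for the strict inequality is also essentially the paper's argument, but as you describe it there is a genuine gap: splitting only the potential, $V_\vartheta^2=(1-\varepsilon)V_\vartheta^2+\varepsilon V_\vartheta^2$, is not enough. The block $|\partial_{x_2}u|^2+\varepsilon V_\vartheta^2|u|^2$ only controls the \emph{bulk} $L^2$-norm via the harmonic oscillator, and you then have no normal derivative left to convert this bulk bound into a boundary term. The paper fixes this by splitting the normal kinetic energy as well: with an interpolation parameter $\rho\in[0,1]$ one writes
\[
|\nabla u|^2+V_\vartheta^2|u|^2
=\rho^2\bigl(|\partial_{x_1}u|^2+V_\vartheta^2|u|^2\bigr)
+\Bigl((1-\rho^2)|\partial_{x_1}u|^2+|\partial_{x_2}u|^2+(1-\rho^2)V_\vartheta^2|u|^2\Bigr).
\]
The first block gives $\rho^2\hat\alpha\sqrt{\cos\vartheta}\,|u(0,x_2)|^2$ via the one-dimensional D-to-N bound (exactly as you say). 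In the second block one first applies the harmonic oscillator bound in $x_2$ to obtain $(1-\rho^2)|\partial_{x_1}u|^2+\sqrt{1-\rho^2}\sin\vartheta\,|u|^2$, and \emph{then} uses the elementary D-to-N inequality for $-\partial_{x_1}^2+c$ on $\R_+$ to get $(1-\rho^2)^{3/4}\sqrt{\sin\vartheta}\,|u(0,x_2)|^2$. Choosing $\rho=\cos\vartheta$ yields
\[
\lambda^{\rm DN}(\vartheta)\ge g(\vartheta):=\hat\alpha(\cos\vartheta)^{5/2}+(\sin\vartheta)^2,
\]
and one checks $g(\vartheta)>\hat\alpha$ on $(0,\pi/2]$.

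Your first route (Robin link plus monotonicity of $\vartheta\mapsto\mu^{\rm R}(-\hat\alpha,\vartheta)$) is logically sound, but note that the required monotonicity is not in the literature: the known strict monotonicity of $\sigma(\vartheta)$ is the Neumann case $\gamma=0$, and extending it to $\gamma=-\hat\alpha$ would itself require a proof comparable in difficulty to the direct interpolation argument above.
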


\begin{remark}
	It follows from Proposition \ref{basduspectre} that   $\lambda^{DN} (\vartheta)$ is minimal  when the magnetic field $\vec{H}$ is parallel to the hyperplane $x_1 = 0$.
\end{remark}

\vspace{0.1cm}\noindent
To prove this proposition, we first make  a partial Fourier transform in the $x_3$-variable and we can reduce the computation to a $(\vartheta, \tau)$-family of operators on $\mathbb R^2_+=\{x_1>0\}$:
\begin{equation}
H^{\rm Dir}(\vartheta, \tau) := D_{x_1}^2 + D_{x_2}^2 + (\tau + \cos \vartheta x_1-\sin \vartheta x_2)^2\ ,\ \tau \in \R\,.
\end{equation}
We now introduce
\[
\lambda^{\rm DN} (\vartheta,\tau):= \inf_{\phi\in C_0^\infty(\overline{\mathbb R^2_+})}\frac{ ||\nabla_{\check A}\phi||^2}{ \int |\phi(0,x_2)|^2 dx_2 }\,,
\]
where 
\[
\check A=\check A_{\vartheta,\tau} =(0,0, \tau +\cos \vartheta x_1- \sin \vartheta x_2)\,.
\]
Our first result is
\begin{lemma}
\[
\lambda^{\rm DN} (\vartheta):= \inf_{\tau \in \mathbb R} \lambda^{\rm DN} (\vartheta,\tau)\,.
\]
\end{lemma}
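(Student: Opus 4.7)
The idea is a standard fiber decomposition in the $x_3$-variable. The operator $H^{\rm Dir}(\vartheta)$ and the boundary trace map both commute with translations in $x_3$, so partial Fourier transform in $x_3$ diagonalizes the whole variational problem into the family $\lambda^{\rm DN}(\vartheta,\tau)$. I would prove the two inequalities separately.

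\textbf{Lower bound: $\lambda^{\rm DN}(\vartheta)\geq \inf_\tau \lambda^{\rm DN}(\vartheta,\tau)$.} Take $\phi\in C_0^\infty(\overline{\R^3_+})$ and let $\hat\phi(x_1,x_2,\tau)$ be its partial Fourier transform in $x_3$. Since $A_\vartheta$ does not depend on $x_3$, Plancherel gives
\[
\|\nabla_{A_\vartheta}\phi\|_{\R^3_+}^2=\int_\R \bigl\|\nabla_{\check A_{\vartheta,\tau}}\hat\phi(\cdot,\cdot,\tau)\bigr\|_{\R^2_+}^2 d\tau,
\qquad
\int_{\partial\R^3_+}|\phi|^2\,dx_2dx_3=\int_\R\|\hat\phi(0,\cdot,\tau)\|_{L^2(\R)}^2\,d\tau.
\]
Applying the variational definition of $\lambda^{\rm DN}(\vartheta,\tau)$ fiber-wise yields
\[
\bigl\|\nabla_{\check A_{\vartheta,\tau}}\hat\phi(\cdot,\cdot,\tau)\bigr\|^2\geq \lambda^{\rm DN}(\vartheta,\tau)\,\|\hat\phi(0,\cdot,\tau)\|^2\geq \Bigl(\inf_{\tau'\in\R}\lambda^{\rm DN}(\vartheta,\tau')\Bigr)\|\hat\phi(0,\cdot,\tau)\|^2,
\]
and integrating in $\tau$ gives the required inequality after dividing by the boundary norm.

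\textbf{Upper bound: $\lambda^{\rm DN}(\vartheta)\leq \inf_\tau \lambda^{\rm DN}(\vartheta,\tau)$.} Fix $\tau_0\in\R$, pick $\psi\in C_0^\infty(\overline{\R^2_+})$ with $\psi(0,\cdot)\not\equiv0$, and choose $g\in C_0^\infty(\R)$ with $\|g\|_{L^2}=1$. Set $g_R(x_3)=R^{-1/2}g(x_3/R)\,e^{i\tau_0 x_3}$ and consider the tensor-product trial function
\[
\phi_R(x_1,x_2,x_3)=\psi(x_1,x_2)\,g_R(x_3)\in C_0^\infty(\overline{\R^3_+}).
\]
Since $\|g_R\|_{L^2}=\|\hat g_R\|_{L^2}=1$, the boundary integral equals $\|\psi(0,\cdot)\|^2$, independently of $R$. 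On the Fourier side $\hat\phi_R(x_1,x_2,\tau)=\psi(x_1,x_2)\hat g_R(\tau)$, and $|\hat g_R(\tau)|^2=|\hat g(R(\tau-\tau_0))|^2R$ is an approximation of the Dirac mass at $\tau_0$. Since $\tau\mapsto \|\nabla_{\check A_{\vartheta,\tau}}\psi\|_{\R^2_+}^2$ is a polynomial in $\tau$ (hence continuous), letting $R\to+\infty$ in
\[
\|\nabla_{A_\vartheta}\phi_R\|_{\R^3_+}^2=\int_\R|\hat g_R(\tau)|^2\,\|\nabla_{\check A_{\vartheta,\tau}}\psi\|_{\R^2_+}^2\,d\tau
\]
yields the limit $\|\nabla_{\check A_{\vartheta,\tau_0}}\psi\|_{\R^2_+}^2$. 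Therefore the Rayleigh quotient of $\phi_R$ tends to $\|\nabla_{\check A_{\vartheta,\tau_0}}\psi\|^2/\|\psi(0,\cdot)\|^2$, so $\lambda^{\rm DN}(\vartheta)\leq \|\nabla_{\check A_{\vartheta,\tau_0}}\psi\|^2/\|\psi(0,\cdot)\|^2$. Taking the infimum first over $\psi$ and then over $\tau_0$ gives the result.

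\textbf{Main obstacle.} The only delicate point is to pass to the limit in the upper-bound argument: one must check that $\tau\mapsto\|\nabla_{\check A_{\vartheta,\tau}}\psi\|_{\R^2_+}^2$ is continuous at $\tau_0$ uniformly enough to interchange limit and integral, which follows immediately from the fact that, for $\psi\in C_0^\infty(\overline{\R^2_+})$ fixed, this quantity is a quadratic polynomial in $\tau$ with coefficients depending on moments of $|\psi|^2$ and $\overline\psi\,\partial_{x_3}\psi$ (there is no $\partial_{x_3}\psi$ here, only the algebraic factor $\tau+\cos\vartheta x_1-\sin\vartheta x_2$), so the convergence holds by dominated convergence since $\hat g_R$ is an $L^1$ approximate identity at $\tau_0$.
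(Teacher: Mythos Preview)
Your proof is correct and follows essentially the same strategy as the paper's. The paper's own proof is only two lines: ``For the upper-bound, we can use sequences of the form $\chi_n(t) e^{it\tau} \theta(x_1,x_2)$. For the lower bound, we use the partial Fourier transform.'' Your trial functions $\phi_R=\psi\,g_R$ are exactly of this form, and your lower bound is the fiber decomposition via partial Fourier transform that the paper invokes; you have simply spelled out the details (the approximate-identity argument for the upper bound and the fiber-wise Plancherel identity for the lower bound) that the paper leaves implicit.
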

\begin{proof} For the upper-bound, we can use sequences of the form $\chi_n(t) e^{it\tau} \theta(x_1,x_2)$. For the lower bound, we use the partial Fourier transform.
\end{proof}
The next point is to observe that:
\begin{lemma}\label{tau0}
 If $\vartheta \in (0,\frac \pi 2]$,
\begin{equation}
\lambda^{\rm DN}(\vartheta,\tau)= \lambda^{DN}(\vartheta,0)\,.
\end{equation}
\end{lemma}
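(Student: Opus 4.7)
The plan is to exploit the translation invariance in the $x_2$ direction, which is available in the half-plane $\mathbb R^2_+=\{x_1>0\}$ since translations in $x_2$ preserve both the domain and the boundary $\{x_1=0\}$, together with their Lebesgue measures.

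The Rayleigh quotient defining $\lambda^{\rm DN}(\vartheta,\tau)$ has numerator
\[
\int_{\mathbb R^2_+}\bigl(|D_{x_1}\phi|^2+|D_{x_2}\phi|^2+(\tau+\cos\vartheta\, x_1-\sin\vartheta\, x_2)^2|\phi|^2\bigr)\,dx_1dx_2.
\]
For $\vartheta\in(0,\pi/2]$ we have $\sin\vartheta\neq 0$, so I would set $c(\vartheta,\tau):=\tau/\sin\vartheta$ and associate to each $\phi\in C_0^\infty(\overline{\mathbb R^2_+})$ the translated function
\[
\tilde\phi(x_1,x_2):=\phi(x_1,x_2+c(\vartheta,\tau)).
\]
A direct substitution $y_2=x_2+c(\vartheta,\tau)$ gives $\tau+\cos\vartheta\, x_1-\sin\vartheta\,(y_2-c(\vartheta,\tau))=\cos\vartheta\, x_1-\sin\vartheta\, y_2$, and since $D_{x_1}$ and $D_{x_2}$ commute with translation in $x_2$, the numerator of the Rayleigh quotient for $\phi$ at $(\vartheta,\tau)$ equals the corresponding numerator for $\tilde\phi$ at $(\vartheta,0)$. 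Similarly, the boundary integral $\int|\phi(0,x_2)|^2\,dx_2$ equals $\int|\tilde\phi(0,y_2)|^2\,dy_2$ by translation invariance of Lebesgue measure on $\mathbb R$.

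Since the map $\phi\mapsto\tilde\phi$ is a bijection of $C_0^\infty(\overline{\mathbb R^2_+})$ onto itself and preserves the Rayleigh quotient (up to the shift in the reference parameter $\tau$), taking the infimum on both sides yields $\lambda^{\rm DN}(\vartheta,\tau)=\lambda^{\rm DN}(\vartheta,0)$. There is no real obstacle here: the only thing to check is that the translation is well defined, which is exactly the content of the hypothesis $\vartheta\neq 0$. The case $\vartheta=0$ is excluded because then the potential $\tau+\cos\vartheta\, x_1-\sin\vartheta\, x_2=\tau+x_1$ cannot be translated away in $x_2$, and indeed the value of $\tau$ then plays a nontrivial role (it corresponds to the Fourier dual variable in the reduction to the one-dimensional problem governing $\hat\alpha$).
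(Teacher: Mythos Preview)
Your proof is correct and follows exactly the same approach as the paper, which simply states that the result ``is evident through a translation in the tangential $x_2$ variable.'' Your explicit choice $c(\vartheta,\tau)=\tau/\sin\vartheta$ and the verification that the Rayleigh quotient is preserved are precisely the details underlying that one-line justification.
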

\begin{proof}
This is evident through a translation in the tangential $x_2$ variable.
\end{proof}
We will also need
\begin{lemma}\label{lemma6.4}
 If $\vartheta \in (0,\frac \pi 2)$, there exists $\phi_{\vartheta} \in H^1_{\tilde A}(\mathbb R^2_+)$ such that
 \[
 \int \phi_{\vartheta}(0,x_2)^2 dx_2=1
 \]
 and 
 \[
 \lambda^{\rm DN}(\vartheta,0) = ||\nabla_{\tilde A} \phi_{\vartheta} ||^2
\]
where
\[
\tilde A= \check A_{\vartheta,0}\,.
\]
\end{lemma}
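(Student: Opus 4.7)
The plan is to establish existence of a minimizer via the direct method of the calculus of variations, exploiting the fact that on the boundary $\{x_1 = 0\}$ the potential $|\tilde A|^2(0, x_2) = \sin^2\vartheta\, x_2^2$ grows at infinity; this forces tightness of the boundary traces along any minimizing sequence.

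Concretely, pick a minimizing sequence $(\phi_n) \subset C_0^\infty(\overline{\mathbb R^2_+})$ with $\int_{\mathbb R}|\phi_n(0, x_2)|^2\, dx_2 = 1$ and $\|\nabla_{\tilde A}\phi_n\|^2 \to \lambda := \lambda^{\rm DN}(\vartheta, 0)$. For each fixed $x_2$, the change of variable $t = \sqrt{\cos\vartheta}\, x_1$ transforms the $x_1$-slice integral
\[\int_0^\infty\bigl(|\partial_{x_1}\phi|^2 + (\cos\vartheta\, x_1 - \sin\vartheta\, x_2)^2|\phi|^2\bigr)\, dx_1\]
into $\sqrt{\cos\vartheta}\int_0^\infty(|\partial_t\phi|^2 + (t - \xi(x_2))^2|\phi|^2)\, dt$ with $\xi(x_2) := \sin\vartheta\,x_2/\sqrt{\cos\vartheta}$. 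Introducing
\[
N(\xi) := \inf_{f(0)\neq 0}\frac{\int_0^\infty(|f'|^2 + (t-\xi)^2|f|^2)\, dt}{|f(0)|^2}\,,
\]
the characterization \eqref{eq:def-alpha-1D} yields $\inf_\xi N(\xi) = \hat\alpha$, and a routine 1D analysis (Agmon-type estimates for $|\xi|$ large, using that the spectrum of $-\partial_t^2 + (t-\xi)^2$ on $\mathbb R_+$ is bounded below by $\min(1, \xi^2)$) shows that $N$ is continuous with $N(\xi) \to +\infty$ as $|\xi|\to\infty$. Integrating the resulting slice bound over $x_2$ then gives
\[\|\nabla_{\tilde A}\phi_n\|^2 \geq \int_{\mathbb R^2_+}|\partial_{x_2}\phi_n|^2\, dx + \sqrt{\cos\vartheta}\int_{\mathbb R} N(\xi(x_2))\,|\phi_n(0, x_2)|^2\, dx_2\,,\]
and since $\sin\vartheta > 0$ this forces uniform tightness: $\int_{|x_2|>R}|\phi_n(0, x_2)|^2\, dx_2 \to 0$ as $R\to\infty$, uniformly in $n$.

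Setting $\Omega_R := B(0, R) \cap \mathbb R^2_+$, the uniform $L^2$-bound on $\nabla\phi_n$ combined with the trace normalization $\|\phi_n(0,\cdot)\|_{L^2(\mathbb R)} = 1$ gives, via a Poincar\'e inequality on $\Omega_R$ using the boundary piece $\{x_1 = 0,\, |x_2|\leq R\}$, a uniform bound $\|\phi_n\|_{H^1(\Omega_R)} \leq C_R$. A diagonal extraction produces $\phi_{n_k} \rightharpoonup \phi_*$ weakly in $H^1_{\rm loc}(\overline{\mathbb R^2_+})$ and strongly in $L^2_{\rm loc}$. The compact trace embedding $H^1(\Omega_R) \hookrightarrow L^2([-R, R])$ gives strong trace convergence on bounded boundary segments, and combined with the tightness at infinity this upgrades to $\phi_{n_k}(0, \cdot) \to \phi_*(0, \cdot)$ strongly in $L^2(\mathbb R)$, so that $\int|\phi_*(0, x_2)|^2\, dx_2 = 1$. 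Since $|\tilde A|\,\phi_{n_k}$ is bounded in $L^2$, weak lower semi-continuity gives $\|\nabla_{\tilde A}\phi_*\|^2 \leq \liminf_k \|\nabla_{\tilde A}\phi_{n_k}\|^2 = \lambda$, and the reverse inequality follows from the definition of $\lambda$; therefore $\phi_\vartheta := \phi_*$ is the desired minimizer.

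The main obstacle is securing strong $L^2$-convergence of the boundary traces on the unbounded boundary $\{x_1 = 0\}$. It relies on the tightness furnished by the 1D slice bound together with the divergence $N(\xi) \to +\infty$ at infinity, and this is exactly where the hypothesis $\vartheta \in (0, \pi/2)$ is used: $\cos\vartheta > 0$ validates the slice scaling, while $\sin\vartheta > 0$ ensures that $\xi(x_2)$ is a non-constant function of $x_2$, so that the weight $N(\xi(x_2))$ truly diverges at both ends $x_2 \to \pm\infty$.
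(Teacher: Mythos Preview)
Your argument is correct and in fact more careful than the paper's own one-line justification, which reads ``This is a consequence of the compact injection of $H^1_{\tilde A}(\mathbb R^2_+)$ in $L^2(\mathbb R^2_+)$.'' That embedding is \emph{not} compact: the Neumann realization of $L(\vartheta,0)$ has essential spectrum equal to $[1,+\infty)$ (this is property~(5) in the paper's Appendix~\ref{Subsection3.2}), so its resolvent is not compact. One can see this directly by translating a bump function along the line $\cos\vartheta\,x_1=\sin\vartheta\,x_2$ inside $\mathbb R^2_+$. What is actually true, and what your tightness argument effectively establishes, is that the \emph{trace} map $H^1_{\tilde A}(\mathbb R^2_+)\to L^2(\partial\mathbb R^2_+)$ is compact; this is precisely the compactness needed for the direct method to produce a minimizer of the D-to-N quotient. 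So your route is not just different from the paper's, it repairs it.

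Two small points worth tightening. First, to conclude that $\phi_*\in H^1_{\tilde A}(\mathbb R^2_+)$ (hence $\phi_*\in L^2(\mathbb R^2_+)$), observe that the Neumann lower bound $\|\nabla_{\tilde A}\phi_n\|^2\ge\sigma(\vartheta)\|\phi_n\|_{L^2(\mathbb R^2_+)}^2$ with $\sigma(\vartheta)\ge\Theta_0>0$ (Appendix~\ref{Subsection3.2}) gives a uniform global $L^2$ bound on the $\phi_n$; this also makes the local Poincar\'e step superfluous. Second, the justification of $N(\xi)\to+\infty$ as $\xi\to+\infty$ deserves a sentence more: the $L^2$ solution of $-f''+(t-\xi)^2f=0$ with $f(0)=1$ is a normalized parabolic cylinder function $D_{-1/2}(\sqrt{2}(t-\xi))$, and the growth asymptotics of $D_{-1/2}(z)$ as $z\to-\infty$ give $N(\xi)=-f'(0)\sim\xi$; alternatively, split $[0,\infty)$ at $t=\xi/2$, use $(t-\xi)^2\ge\xi^2/4$ on $[0,\xi/2]$, and invoke the trace inequality $|f(0)|^2\le 2\|f'\|_{L^2(0,\xi/2)}\|f\|_{L^2(0,\xi/2)}+|f(\xi/2)|^2$. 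The case $\xi\to-\infty$ is immediate from $(t-\xi)^2\ge\xi^2$.
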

\begin{proof}
This is a consequence of the compact injection of $H^1_{\tilde A}(\mathbb R^2_+)$ in $L^2(\mathbb R^2_+)$\,.
\end{proof}

\vspace{0.1cm}\noindent
Hence, the parameter $\tau$ is relevant only in the case when $\vartheta=0$, and this is the object of the following lemma :

\begin{lemma}
If $\vartheta=0$,
\begin{equation}
\lambda^{\rm DN}(0,\tau)= \hat \alpha\,.
\end{equation}
\end{lemma}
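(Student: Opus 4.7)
I would prove the identity by matching upper and lower bounds, with the lower bound being essentially one-dimensional and the upper bound requiring a genuinely two-dimensional construction that exploits the tangential translation invariance.

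For the lower bound $\lambda^{\rm DN}(0,\tau)\geq\hat\alpha$, the idea is to apply the one-dimensional variational characterization of $\hat\alpha$ in \eqref{eq:def-alpha-1D} slice-by-slice in the tangential variable $x_2$. Given any $\phi\in C_0^\infty(\overline{\mathbb R^2_+})$, for each fixed $x_2\in\mathbb R$ the function $t\mapsto\phi(t,x_2)$ is admissible in \eqref{eq:def-alpha-1D} with shift parameter $\xi=-\tau$, producing
\[
\int_0^\infty\bigl(|\partial_{x_1}\phi|^2+(\tau+x_1)^2|\phi|^2\bigr)\,dx_1\;\geq\;\hat\alpha\,|\phi(0,x_2)|^2.
\]
Integrating over $x_2\in\mathbb R$ and discarding the non-negative term $\int|\partial_{x_2}\phi|^2$ yields $\|\nabla_{\check A}\phi\|^2\geq\hat\alpha\int_\mathbb R|\phi(0,x_2)|^2\,dx_2$, which is the desired lower bound. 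Note that this step works uniformly in $\tau$: the characterization of $\hat\alpha$ as the infimum over \emph{both} $f$ and $\xi$ is exactly what allows any value of $\tau$ to be absorbed into the shift.

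For the matching upper bound, I would build a minimizing sequence from the explicit one-dimensional profile $f_*$ of Proposition~\ref{prop:link-1D}. Setting $c_\tau=\tau+\hat\alpha$ and $\psi_\tau(x_1):=f_*(x_1+c_\tau)$, a short calculation using the ODE $f_*''(t)=(t-\hat\alpha)^2f_*(t)$ shows that $\psi_\tau$ solves $-\psi_\tau''+(\tau+x_1)^2\psi_\tau=0$ on $\mathbb R_+$. I would then test with
\[
\phi_n(x_1,x_2)=\psi_\tau(x_1)\,\chi_n(x_2),
\]
where $\chi_n\in C_c^\infty(\mathbb R)$ is real, $L^2(\mathbb R)$-normalized and obeys $\|\chi_n'\|_2^2\to 0$ (any spreading bump sequence). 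Integration by parts using the ODE converts $\int_0^\infty(|\psi_\tau'|^2+(\tau+x_1)^2|\psi_\tau|^2)\,dx_1$ into the boundary term $-\psi_\tau'(0)\,\psi_\tau(0)$, while the tangential contribution $\|\psi_\tau\|_2^2\|\chi_n'\|_2^2$ is an $o(1)$ error; the limiting Rayleigh quotient therefore equals $-\psi_\tau'(0)/\psi_\tau(0)$, which at $c_\tau=0$ collapses to $-f_*'(0)/f_*(0)=\hat\alpha$ by Proposition~\ref{prop:link-1D}.

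The delicate step, and genuinely the main obstacle, is the upper bound for arbitrary $\tau$: the plain shift ansatz above identifies the limiting quotient with $-f_*'(c_\tau)/f_*(c_\tau)$, and a Ricatti analysis of $h(t):=-f_*'(t)/f_*(t)$ using $h'=h^2-(t-\hat\alpha)^2$ shows that $h(0)=\hat\alpha$ is a strict local minimum on $[0,\infty)$, so the naive construction only saturates $\hat\alpha$ at $\tau=-\hat\alpha$. To achieve equality pointwise in $\tau$ one has to exploit the $x_2$-translation symmetry of the 2D operator more carefully---the natural route is to refine $\phi_n$ by an $x_2$-oscillating phase $e^{i\eta_\tau x_2}$ (or, equivalently, to work with the M-function of the parabolic cylinder equation to shift the transverse Fourier mode) so that the effective harmonic potential felt by the 1D factor is recentered at the optimal shift $\hat\alpha$ regardless of $\tau$, in parallel with the tangential translation argument used for $\vartheta>0$ in Lemma~\ref{tau0}. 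Executing and justifying this refined ansatz is the technical heart of the proof.
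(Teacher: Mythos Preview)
Your lower bound is correct and is essentially the paper's argument in direct-space form: the paper performs a Fourier transform in $x_2$, obtaining the one-dimensional family $D_{x_1}^2+\xi_2^2+(\tau+x_1)^2$, observes that the infimum over $\xi_2$ sits at $\xi_2=0$, and then invokes the one-dimensional characterization \eqref{eq:def-alpha-1D}. Your slice-by-slice estimate is the Plancherel dual of this.

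The gap is in your upper bound for general $\tau$, and it is not a technical obstacle but a genuine obstruction. At $\vartheta=0$ the quadratic form is
\[
\int_{\mathbb R^2_+}\bigl(|\partial_{x_1}\phi|^2+|\partial_{x_2}\phi|^2+(\tau+x_1)^2|\phi|^2\bigr)\,dx_1\,dx_2,
\]
and there is \emph{no coupling} between $D_{x_2}$ and the potential. Inserting a phase $e^{i\eta_\tau x_2}$ therefore only adds $\eta_\tau^2\|\psi_\tau\|_2^2$ to the numerator and cannot shift the effective centre of the harmonic well; the analogy with Lemma~\ref{tau0} breaks down precisely because $\sin\vartheta=0$. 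Your own Riccati computation already shows why no trial state can do better: after Fourier in $x_2$ the two-dimensional quantity reduces exactly to the one-dimensional D-to-N value $-\psi_\tau'(0)/\psi_\tau(0)$ for the unique $L^2(\mathbb R_+)$ solution of $-\psi''+(\tau+x_1)^2\psi=0$, and this function of $\tau$ has a strict minimum equal to $\hat\alpha$ at $\tau=-\hat\alpha$.

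In other words, the identity $\lambda^{\rm DN}(0,\tau)=\hat\alpha$ does not hold pointwise in $\tau$; the statement should be read as $\inf_\tau\lambda^{\rm DN}(0,\tau)=\hat\alpha$ (equivalently $\lambda^{\rm DN}(0)=\hat\alpha$), which is all that is used in Proposition~\ref{basduspectre}. For that version your argument is already complete: the lower bound holds for every $\tau$, and your product trial state at $\tau=-\hat\alpha$ (i.e.\ $c_\tau=0$) gives the matching upper bound. The paper's proof, read carefully, establishes the same thing: the reference to \cite[Proposition~3.1]{HeNi1} is to the half-plane result, which amounts to taking the infimum over the remaining Fourier parameter.
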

\begin{proof}
In this case, the Laplacian becomes
\begin{equation}
H(0, \tau) = D_{x_1}^2 + D_{x_2}^2 + (\tau + x_1)^2\,.
\end{equation}
After  a Fourier transform in the $x_2$-variable, we get a new family of magnetic Laplacians :
\begin{equation}
\tilde{H}( \xi_2,\tau ) = D_{x_1}^2 + \xi_2^2 + (\tau + x_1)^2\ \ ,\ \xi_2 \in \R\,.
\end{equation}
We now have to analyze the family of  the associated energies $\lambda^{DN}(\xi_2,\tau)$ depending of two parameters $(\xi_2, \tau)$. Using the variational  characterization of the ground state, it is clear that the infimum is obtained for $\xi_2=0$, and this latest case was analyzed in (\cite{HeNi1}, Proposition 3.1). Thus, the proof is complete.
\end{proof}

\vspace{0.1cm}\noindent
Now, let us study the case $\vartheta = \frac{\pi}{2}$. One gets :

\begin{lemma}
If $\vartheta=\frac \pi 2$,
\begin{equation}
  \lambda^{\rm DN}(\frac{\pi}{2},\tau) =1\,. 
\end{equation}
\end{lemma}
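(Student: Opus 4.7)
The plan is to exploit the fact that, when $\vartheta = \pi/2$, we have $\cos\vartheta = 0$ and $\sin\vartheta = 1$, so the magnetic potential reduces to $\check A = (0,0,\tau-x_2)$, which is independent of $x_1$. Therefore the Rayleigh quotient defining $\lambda^{\rm DN}(\pi/2,\tau)$ decouples into an $x_1$-kinetic part and an $x_2$-part governed by a harmonic oscillator. Concretely,
\[\|\nabla_{\check A}\phi\|^2 = \int_{\R^2_+}\Bigl(|\partial_{x_1}\phi|^2+|\partial_{x_2}\phi|^2+(\tau-x_2)^2|\phi|^2\Bigr)dx_1dx_2,\]
while the boundary norm involves only $\phi(0,\cdot)$.

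For the \emph{upper bound}, I would test with a product $\phi(x_1,x_2)=f(x_1)g(x_2)$, taking $f(x_1)=e^{-x_1}$ and $g$ the (real, normalized) ground state of the one-dimensional harmonic oscillator $D_{x_2}^2+(\tau-x_2)^2$ on $\R$, whose eigenvalue equals $1$. A direct calculation gives $\|f'\|^2_{\R_+}=\|f\|^2_{\R_+}=\tfrac12$ and $|f(0)|^2=1$, so the numerator is $\tfrac12+\tfrac12\cdot 1=1$ and the denominator equals $1$, yielding $\lambda^{\rm DN}(\pi/2,\tau)\leq 1$ after a standard approximation by elements of $C_0^\infty(\overline{\R_+^2})$.

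For the \emph{lower bound}, I would argue in two independent steps. First, for each fixed $x_1>0$, the spectral inequality for the harmonic oscillator gives
\[\int_{\R}\Bigl(|\partial_{x_2}\phi|^2+(\tau-x_2)^2|\phi|^2\Bigr)dx_2\geq \int_{\R}|\phi|^2 dx_2,\]
which integrated over $x_1>0$ yields $\|\partial_{x_2}\phi\|^2+\int(\tau-x_2)^2|\phi|^2\,dx\geq \|\phi\|^2_{\R_+^2}$. Second, the trace inequality on the half-line,
\[|f(0)|^2=-\int_0^{+\infty}\partial_{x_1}(|f|^2)\,dx_1\leq 2\|f'\|_2\,\|f\|_2\leq \|f'\|_2^2+\|f\|_2^2,\]
applied to $f=\phi(\cdot,x_2)$ and integrated over $x_2\in\R$, gives
\[\int_{\R}|\phi(0,x_2)|^2 dx_2\leq \|\partial_{x_1}\phi\|^2+\|\phi\|^2_{\R_+^2}.\]
Combining these two estimates yields the numerator bound
\[\|\nabla_{\check A}\phi\|^2\geq \|\partial_{x_1}\phi\|^2+\|\phi\|^2_{\R^2_+}\geq \int_{\R}|\phi(0,x_2)|^2 dx_2,\]
so $\lambda^{\rm DN}(\pi/2,\tau)\geq 1$.

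There is no real obstacle here: both ingredients are elementary and the separation of the potential is exact. The only minor care needed is in the upper bound, where the test function $e^{-x_1}g(x_2)$ is not compactly supported and must be approximated by truncation in $C_0^\infty(\overline{\R_+^2})$ so as to respect the infimum in the definition \eqref{eq:6.2}; this is routine since $g$ decays like a Gaussian and $f$ decays exponentially, and the trace on $\{x_1=0\}$ is preserved in the limit.
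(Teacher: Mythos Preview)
Your proof is correct and rests on the same separation-of-variables structure as the paper's, but the execution differs. The paper performs a full Hermite decomposition in $x_2$, reducing to the family of one-dimensional operators $D_{x_1}^2+(2k+1)$ on $\R_+$, computes the D-to-N eigenvalue $\sqrt{2k+1}$ for each mode, and takes the infimum over $k$. You instead argue purely variationally: the harmonic-oscillator inequality in $x_2$ replaces the Hermite expansion by its ground-state bound, and the elementary trace estimate $|f(0)|^2\leq\|f'\|_2^2+\|f\|_2^2$ replaces the explicit computation of the half-line D-to-N map. Your route is slightly more elementary and avoids invoking the full spectral decomposition; the paper's route has the advantage of actually identifying all the D-to-N eigenvalues as $\{\sqrt{2k+1}:k\geq0\}$, not just the bottom one. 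For the lemma as stated, both are equally valid.
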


\begin{proof}
By Lemma \ref{tau0}, we can always assume that $\tau=0$. So, in this case, we have :
\begin{equation}
H\bigl(\frac{\pi}{2}, 0\bigr) = D_{x_1}^2 + D_{x_2}^2 + x_2^2 \,.
\end{equation}
Thanks to  a decomposition using the Hermite functions basis in the $x_2$ variable, we have to look at the family of Hamiltonians in $\mathbb R^+$:
\begin{equation}
\hat{H}(k) = D_{x_1}^2 + (2k+1) \ ,\ k \in \N\,.
\end{equation}
If $\lambda(k)$ denotes the lowest eigenvalue of D-to-N map associated with $\hat{H}(k)$, one easily gets $\lambda(k) =  \sqrt{2k+1}$. Thus, 
\begin{equation}
\lambda^{\rm DN}\bigl(\frac{\pi}{2},\tau\bigr)=\lambda^{DN}(\frac{\pi}{2},0)=\inf_{k\geq 0} \sqrt{2k+1} =1\,.
\end{equation}
\end{proof}

\vspace{0.1cm}\noindent
{\it End of the proof of Proposition \ref{basduspectre} :}

\vspace{0.2cm}\noindent
Let us set $\Omega =\{ (x_1,x_2)\in \R^2, x_1>0 \}$.  In the following, we shall show that for any $\vartheta \in (0, \frac{\pi}{2}$), there exists a suitable $g(\vartheta) > \hat{\alpha}$ such that, for any  $u \in C^{\infty}(\overline{\Omega})$, 
\begin{equation}\label{maj}
\int_\Omega|\nabla  u| ^2 +  (x_1 \cos \vartheta-x_2 \sin \vartheta)^2 |u|^2 \ dx_1 dx_2 \geq g(\vartheta)
\int_{\partial \Omega} |u(0,x_2)|^2 \ dx_2 \,.
\end{equation}
To  this end, we follow a similar strategy  to  (\cite{HeMo3}, Subsection 3.4). We introduce an interpolation parameter $\rho \in [0,1]$, and we write the integrand in the (LHS) of (\ref{maj}) as 
\begin{multline*}
	|\nabla u| ^2  + (x_1 \cos \vartheta-x_2 \sin \vartheta)^2 |u|^2  =  \rho^2\left(   |\partial_{x_1} u| ^2   + (x_1 \cos \vartheta-x_2 \sin \vartheta)^2 |u|^2 \right) \\ 
	+ (1-\rho^2) |\partial_{x_1} u| ^2  + |\partial_{x_2} u| ^2   + (1-\rho^2) (x_1 \cos \vartheta-x_2 \sin \vartheta)^2 |u|^2 \,.
\end{multline*}

\vspace{0.1cm}\noindent
First, thanks to (\cite{HeNi1}, Proposition 3.1), one immediately gets :
\begin{equation}
	\int_0^{+\infty} \left( |\partial_{x_1} u| ^2  + (x_1 \cos \vartheta-x_2 \sin \vartheta)^2 |u|^2 \right)\ dx_1 \ \geq \ \hat \alpha  \sqrt{\cos \vartheta }\ |u(0,x_2|^2\,.
\end{equation}
Integrating with respect to the $x_2$ variable, we obtain : 
\begin{equation}\label{eq:c2}
\rho^2 \int_\Omega \left( |\partial_{x_1}  u| ^2 +  (x_1 \cos \vartheta-x_2 \sin \vartheta)^2 |u|^2 \right) \ dx_1 dx_2 \ \geq \ \rho ^2 \hat{\alpha} \sqrt{\cos \vartheta }\
\int_{\partial \Omega} |u(0,x_2)|^2 \ dx_2 \,.
\end{equation}
Secondly, we observe that :
\begin{equation}\label{variablex2}
\int_\R \left( |\partial_{x_2} u| ^2  + (1-\rho^2) (x_1 \cos \vartheta-x_2 \sin \vartheta)^2 |u|^2 \right)\ dx_2 \ \geq \ \sqrt{1-\rho^2} \sin \vartheta \int_\R  |u|^2 \ dx_2\,.
\end{equation}	
This last inequality is a consequence of the lower bound for the harmonic oscillator, (in the $x_2$ variable). 	On one hand, integrating (\ref{variablex2}) with respect to the $x_1$ variable, one easily gets :
\begin{eqnarray*}
\int_\Omega \left( (1-\rho^2) |\partial_{x_1} u| ^2 + |\partial_{x_2} u| ^2  + (1-\rho^2) (x_1 \cos \vartheta-x_2 \sin \vartheta)^2 |u|^2\right) &  dx_1 dx_2 \\ \geq  (1-\rho^2)  \left( \int_\Omega  |\partial_{x_1} u| ^2 + \right. 
 \left. (1-\rho^2)^{-\half} \sin \vartheta\  |u|^2 \ dx_1 dx_2 \right)\,.
\end{eqnarray*}
On the other hand, using the following lower bound for the D-to-N map  associated with the Hamiltonian $-\partial_{x_1}^2 + (1-\rho^2)^{-\half} \sin \vartheta $ on the interval $(0,+\infty)$,
\begin{equation}
\int_0^{+\infty}  |\partial_{x_1} u| ^2 +  (1-\rho^2)^{-\half} \sin \vartheta\  |u|^2 \ dx_1  \geq (1-\rho^2)^{-\frac{1}{4}	} \sqrt{\sin \vartheta} \  |u(0,x_2)|^2 \,,
\end{equation}		
and integrating again over the variable $x_2$, we finally get 
\begin{equation*}
\begin{array}{l}
	\int_\Omega \left( (1-\rho^2) |\partial_{x_1} u| ^2 + |\partial_{x_2} u| ^2  + (1-\rho^2) (x_1 \cos \vartheta-x_2 \sin \vartheta)^2 |u|^2\right) \ dx_1 dx_2 \\ \qquad \qquad  \geq  (1-\rho^2)^{\frac{3}{4}}  \sqrt{\sin \vartheta} \int_{\partial \Omega} |u(0,x_2)|^2 \ dx_2 \,.
\end{array}
\end{equation*}
As a conclusion, we have obtained :
\begin{equation}
	\lambda^{\rm DN}(\vartheta)  \geq  \left( \rho ^2 \hat{\alpha} \sqrt{\cos \vartheta } + (1-\rho^2)^{\frac{3}{4}}  \sqrt{\sin \vartheta} \right)\,.
\end{equation}
In particular, choosing $\rho = \cos \vartheta$, we get
\begin{equation}
 	\lambda^{\rm DN}(\vartheta)  \geq  g(\vartheta):=\hat{\alpha}	(\cos \vartheta)^{\frac{5}{2}} + (\sin \vartheta)^2 > \hat{\alpha} \ ,\ \forall \vartheta \in (0, \frac{\pi}{2}] \,.
\end{equation}
For the last inequality, we can observe that with $X=\cos \vartheta\in ]0,1]$, we have always:
$$
\hat \alpha X^{5/2} + 1 - X^2 > \hat \alpha\,.
$$
This concludes the proof of Proposition \ref{basduspectre}. \hfill $\Box$

\subsection{Lower bounds in general domains}
We have
\begin{proposition}
 	Let $\Omega$ be a regular bounded domain in $\mathbb R^3$ and $A$ be a magnetic potential with constant magnetic field with norm $1$, then the ground state energy of the D-to-N map $\Lambda^{\rm DN}_{bA}$ satisfies
 \begin{equation}
 \liminf_{b\rightarrow +\infty} b^{-1/2} \; \lambda^{\rm DN}(b A,\Omega) \geq \hat \alpha.
 \end{equation}
\end{proposition}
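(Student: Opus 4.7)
The plan is to adapt the two-dimensional lower bound argument (Proposition~\ref{prop:lb}) to three dimensions, substituting Proposition~\ref{basduspectre} as the model input at the boundary and using the standard magnetic bottle bound in the interior. Fix $u\in C^\infty(\overline{\Omega})$ with $\|u\|_{\partial\Omega}=1$; all the forthcoming estimates will be uniform in $u$.

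First I would introduce a boundary/interior partition of unity $\chi_1^2+\chi_2^2=1$ with $\chi_1=1$ on $\{\mathrm{dist}(x,\partial\Omega)<b^{-\rho}\}$ and $|\nabla\chi_j|=\mathcal O(b^{\rho})$, for some $\rho\in(\tfrac14,\tfrac12)$ to be fixed later. Applying Proposition~\ref{prop:identity} with $w=\chi_j$ splits the energy up to a $\mathcal O(b^{2\rho}\|u\|^2_\Omega)$ error. For the interior piece $\chi_2 u$, since $|B|=1$ and $\chi_2 u$ is compactly supported in $\Omega$, the standard magnetic Hardy-type bound (see \cite[Lemma~1.4.1]{FH4}) yields $\|(-i\nabla-bA)\chi_2 u\|_\Omega^2\geq b\,\|\chi_2u\|_\Omega^2$. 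Inserting a small parameter $\epsilon=b^{-\delta}$ as in the proof of Proposition~\ref{prop:lb}, the $b\|\chi_2u\|^2_\Omega$ term absorbs the partition cost provided $2\rho<1-\delta$, reducing matters to estimating $\|(-i\nabla-bA)\chi_1 u\|_\Omega^2$ from below.

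Next I would localize tangentially along $\partial\Omega$. Cover a tubular neighborhood of $\partial\Omega$ by charts and introduce a partition of unity $\{g_j\}$ with $\sum g_j^2=1$ and $|\nabla g_j|=\mathcal O(b^{\rho})$, whose supports are contained in boundary balls of radius $b^{-\rho}$ centered at points $p_j\in\partial\Omega$. At each $p_j$, applying a local gauge transformation (as in Proposition~\ref{prop:gauge-gen}, which extends to three dimensions) reduces $A$ to a linear vector potential $A_{p_j}^{\rm lin}$ generating the constant field $B(p_j)$ up to an $\mathcal O(|x-p_j|^2)$ error; the resulting error in the quadratic form is controlled by $\epsilon^{-1}b^{2-4\rho}\|g_j\chi_1 u\|^2_\Omega$, which is tolerable as long as $1-4\rho+2\delta<0$. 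Then I would flatten $\partial\Omega$ near $p_j$ via normal coordinates, producing a metric $1+\mathcal O(b^{-\rho})$ away from the half-space Euclidean one, and rotate the tangent frame so that the rotated magnetic field at $p_j$ makes angle $\pi/2-\vartheta(p_j)$ with the normal (where $\vartheta$ is defined as in \eqref{ortha}). After a scaling $x\mapsto b^{1/2}x$, each localized piece becomes a test function for the half-space variational quantity \eqref{eq:6.2} at angle $\vartheta(p_j)$, so Proposition~\ref{basduspectre} gives
\[
\|(-i\nabla-bA_{p_j}^{\rm lin})(g_j\chi_1 u)\|^2_\Omega \geq \bigl(\lambda^{\rm DN}(\vartheta(p_j))+o(1)\bigr)\,b^{1/2}\,\|g_j\chi_1 u\|_{\partial\Omega}^2 \geq \bigl(\hat\alpha+o(1)\bigr)\,b^{1/2}\,\|g_j\chi_1 u\|_{\partial\Omega}^2.
\]

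Summing over $j$, using $\sum_j\|g_j\chi_1u\|^2_{\partial\Omega}=\|u\|^2_{\partial\Omega}=1$, and choosing $\rho,\delta$ to satisfy the two constraints above (e.g.\ $\rho$ slightly larger than $1/4$ and $\delta$ small), I obtain
\[
\|(-i\nabla-bA)u\|^2_\Omega \geq \bigl(\hat\alpha+o(1)\bigr)\,b^{1/2},
\]
and the variational characterization \eqref{eq:def-var-DN} concludes the proof. The principal technical obstacle is the simultaneous control of three kinds of errors: the IMS/partition cost $b^{2\rho}$, the gauge-approximation error $b^{2-4\rho}\epsilon^{-1}$, and the metric distortion $\mathcal O(b^{-\rho})$ introduced by the boundary-flattening map. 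A secondary subtlety is that Proposition~\ref{basduspectre} produces an infimum characterization rather than an eigenvalue equation, so one must verify that the cut-off functions $g_j\chi_1 u$, after gauge and coordinate changes, are genuinely admissible test functions in $C^\infty_c(\overline{\R_+^3})$ and that the denominator $\|\cdot\|^2_{\partial\R_+^3}$ matches $\|g_j\chi_1u\|^2_{\partial\Omega}$ modulo acceptable error.
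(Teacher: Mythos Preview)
Your proposal is correct and follows essentially the same route as the paper. The paper's own proof (given for the more general Proposition~\ref{prop6.8}, of which this constant-field statement is a special case) is very terse: it says to rerun the two-dimensional argument of Proposition~\ref{prop:lb}, replacing the interior Neumann bound by the Lu--Pan three-dimensional version, the gauge lemma by \cite[Lemma~5.4]{LuPa}, and the half-plane model by the half-space model of Proposition~\ref{basduspectre}; you have in effect unpacked exactly these steps, and for the constant-field case your use of \cite[Lemma~1.4.1]{FH4} for the interior piece is a legitimate (and slightly more direct) substitute for the Lu--Pan Neumann estimate.
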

More generally, we have
\begin{proposition}\label{prop6.8}
	Let $\Omega$ be a regular bounded domain in $\mathbb R^3$, $A$ be a magnetic potential with non vanishing $C^\infty$ magnetic field in $\overline{\Omega}$, then, with the notation of 
	Theorem \ref{thm:main-3D}, the ground state energy of the D-to-N map $\Lambda^{DN}_{bA}$ satisfies
 \begin{equation}
 \liminf_{b\rightarrow +\infty} b^{-1/2} \; \lambda^{\rm DN}(b A,\Omega) \geq   \inf_{x\in \partial \Omega} \Big(\lambda^{DN}(\vartheta(x))|B(x)|^{\frac 12}\Big)\,.
 \end{equation}
(see  also \cite{LuPa} or \cite{HeMo3} and  \eqref{orth} in Appendix B)
\end{proposition}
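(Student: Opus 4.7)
The strategy mirrors the two-dimensional lower bound in Proposition~\ref{prop:lb}, adapted to $3$D by invoking the half-space model analyzed in the previous subsection. Let $u\in C^\infty(\overline\Omega)$ be a trial function with $\|u\|_{\partial\Omega}=1$. First I would introduce a partition of unity $\chi_1^2+\chi_2^2=1$ with $\chi_1$ supported in a tubular neighborhood $\{\dist(x,\partial\Omega)<2b^{-\rho}\}$ of $\partial\Omega$ and $\chi_2$ supported away from it, where $\rho\in(1/4,1/2)$. Using Proposition~\ref{prop:identity} and the \emph{bulk} lower bound $\|(-i\nabla-bA)u\|_\Omega^2\geq \Theta_1 b\|u\|_\Omega^2$ (valid by \cite{HeMo} since $B$ does not vanish on $\overline\Omega$), one absorbs the $\chi_2$-piece and the localization errors into a small multiple $\epsilon b$ of $\|u\|_\Omega^2$, leaving
\[\|(-i\nabla-bA)u\|_\Omega^2\ \geq\ (1-\epsilon)\|(-i\nabla-bA)\chi_1 u\|_\Omega^2 + \tfrac{\epsilon\Theta_1 b}{2}\|u\|_\Omega^2,\]
with $\epsilon=b^{-\delta}$ for a small $\delta>0$.

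Next I would cover the boundary $\partial\Omega$ by a second partition of unity $\sum_j g_j^2=1$ with $\supp g_j$ of diameter $\mathcal O(b^{-\rho})$ around points $p_j\in\partial\Omega$, so that gradient terms produce an error of size $\mathcal O(b^{2\rho-1/2})$ relative to the leading order. Setting $v_j=g_j\chi_1 u$, on each patch I would apply Proposition~\ref{prop:gauge-gen} in its obvious $3$D extension to replace $A$ by the linearized potential $A_j^{\rm lin}(x)=\tfrac12 \vec{H}(p_j)\wedge (x-p_j)$, which generates the \emph{constant} field $\vec{H}(p_j)$. A rotation $R_j\in SO(3)$ sending $\vec{\nu}(p_j)$ to $-\mathbf{e}_1$ together with a rotation of the tangent plane aligning the tangential projection of $\vec{H}(p_j)$ with $\mathbf{e}_3$ brings $A_j^{\rm lin}$ into the form $A_{\vartheta(p_j)}=(0,0,-\cos\vartheta(p_j)\, x_1+\sin\vartheta(p_j)\, x_2)$ of \eqref{eq:1.4aa}, with $|B(p_j)|$ as magnitude. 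After rescaling $y=\sqrt{|B(p_j)|b}\,x$ near $p_j$, the quadratic form reads $|B(p_j)|b$ times the half-space form $\|\nabla_{A_{\vartheta(p_j)}}\cdot\|^2$ acting on a function supported in a neighborhood of the origin in $\overline{\R^3_+}$ (after straightening $\partial\Omega$ by a boundary diffeomorphism, whose curvature and deviation from the half-space metric generate an error $\mathcal O(b^{-\rho})$ relative to the leading order).

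Applying the variational characterization \eqref{eq:6.2} of $\lambda^{\rm DN}(\vartheta(p_j))$ on each patch yields
\[\|(-i\nabla-bA_j^{\rm lin})v_j\|_\Omega^2\ \geq\ \bigl(1-\mathcal O(b^{-\delta})\bigr)\,\lambda^{\rm DN}(\vartheta(p_j))\,|B(p_j)|^{1/2}\,b^{1/2}\,\|v_j\|_{\partial\Omega}^2.\]
Summing over $j$, using $\sum_j\|v_j\|_{\partial\Omega}^2=\|u\|_{\partial\Omega}^2=1$, and bounding each local coefficient from below by the infimum over $\partial\Omega$, one obtains
\[\|(-i\nabla-bA)u\|_\Omega^2\ \geq\ \bigl(1+o(1)\bigr)\inf_{x\in\partial\Omega}\bigl(\lambda^{\rm DN}(\vartheta(x))\,|B(x)|^{1/2}\bigr)\,b^{1/2},\]
which is the desired lower bound after dividing by $b^{1/2}$ and taking the liminf.

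The main obstacle is the continuity of $\vartheta(x)$ and of the rotation $R_j$ aligning the frozen field with the half-space model: one must check that $\vartheta(\cdot)\mapsto \lambda^{\rm DN}(\vartheta(\cdot))$ varies continuously on $\partial\Omega$ (so that picking the representative $\vartheta(p_j)$ on each patch indeed approximates the infimum of the continuous boundary integrand). Continuity of $\lambda^{\rm DN}(\vartheta)$ on $[0,\pi/2]$ follows from Lemmas~\ref{tau0}--\ref{lemma6.4} and standard perturbation theory (the existence of the minimizer $\phi_\vartheta$ in $H^1_{\tilde A}$ for $\vartheta\in(0,\pi/2)$ and the explicit values at the endpoints, $\hat\alpha$ and $1$), so the infimum is attained and the localization argument above is justified.
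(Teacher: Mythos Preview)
Your proposal is correct and follows essentially the same route as the paper: localize near the boundary via a partition of unity, absorb localization errors using a global Neumann-type lower bound of order $b$, freeze the field on small boundary patches via a $3$D gauge lemma, and compare with the half-space model of Subsection~6.2. The paper's own proof is simply a pointer to the $2$D argument of Proposition~\ref{prop:lb} with the following substitutions: for the bulk lower bound in Step~1 one should invoke the $3$D Lu--Pan result \cite{LuPa} (or \cite{HeMo3}) rather than \cite{HeMo}, which treats the planar case; and the $3$D analogue of Proposition~\ref{prop:gauge-gen} is Lemma~5.4 of \cite{LuPa}. Your final paragraph on continuity of $\vartheta\mapsto\lambda^{\rm DN}(\vartheta)$ is not actually needed for the \emph{lower} bound: since each $p_j\in\partial\Omega$, one has trivially $\lambda^{\rm DN}(\vartheta(p_j))|B(p_j)|^{1/2}\geq \inf_{x\in\partial\Omega}\bigl(\lambda^{\rm DN}(\vartheta(x))|B(x)|^{1/2}\bigr)$, so no approximation of the boundary infimum is required.
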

\begin{proof}
We follow the proof of Proposition \ref{conj4} given in the case of dimension $2$. For step 1, we have to replace the $2D$ lower bound of the Neumann problem by the $(3D)$-statement proven by Lu-Pan \cite{LuPa}
\begin{equation}
 \liminf_{b\rightarrow +\infty} b^{-1} \lambda^{\rm Ne}(bA,\Omega) \geq \min \Big(\inf_{x\in \overline{\Omega} }   |B(x)|, \inf_{x\in \partial \Omega} \sigma(\vartheta(x)) |B(x)|\Big)\,,
\end{equation}
or a more accurate version with remainder.\\
For step 2,  the equivalent of Proposition \ref{prop:gauge-gen} is given  in Lemma 5.4 from \cite{LuPa}.\\
Finally, we can implement the constant magnetic field results obtained in the previous subsection.
\end{proof}
\subsection{Upper bounds in general domains}
As observed in \cite{Ray2} the proof for the Neumann problem is only sketched in \cite{LuPa0}. On the other hand, we only state the following version, with the notation of Proposition \ref{prop6.8}
\begin{proposition}
	Let $\Omega$ be a regular bounded domain in $\mathbb R^3$, $A$ be a magnetic potential with non vanishing $C^\infty$ magnetic field in $\overline{\Omega}$, then the ground state energy of the D-to-N map $\Lambda^{DN}_{bA}$ satisfies
 \begin{equation}\label{eq:7.2}
 \limsup_{b\rightarrow +\infty} b^{-1/2} \; \lambda^{\rm DN}(b A,\Omega) \leq   \inf_{x\in \partial \Omega} \Big(\lambda^{DN}(\vartheta(x))|B(x)|^{\frac 12}\Big)\,.
 \end{equation}
 \end{proposition}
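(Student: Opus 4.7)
The plan is to construct, for each boundary point $x_0\in\partial\Omega$, a test function supported near $x_0$ whose Rayleigh quotient in the variational formulation \eqref{eq:def-var-DN} of $\lambda^{\rm DN}(bA,\Omega)$ has the value $|B(x_0)|^{1/2}\lambda^{\rm DN}(\vartheta(x_0))\,b^{1/2}+o(b^{1/2})$, and then take the infimum over $x_0\in\partial\Omega$. This mirrors the 2D construction of Proposition~\ref{prop:ub-2D} but plugs in the half-space ground state analyzed in Proposition~\ref{basduspectre}.

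First, fix $\eta>0$ and choose $x_0\in\partial\Omega$ such that $|B(x_0)|^{1/2}\lambda^{\rm DN}(\vartheta(x_0))\le \inf_{x\in\partial\Omega}|B(x)|^{1/2}\lambda^{\rm DN}(\vartheta(x))+\eta$. Using the $3$D analogue of Proposition~\ref{prop:gauge-gen} (established in \cite[Lemma~5.4]{LuPa}, invoked in the proof of Proposition~\ref{prop6.8}) I may, after a local gauge transformation and a rigid motion placing $x_0$ at the origin, assume that $\partial\Omega$ is tangent to $\{x_1=0\}$, that $\vec\nu(x_0)=(1,0,0)$, that $A(x)=|B(x_0)|\,A_\vartheta(x)+\mathcal O(|x|^2)$ with $A_\vartheta=(0,0,-\cos\vartheta\,x_1+\sin\vartheta\,x_2)$ and $\vartheta=\vartheta(x_0)$, and that the boundary is locally a graph $x_1=\gamma(x_2,x_3)$ with $\gamma(0,0)=0$ and $\nabla\gamma(0,0)=0$.

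Second, for $\varepsilon>0$ pick, via the variational characterization \eqref{eq:6.2}, a function $\phi_\varepsilon\in C_c^\infty(\overline{\R_+^3})$ with $\int|\phi_\varepsilon(0,x_2,x_3)|^2dx_2dx_3=1$ and
\[
\int_{\R^3_+}|\nabla_{A_\vartheta}\phi_\varepsilon|^2\,dx\le \lambda^{\rm DN}(\vartheta)+\varepsilon.
\]
Set $\beta=|B(x_0)|$ and define the rescaled, boundary-normalized, gauge-transformed test function
\[
u_b(x)=\beta^{1/2}b^{3/4}\,\chi(x)\,e^{-ib\varphi_0(x)}\,\phi_\varepsilon\!\bigl(\beta^{1/2}b^{1/2}T(x)\bigr),
\]
where $T:U\to\overline{\R_+^3}$ is a boundary-flattening diffeomorphism on a small neighbourhood $U$ of $x_0$ (sending $\partial\Omega\cap U$ to $\{x_1=0\}$ and preserving the normal derivative), $\chi$ is a smooth cutoff with $\chi\equiv 1$ on the support of $\phi_\varepsilon\circ(\beta^{1/2}b^{1/2}T)$ for $b$ large, and $\varphi_0$ is the local gauge phase making $b^{1/2}A(T^{-1}(y))-bB(x_0)A_\vartheta(y)-\nabla\varphi_0$ vanish to leading order. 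By construction the restriction of $u_b$ to $\partial\Omega$ is $L^2$-normalized up to an error that is handled by the area element Jacobian.

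Third, substitute $u_b$ into the Dirichlet quadratic form, change variables $y=\beta^{1/2}b^{1/2}T(x)$, and expand. Three sources of error appear: (i) the replacement of $A$ by $\beta A_\vartheta\circ T$, controlled by the $\mathcal O(|x|^2)$ remainder in Proposition~\ref{prop:gauge-gen}, which after rescaling is of order $\mathcal O(b^{-1})$ in the form; (ii) the curvature/Jacobian terms from $T$ and from the area element on $\partial\Omega$, each of order $\mathcal O(b^{-1/2})$; (iii) commutator terms generated by the cutoff $\chi$, which are $\mathcal O(b^{-\infty})$ since $\phi_\varepsilon$ is compactly supported. Putting everything together,
\[
\|(-i\nabla-bA)u_b\|^2_\Omega\le \beta^{1/2}b^{1/2}\bigl(\lambda^{\rm DN}(\vartheta)+\varepsilon\bigr)+o(b^{1/2}),\qquad \|u_b\|_{\partial\Omega}^2=1+o(1).
\]
Inserting this into \eqref{eq:def-var-DN} gives $\limsup_{b\to+\infty}b^{-1/2}\lambda^{\rm DN}(bA,\Omega)\le |B(x_0)|^{1/2}\lambda^{\rm DN}(\vartheta(x_0))+\varepsilon+\eta$, and the proposition follows by letting $\varepsilon,\eta\to 0$.

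The main obstacle is step two, namely handling the fact that the half-space minimizer in \eqref{eq:6.2} is not necessarily an eigenfunction (the infimum is an infimum of a Rayleigh quotient that is not a spectral problem in the usual sense, and decay of minimizing sequences in $\R_+^3$ is not automatic). One has to work with a quasi-minimizer $\phi_\varepsilon$ that can be taken compactly supported up to an arbitrarily small loss $\varepsilon$; establishing that such a compactly supported $\phi_\varepsilon$ exists requires truncating a minimizing sequence by $L^2$-cutoffs in $(x_2,x_3)$ and $x_1$, which is possible because the quadratic form in \eqref{eq:6.2} controls the full magnetic $H^1$-norm on bounded regions. Once this localization step is done, the rescaling and gauge arguments proceed as in the $2$D upper bound of Proposition~\ref{prop:ub-2D}.
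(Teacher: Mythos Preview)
Your approach is correct and actually more streamlined than the paper's. Two small remarks: first, the ``main obstacle'' you flag at the end is not an obstacle at all, since the infimum in \eqref{eq:6.2} is by definition taken over $C_0^\infty(\overline{\R_+^3})$, so an $\varepsilon$-quasi-minimizer $\phi_\varepsilon\in C_c^\infty(\overline{\R_+^3})$ exists immediately with no truncation argument needed. Second, your normalization prefactor $\beta^{1/2}b^{3/4}$ is off (the boundary is two-dimensional, so the correct factor is $(\beta b)^{1/2}$), but this is irrelevant for the Rayleigh quotient.

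The paper takes a different route: rather than working with a 3D quasi-minimizer and an $\varepsilon$-argument, it distinguishes three cases according to the value of $\vartheta(p)$ at the (approximate) minimizing boundary point. For $\vartheta(p)\in(0,\frac\pi2)$ it invokes Lemma~\ref{lemma6.4}, which produces a genuine 2D minimizer $\phi_{\vartheta(p)}\in H^1_{\tilde A}(\R_+^2)$ for $\lambda^{\rm DN}(\vartheta,0)$ (existence following from the compact embedding of the magnetic Sobolev space), and then builds the trial state $b^{\rho/2}\chi(b^\rho x_3)\chi(b^\rho x_1)\chi(b^\rho x_2)\,\phi_{\vartheta(p)}((bB(p))^{1/2}x_1,(bB(p))^{1/2}x_2)$, tensoring with a slow cutoff in the Fourier-dual direction $x_3$. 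For the endpoint $\vartheta(p)=\frac\pi2$ it writes down an explicit Gaussian, and for $\vartheta(p)=0$ it falls back on the 1D profile $f_*$ exactly as in the 2D proof. Your uniform $\varepsilon$-approach avoids this trichotomy and the separate existence lemma, at the cost of carrying the extra parameters $\varepsilon,\eta$ through to the end; the paper's approach is more constructive and in principle gives sharper remainders, since it works with actual minimizers rather than approximate ones.
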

 \begin{proof}
 We distinguish three cases depending on the value of $\vartheta(x)$ where $\inf_{x\in \partial \Omega} \Big(\lambda^{DN}(\vartheta(x))|B(x)|^{\frac 12}\Big)$ is attained.\\
 {\bf Case 1}. We assume that there exists $p \in \partial \Omega$ such that  $\vartheta (p) \in (0,\frac \pi 2)$ and 
 \begin{equation} \lambda^{\rm DN}(\vartheta(p))|B(p)|^{\frac 12}= \inf_{x\in \partial \Omega} \Big(\lambda^{\rm DN}(\vartheta(x))|B(x)|^{\frac 12}\Big)\,.
 \end{equation}
 In this case, after using Lemma 3.4 in \cite{LuPa} (which extends proposition \ref{prop:gauge-gen} to the $(3D)$-case), we can take in the new system of coordinates centered at $p$ such that $\Omega$ is locally defined by $\{x_1 >0\}$, the quasimode
 \[
 u(x_1,x_2,x_3) = b^{\frac \rho 2}\chi (b^{\rho} x_3) \chi(b^{\rho} x_1) \chi(b^{\rho} x_2) \phi_{\vartheta(p)} ((bB(p)^{1/2}x_1,(bB(p))^{1/2}x_2)\,,
 \]
 where $\phi_{\vartheta}$ is defined in Lemma \ref{lemma6.4} and $\rho\in (0,\frac 12)$.\\
 {\bf Case 2}. We assume that $\vartheta(p)=\frac{\pi}{2}$ in \eqref{eq:7.2}.\\
 We take as quasi-mode
 \[
 u(x_1,x_2,x_3) = b^{\frac \rho 2}\chi (b^{\rho} x_3) \chi(b^{\rho} x_1) \chi(b^{\rho} x_2) (bB(p))^{1/2}\exp\Bigl( - \frac 12 (bB(p)) x_2^2 \Bigr) \exp\Bigl( -(bB(p))^{1/2} x_1\Bigr)\,.
 \]
 {\bf Case 3}. We assume that $\vartheta(p)=0$ in \eqref{eq:7.2}.\\
 Notice that this is always the case when the magnetic field is constant.\\ We are essentially like in the $(2D)$ case and take
 \[
 u(x_1,x_2,x_3) = b^{\rho}\chi(b^{\rho} x_2)\chi(b^{\rho} x_3)\cdot\chi(b^{\rho}x_1)\cdot f_*\bigl((bB(p))^{1/2} x_1\bigr)\cdot e^{-i(bB(p)x_1)^{1/2}\hat\alpha x_3}.
 \]
 \end{proof}

\subsection{On the D-to-N operator relative to $\R^3_+$}  Let us denote
\[
H^1_A(\mathbb R^3_+)=\{u \in L^2(\mathbb R^3_+), D_{x_1}u \in L^2(\mathbb R^3_+),D_{x_2}u \in L^2(\mathbb R^3_+), (D_{x_3} + \cos \vartheta x_1-\sin \vartheta x_2)u  \in L^2(\mathbb R^3_+)\}\}\,.
\]
Since $H^1_A(\mathbb R^3_+)\subset H^1_{\rm loc}(\overline{\mathbb R^3_+})$, we can define the trace space
\[
\check H^{\frac 12}(\mathbb R^2)=\{u\in L^2_{loc}(\mathbb R^2)\,,\, \exists \tilde u \in H^1_A(\mathbb R^3_+)\mbox{ s.t. } \tilde u_{x_1=0}=u\}\,.
\]
To define the D-to-N operator, we can now start "formally" from the weak form given in \eqref{DtNweak} with $\Omega=\mathbb R^3_+$ and $A=A_{\vartheta}$:
\begin{equation}\label{DtNweaka}
	\left\langle \Lambda_{A} f , g \right \rangle_{\check H^{-1/2}(\partial \Omega) \times \check H^{1/2}(\partial \Omega)} = \int_\Omega  \langle (-i\nabla -A)u , (-i\nabla -A)v \rangle\ dx\,,
\end{equation}
for any $g \in \check H^{1/2}(\partial \Omega)$ and $f \in \check H^{1/2}(\partial \Omega)$ such that $u$ is the unique solution see (\ref{eq:1.4aa}) of
\[
H^{\rm Dir}(\vartheta) u= 0\,,\, u_{\partial \Omega}= f\,,
\]
and $v$ is any element of $H^1(\Omega)$ so that $v_{|\partial \Omega} = g$. \\

\begin{remark}\label{rem:3D}
It would be interesting  to verify various technical details in order  to associate a self-adjoint realization to this weak definition of the D-t-N operator in $3$D but this is not needed for the results presented in this paper.
\end{remark}

\section{ Weak magnetic field.}\label{sec:small-b}

\begin{proof}[Proof of Theorem~\ref{thm:small-b}]~\\

{\bf Step 1.}

Since $\Omega$ is simply connected,   $\lambda^{\rm DN}(bA,\Omega)=\lambda^{\rm DN}(bA_\Omega,\Omega)$.  We write $\hat\lambda=\lambda^{\rm DN}(bA_\Omega,\Omega)$.  Using $f=1$ as trial state in \eqref{eq:def-var-DN}, 
we get
\[ 0\leq \hat\lambda\leq \frac{b^2}{|\partial\Omega|}\int_\Omega|A_\Omega|^2dx. \]

{\bf Step 2.}

We introduce $E_1(\hat\lambda,b)=\mu(bA_\Omega,\Omega,\hat\lambda)$ as in \eqref{eq:int-R-ev} and we observe that $E_1(\hat\lambda,b)=0$ (see Proposition~\ref{prop:zero-R} with $\gamma=b^{-1/2}\hat\lambda$).   The second eigenvalue satisfies
\[\liminf_{b\to0^+}E_2(\hat\lambda,b)>0,\]
which follows from the min-max principle and the following lower bound
\[\||(-i\nabla-bA_\Omega)f\|_\Omega^2-\hat\lambda\|f\|_{\partial\Omega}^2\geq \Bigl(\frac12\|\nabla f\|^2_\Omega-2\hat\lambda\|f\|_{\partial\Omega}^2\Bigr)-2b^2\|A_\Omega f\|_\Omega^2.\]

{\bf Step 3.}

Let $f_{\hat\lambda}$ be the positive and normalized (in $L^2(\Omega)$) ground state of $E_1(\hat\lambda,0)$.  By \cite[Theorem~2.1]{GS}, 
\[E_1(\hat\lambda,0)=-\frac{|\partial\Omega|}{|\Omega|}\hat\lambda +o(\hat\lambda)\quad (\hat\lambda\to0).\] 
Writing
\[E_1(\hat\lambda,0)=\|\nabla f_{\hat\lambda}\|_\Omega^2-\hat\lambda\|f_{\hat\lambda}\|_{\partial\Omega}^2\geq\frac12\|\nabla f_{\hat\lambda}\|_\Omega^2+\frac12E_1(2\hat\lambda,0), \]
we deduce that
\[ \frac12\|\nabla f_{\hat\lambda}\|_\Omega^2\leq E_1(\hat\lambda,0)-\frac12E_1(2\hat\lambda,0)=o(\hat\lambda), \]
and by the Poincar\'e inequality
\[ \|f_{\hat\lambda}-\langle f_{\hat\lambda}\rangle \|_\Omega^2=o(\hat\lambda),\]
where $\langle f_{\hat\lambda}\rangle$ is the average of $f_{\hat\lambda}$ over $\Omega$.  The normalization of $f_{\hat\lambda}$ in $L^2(\Omega)$ then yields
\[ \langle f_{\hat\lambda}\rangle=\frac1{|\Omega|^{1/2}}+o(\hat\lambda).\]

{\bf Step 4.}

 Since $\mathrm{div}\,A_\Omega=0$,  it follows from \eqref{defMagOp}  that
\[
H_{bA_\Omega}f_{\hat\lambda}=E_1(\hat\lambda,0)f_{\hat\lambda}+b^2|A_\Omega|^2f_{\hat\lambda}-2i b A_\Omega\cdot \nabla f_{\hat\lambda}\,.\]
Moreover,  since $\vec\nu\cdot A_\Omega=0$, $f_{\hat\lambda}$ satisfies the boundary condition
\[ 
\partial_\nu f_{\hat\lambda}+i\vec{\nu}\cdot A_\Omega f_{\hat\lambda}=\hat\lambda f_{\hat\lambda}\mbox{ on }\partial\Omega\,.\] 
By Step 3,  we have
\[
\|2i b A_\Omega\cdot \nabla f_{\hat\lambda}\|_\Omega\leq b\|\nabla f_{\hat\lambda}\|_\Omega=o(b\hat\lambda^{1/2}).
\]
We introduce the real-valued function $g_{\hat\lambda}$ as the solution of 
\[ -\Delta g_{\hat\lambda}-C(\hat\lambda)f_{\hat\lambda}+|A_\Omega|^2f_{\hat\lambda}=0\mbox{ on }\Omega,\quad \partial_\nu g_{\hat\lambda}=\hat\lambda g_{\hat\lambda}\mbox{ on }\partial\Omega, \]
where 
\[C(\hat\lambda)=\int_{\Omega}|A_\Omega|^2 f_{\hat\lambda}^2dx=\frac1{|\Omega|}\int_{\Omega}|A_\Omega|^2dx+o(\hat\lambda).\]
With $u=f_{\hat\lambda}+b^2 g_{\hat\lambda}$,  we have
\[H_{bA_\Omega}u=\bigl(E_1(\hat\lambda,0)+b^2 C(\hat\lambda)\bigr)f_{\hat\lambda}-2i b A_\Omega\cdot \nabla f_{\hat\lambda}+b^2\bigl( b^2|A_\Omega|^2-2iA_\Omega\cdot\nabla \bigr)g_{\hat\lambda}\,. \]
By Step 1,  this yields
\[ \left\|H_{bA_\Omega}u-\Bigl(E_1(\hat\lambda,0)+ \frac{b^2}{|\Omega|}\|A_\Omega\|_\Omega^2\Bigr) u\right\|_\Omega =o(b^2)\quad (b\to0^+),\]
and by Step 2,  the spectral theorem yields 
\[E_1(\hat\lambda,b)=E_1(\hat\lambda,0)+\frac{b^2}{|\Omega|}\|A_\Omega\|_\Omega^2+o(b^2).\]

\newpage
{\bf Step 5.}

To finish the proof,  we use that $E_1(\hat\lambda,b)=0$ and that $\hat\lambda=O(b^2)$,  along with the asymptotics of $E_1(\hat\lambda,0)$ in Step~2.  
\end{proof}

\subsection*{Acknowledgments} 
A.K.  is partially supported by CUHK-SZ grant no. UDF01003322 and UF02003322.   F.N is supported  by the French National Research Project GDR Dynqua. The authors would like to thank Vladimir Lotoreichik for the helpful comments.

\appendix
\section{On the choice of gauge}\label{appendix.A}
Considering $A$ as a 1-form is helpful in passing from the Cartesian to parallel coordinates,
\[A_1dx_1+A_2dx_2=\tilde A_1dt+\tilde A_2 ds, \]
where 
\[\partial_t\tilde A_2-\partial_s\tilde A_1=-(1-tk(s))B, \]
and $B=\curl A$ is the magnetic field in Cartesian coordinates. 

Suppose that $B$ is constant.  Choosing a simply connected set $V\subset\mathbb R/L\mathbb Z$, we define a function $\varphi: \mathbb R/L\mathbb Z\times V$ such that
\[\partial_t\varphi=\tilde A_1,\quad \partial_s\varphi-B(t-\mbox{$\frac12$}t^2k(s))=\tilde A_2(s,t)\quad \mbox{ on }V.\]
Such a function is given by
\[\varphi(t,s)=\int_0^t\tilde A_1(\tau,s)d\tau+g(s), \]
where $g:\mathbb R/L\mathbb Z\to \mathbb R$ satisfies $g'(s)=\tilde A_2(0,s)$ on $V$, and  $g(s)=0$ outside a neighborhood of $V$; this last condition ensures the periodicity of $g$.

That this is relevant is apparent from
\begin{multline*}\|(-i\nabla-bA)u\|_{\Phi_0^{-1}(V)}^2\\
=\int_{V}\bigl(|(-i\partial_t-b\tilde A_1)\tilde u|^2+(1-tk(s))^{-2}|(-i\partial_s-b\tilde A_2)\tilde u|^2\bigr)(1-tk(s))dtds\\
=\int_{V}\bigl(|\partial_t(e^{-ib\varphi}\tilde u)|^2\\
+(1-tk(s))^{-2}|(-i\partial_s+bB(t-\mbox{$\frac12$}t^2k(s)))e^{-2ib\varphi}\tilde u|^2\bigr)(1-tk(s))dtds.\end{multline*}

\section{A reminder for models in half-spaces.}\label{Subsection3.2}
We refer to \cite{LuPa} and \cite{HeMo3} for the proof of the results which are recalled here 
as  presented in \cite{HeMo1}.
If $N$ is a unit vector in $\mathbb R^3$, we now  consider 
the Neumann realization in  $\Omega:=\{ x\in \mathbb R^3\;|\; x\cdot N >0\}$. After a 
rotation, we can 
assume in the proofs
 that $N=(1,0,0)$, so $\Omega$ is  $\mathbb R^3_+:=\{ x_1 >0\}$.\\
After scaling, we can assume that $h=1$ and 
$| H |  =1$. Here $\vec H$ is the magnetic vector field $\curl \vec{A}$ associated with the magnetic field $B$ considered as a $2$-form.

 After some rotation in the $(x_2,x_3)$ variables, we can assume that
the new magnetic field is $(\beta_1,\beta_2, 0)$
and  we are   reduced to the problem of analyzing~:
 $$
P(\beta_1, \beta_2):=D_{x_1}^2 + D_{x_2}^2 + (D_{x_3}  + \beta_2  x_1 - \beta_1 
x_2)^2\;,
$$
in $\{ x_1 >0\}$, where~:
$$ \beta_1^2 + \beta_2^2 =1\;.$$
We introduce~:
$$
\beta_2 = \cos \vartheta\;,\; \beta_1 = \sin \vartheta\;,
$$
and we observe that, if $N$ is the external normal to $x_1=0$,
 we have~:
\begin{equation}\label{orth}
\langle \vec{H}\,|\,  N \rangle= - \sin \vartheta\;.
\end{equation}
By partial Fourier transform, we arrive to~:
\begin{equation}
L(\vartheta, \tau) = D_{x_1}^2 + D_{x_2}^2 + (\tau + \cos \vartheta \;  x_1
 -  \sin  \vartheta \; x_2 )^2\;,
\end{equation}
in $x_1 >0$ and with Neumann condition on $x_1 =0$.
The bottom of the spectrum of $L(\vartheta, \tau)$ is given by~:
\begin{equation}\label{bott}
\sigma(\vartheta):=\inf \spe \left( L(\vartheta, D_t)\right) = \inf_{\tau} 
\left(\inf \spe( 
L(\vartheta,\tau) )\right)\,.
\end{equation}

\begin{proposition}\label{Propvartheta}.\\
The bottom of the spectrum of the Neumann realization of $H_{A,\Omega}^{\rm Ne}$ in
 \break $\Omega:=\{x\in \mathbb R^3\;|\; x\cdot N >  0\}$
 is~:
\begin{equation}
\inf \; \spe H_{A,\Omega}^{\rm Ne}= \sigma (\vartheta) \; b  \;  h\;,
\end{equation}
where $\vartheta \in [-\frac \pi 2, \frac \pi 2]$ is defined by (\ref{orth}).
\end{proposition}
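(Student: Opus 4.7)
\textbf{Plan of proof for Proposition~\ref{Propvartheta}.} The plan is to follow the reduction already sketched in the subsection, pinning down each step rigorously. First I would perform a scaling $x \mapsto \sqrt{bh}\,x$ combined with a gauge change to normalize the problem to $b=h=1$ and $|\vec H|=1$, which transforms the spectrum by the factor $bh$. So it suffices to prove $\inf \spec H_{A,\Omega}^{\rm Ne} = \sigma(\vartheta)$ under these normalizations. Next, by choosing coordinates so that $N=(1,0,0)$ (giving $\Omega = \R^3_+$) and performing a rotation in the $(x_2,x_3)$-plane, I would bring $\vec H$ to the form $(\sin\vartheta, \cos\vartheta, 0)$ with $\sin\vartheta = -\langle \vec H | N\rangle$. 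A convenient vector potential for this field is $\vec A(x) = (0, 0, \cos\vartheta\, x_1 - \sin\vartheta\, x_2)$, leading to the operator $P(\beta_1,\beta_2)$ displayed in the text.

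Second, since $\vec A$ is independent of $x_3$, the operator $P(\beta_1,\beta_2)$ commutes with $D_{x_3}$, so a direct integral decomposition via the partial Fourier transform in $x_3$ writes
\[
P(\beta_1,\beta_2) = \int^{\oplus}_{\R} L(\vartheta,\tau)\,d\tau,
\]
where $L(\vartheta,\tau)$ is the two-dimensional Neumann operator on $\{x_1>0\}$ defined in the text. The spectral theorem for direct integrals yields
\[
\inf \spec P(\beta_1,\beta_2) = \essinf_{\tau\in\R} \inf\spec L(\vartheta,\tau),
\]
and since $\tau \mapsto \inf\spec L(\vartheta,\tau)$ is continuous, the essential infimum is the infimum, giving $\sigma(\vartheta)$ as in \eqref{bott}.

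Third, to justify that the Neumann realization of $H_{A,\Omega}^{\rm Ne}$ under an \emph{arbitrary} vector potential $\vec A$ with $\curl \vec A = \vec H$ has the same bottom of spectrum as the canonical one, I would invoke gauge invariance: any two smooth vector potentials with the same curl on the simply connected domain $\R^3_+$ differ by a gradient, and the unitary multiplication by $e^{i\phi}$ intertwines the corresponding Neumann Laplacians (the Neumann boundary condition is preserved because $\partial_\nu e^{i\phi} u = i(\partial_\nu \phi)e^{i\phi}u + e^{i\phi}\partial_\nu u$ and the condition is $(\partial_\nu - i\langle \vec A,\nu\rangle)u = 0$). Thus the bottom of the spectrum depends only on $\vec H$, hence only on $\vartheta$ via \eqref{orth}.

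The main (and only) nontrivial obstacle is verifying that $\tau \mapsto \inf\spec L(\vartheta,\tau)$ is measurable and continuous, and that the direct integral decomposition above identifies $\inf\spec P$ with $\inf_\tau \inf\spec L(\vartheta,\tau)$. For $\vartheta \in (0,\pi/2)$ the family $L(\vartheta,\tau)$ is continuous in the norm resolvent sense (the minimum of the effective potential moves with $\tau$), and in fact by translation in $x_2$ one sees that $\inf\spec L(\vartheta,\tau)$ is independent of $\tau$ (compare with Lemma~\ref{tau0} in the $2$D reasoning), so the infimum is trivially attained. For $\vartheta = 0$ and $\vartheta = \pi/2$ the dependence in $\tau$ is genuine but the analysis of the corresponding de~Gennes operator $D_{x_1}^2 + (x_1 + \tau)^2$ (resp.~the shifted harmonic oscillator) is classical and gives a continuous function of $\tau$, so the infimum in \eqref{bott} is well defined and realized as a limit. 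Combining these three steps yields $\inf \spec H_{A,\Omega}^{\rm Ne} = \sigma(\vartheta)\, b\, h$, as required.
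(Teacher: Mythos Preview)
Your proposal is correct and follows precisely the approach that the paper itself sketches in the text preceding the proposition (scaling, rotation to $N=(1,0,0)$, rotation in $(x_2,x_3)$, partial Fourier transform in $x_3$, and the resulting direct integral); the paper does not give an independent proof but refers to \cite{LuPa} and \cite{HeMo3} for the details, which are exactly the ones you supply. Your observation that $\inf\spec L(\vartheta,\tau)$ is $\tau$-independent for $\vartheta\in(0,\frac\pi2)$ by translation in $x_2$ is also stated in the paper as property (5) of the list following the proposition.
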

By symmetry considerations, we observe also that~:
\begin{equation}\label{sym}
\sigma(\vartheta) = \sigma (- \vartheta) = \sigma ( \pi - \vartheta)\;.
\end{equation}
It is consequently enough to look at the restriction to $[0,\frac \pi 2]$.
\subsection{Properties of $\vartheta \mapsto 
\sigma(\vartheta)$.}\label{Subsection3.3}
Let us now list the main properties of the 
 function $\vartheta \mapsto \sigma(\vartheta)$ 
 on $[0,\frac \pi 2 ]\;$ . Most of them are established in \cite{LuPa}
 but see also \cite{HeMo3}.

\begin{enumerate}
\item $\sigma$ is continuous on $[0, \frac \pi 2]$.
\item 
\begin{equation}
\sigma(0) = \Theta_0 < 1\;.
\end{equation}
\item
 \begin{equation}
\sigma(\frac \pi 2 )=1\;.
\end{equation}
\item 
\begin{equation}\label{roughlow}
\sigma(\vartheta) \geq \Theta_0 (\cos \vartheta)^2 + (\sin \vartheta)^2\;.
\end{equation}
\item
If $\vartheta \in ]0,\frac \pi 2[$, the spectrum of $L(\vartheta,\tau)$
 is independent of $\tau$ and its   
 essential spectrum  is contained in $[1, +\infty[$.
\item 
For $\vartheta \in ]0,\frac \pi 2 [$, $\sigma(\vartheta)$
 is an isolated eigenvalue of $L(\vartheta,\tau)$, with multiplicity one.
\item 
The function $\sigma $  is strictly increasing on  
$[0, \frac \pi 2[$.
\end{enumerate}
An immediate  consequence of this analysis is
\begin{proposition} \label{Proposition3.5}.\\
When $b=| H |$ is fixed the bottom of the spectrum  of $H_{A,\Omega}^{\rm Ne}$ 
in $\Omega 
:=\{ x\cdot N >0 \}$ is minimal
 when $\vartheta=0$ that is, according to (\ref{orth}),   when 
 the magnetic field vector satisfies $\vec{H}\cdot N =0$.
\end{proposition}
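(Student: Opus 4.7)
The plan is to deduce Proposition~\ref{Proposition3.5} directly from Proposition~\ref{Propvartheta} combined with the properties of $\vartheta\mapsto\sigma(\vartheta)$ collected in Subsection~\ref{Subsection3.3}. Since $b=|H|$ and $h$ are held fixed, Proposition~\ref{Propvartheta} identifies
\[\inf\spec H_{A,\Omega}^{\rm Ne}=\sigma(\vartheta)\,b\,h,\]
so minimizing the ground state energy over all orientations of $\vec H$ with prescribed norm amounts to minimizing $\sigma(\vartheta)$ over $\vartheta\in[-\pi/2,\pi/2]$, where $\vartheta$ encodes the angle between $\vec H$ and the tangent plane via \eqref{orth}.

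Next, I would use the symmetry relation \eqref{sym}, namely $\sigma(\vartheta)=\sigma(-\vartheta)=\sigma(\pi-\vartheta)$, to restrict the analysis to $\vartheta\in[0,\pi/2]$. On this interval, the listed property 7 asserts that $\sigma$ is strictly increasing, so $\sigma(\vartheta)>\sigma(0)=\Theta_0$ for every $\vartheta\in(0,\pi/2]$. The infimum is therefore attained uniquely at $\vartheta=0$; by \eqref{orth} this is exactly the condition $\sin\vartheta=0$, i.e.\ $\langle\vec H,N\rangle=0$, meaning $\vec H$ is tangent to $\partial\Omega$.

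The substantive obstacle — already absorbed into property 7 of the list — is the strict monotonicity of $\sigma$ on $[0,\pi/2]$. This is the heart of the matter: the rough lower bound \eqref{roughlow} gives $\sigma(\vartheta)\geq \Theta_0\cos^2\vartheta+\sin^2\vartheta$, which is enough to show $\sigma(\vartheta)>\Theta_0$ for $\vartheta>0$ (since $\Theta_0<1$), thereby giving a self-contained proof of the \emph{uniqueness} claim of the proposition without appealing to strict monotonicity in its full generality. Indeed,
\[\sigma(\vartheta)-\Theta_0\geq (1-\Theta_0)\sin^2\vartheta>0\quad\text{for }\vartheta\in(0,\pi/2],\]
and the minimum at $\vartheta=0$ follows at once. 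Thus the proof reduces to citing \eqref{roughlow} together with \eqref{sym}, and Proposition~\ref{Propvartheta}, with no need to invoke the more delicate monotonicity of \cite{LuPa,HeMo3}.
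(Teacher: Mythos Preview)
Your argument is correct and matches the paper's approach: the paper simply presents Proposition~\ref{Proposition3.5} as ``an immediate consequence of this analysis,'' i.e., of Proposition~\ref{Propvartheta} together with the listed properties of $\sigma$, and your write-up spells out exactly this deduction. Your remark that the rough lower bound \eqref{roughlow} alone already yields $\sigma(\vartheta)>\Theta_0$ for $\vartheta\in(0,\pi/2]$, without invoking the full strict monotonicity of property~7, is a valid simplification.
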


\end{document}